\newcommand{\ins}[1]{\mathbf{#1}}
\newcommand{\adom}[1]{\mathsf{dom}(#1)}
\newcommand{\aadom}[1]{\mathsf{adom}(#1)}
\newcommand{\coll}[1]{\mathcal{#1}}
\renewcommand{\paragraph}[1]{\textbf{#1}}
\newcommand{\ra}{\rightarrow}
\newcommand{\dep}{\Sigma}
\newcommand{\ar}[1]{\mathsf{ar}(#1)}
\newcommand{\body}[1]{\mathsf{body}(#1)}
\newcommand{\head}[1]{\mathsf{head}(#1)}
\newcommand{\class}[1]{\mathsf{#1}}
\newcommand{\result}[1]{\mathsf{result}(#1)}
\newcommand{\paths}[1]{\mathsf{paths}(#1)}
\newcommand{\chase}{\mathsf{chase}}
\newcommand{\facts}[1]{\mathsf{facts}(#1)}
\def\qed{\hfill{\qedboxempty}      
  \ifdim\lastskip<\medskipamount \removelastskip\penalty55\medskip\fi}
\def\qedboxempty{\vbox{\hrule\hbox{\vrule\kern3pt
                 \vbox{\kern3pt\kern3pt}\kern3pt\vrule}\hrule}}
\def\qedfull{\hfill{\qedboxfull}   
  \ifdim\lastskip<\medskipamount \removelastskip\penalty55\medskip\fi}
\def\qedboxfull{\vrule height 4pt width 4pt depth 0pt}
\newcommand{\markfull}{\qedboxfull}
\newtheorem{example}{Example}
\newtheorem{theorem}{Theorem}
\newtheorem{proposition}[theorem]{Proposition}
\newtheorem{lemma}[theorem]{Lemma}
\newtheorem{claim}[theorem]{Claim}
\newtheorem{definitionAux}[theorem]{Definition}
\newenvironment{definition}{\begin{definitionAux}
	}{\end{definitionAux}}
\title{Finite Axiomatizability by Disjunctive Existential Rules}
\author{
	Marco Calautti$^1$\and
	Marco Console$^2$\and
	Andreas Pieris$^{3,4}$\\
	\affiliations
	$^1$University of Milan\\
	$^2$Sapienza University of Rome\\
	$^3$University of Cyprus\\
	$^4$University of Edinburgh\\[1mm]
	\emails 
	marco.calautti{@}unimi.it, console{@}diag.uniroma.it, apieris{@}inf.ed.ac.uk
}
\begin{document}

\maketitle

\begin{abstract}
Rule-based languages lie at the core of several areas of central importance to databases
and artificial intelligence such as deductive databases and knowledge representation and reasoning. Disjunctive existential rules (a.k.a. disjunctive tuple-generating dependencies in the database literature) form such a prominent rule-based language. The goal of this work is to pinpoint the expressive power of disjunctive existential rules in terms of insightful model-theoretic properties. More precisely, given a collection $\coll{C}$ of relational structures, we show that $\coll{C}$ is axiomatizable via a finite set $\dep$ of disjunctive existential rules (i.e., $\coll{C}$ is precisely the set of models of $\dep$) iff $\coll{C}$ enjoys certain model-theoretic properties. This is achieved by using the well-known property of criticality, a refined version of closure under direct products, and a novel property called diagrammatic compatibility that relies on the method of diagrams. 
We further establish analogous characterizations for the well-behaved classes of linear and guarded disjunctive existential rules by adopting refined versions of diagrammatic compatibility that consider the syntactic restrictions imposed by linearity and guardedness; this illustrates the robustness of diagrammatic compatibility. 
We finally exploit diagrammatic compatibility to rewrite a set of guarded disjunctive existential rules into an equivalent set that falls in the weaker class of linear disjunctive existential rules, if one exists.
\end{abstract}

\section{Introduction}\label{sec:introduction}

Model theory is the study of the interaction between sentences in some logical formalism and their models, that is, structures that satisfy the sentences. There are two directions in this interaction: from syntax to semantics and from semantics to syntax.
The first direction aims to identify properties that are satisfied by all models of sentences having common syntactic features, i.e., the syntax of a logical formalism is introduced first and then the properties of the mathematical structures that satisfy
sentences of that formalism are explored. The second direction aims to characterize sentences in terms of their model-theoretic properties, i.e., given a set of properties, the goal is to determine whether or not these properties characterize some logical formalism.
In general, establishing results in the second direction is much harder than establishing results in the first. In other words, obtaining model-theoretic characterizations of logical formalisms is a far greater challenge than identifying properties enjoyed by all models of sentences from a certain formalism.

In this work, we are concerned with the second direction of the interaction between logical sentences and their models. More precisely, we are interested in model-theoretic results of the following form. For a logical formalism $\coll{L}$, given a collection $\coll{C}$ of structures, the following are equivalent:
\begin{enumerate}
	\item There exists a finite set $\Phi$ of sentences from $\coll{L}$ such that $\coll{C}$ is precisely the set of models of $\Phi$, in which case we say that $\coll{C}$ is {\em finitely axiomatizable by $\coll{L}$}.\footnote{Note that one can also talk about the notion of axiomatizability by $\coll{L}$, where the set of sentences $\Phi$ from $\coll{L}$ might be infinite.}
	\item $\coll{C}$ enjoys certain model-theoretic properties.
\end{enumerate}
Such results are welcome as they pinpoint the absolute expressive power of a logical formalism in terms of insightful model-theoretic properties. In other words, they equip us with model-theoretic tools that allow us to prove or disprove that a set of sentences $\Phi$ from some logical formalism can be equivalently rewritten as a finite set of sentences from $\coll{L}$. Indeed, by showing that the collection of models of $\Phi$ enjoy the properties listed in item (2), then $\Phi$ is logically equivalent to a finite set $\Phi'$ of sentences from $\coll{L}$; otherwise, we can safely conclude that such $\Phi'$ does not exist.

Two classical results in model theory of the above form are the characterizations of when a collection $\coll{C}$ of structures is (finitely) axiomatizable by first-order sentences via the property of closure under isomorphisms, and the more sophisticated closure properties under ultraproducts and ultrapower factors; the definitions of those notions are not essential for our discussion. 
In particular, we know that a collection $\coll{C}$ of structures over a schema (a.k.a.~signature) $\ins{S}$ is axiomatizable by first-order sentences iff $\coll{C}$ is closed under isomorphisms, ultraproducts, and ultrapower factors.
Moreover, we know that $\coll{C}$ is finitely axiomatizable by first-order sentences iff both $\coll{C}$ and the complement of $\coll{C}$, which collects all the structures over $\ins{S}$ that are not in $\coll{C}$, are closed under isomorphisms, ultraproducts, and ultrapower factors.

\subsection{Characterizations for Rule-based Formalisms}

Interesting characterizations have been also established for rule-based formalisms, expressed in suitable fragments of first-order logic, that have been used in numerous areas of central importance to databases and knowledge representation and reasoning. A prominent such formalism, which has attracted considerable attention during the last decades, is that of existential rules (a.k.a.~tuple-generating dependencies in the database literature) that are first-order sentences
\[
\forall \bar x \forall \bar y \left(\phi(\bar x,\bar y)\ \ra\ \exists \bar z\, \psi(\bar x,\bar z)\right), 
\]
where $\phi$ and $\psi$ are conjunctions of atoms.
Existential rules have been originally introduced as a framework for database constraints~\cite{AbHV95}.
Later on, they have been deployed in the study of data exchange~\cite{DBLP:journals/tcs/FaginKMP05} and data integration \cite{DBLP:conf/pods/Lenzerini02}. More recently, they have been used for knowledge representation purposes, in fact, for modeling ontologies intended for data-intensive tasks such as ontology-based data access~\cite{DBLP:conf/lics/CaliGLMP10,DBLP:conf/rweb/MugnierT14}.

Early characterizations of axiomatizability and finite axiomatizability by full existential rules (i.e., existential rules without existentially quantified variables) have been established by Makowsky and Vardi in the 1980s~\cite{MaVa86}. Moreover, ten Cate and Kolaitis obtained analogous characterizations for source-to-target existential rules, which have been used to formalize data exchange between a source schema and a target schema~\cite{CaKo09}. More recently, Console, Kolaitis and Pieris characterized finite axiomatizability by arbitrary existential rules, as well as existential rules from central guarded-based subclasses, that is, linear, guarded and frontier-guarded existential rules~\cite{CoKP21}.

To the best of our knowledge, the above results are the only known results in the literature concerning characterizations of axiomatizability and finite axiomatizability by rule-based formalisms.
Having said that, let us stress that several preservation results can be found in the literature. When the target is a preservation result, one considers two formalisms $\coll{L}$ and $\coll{L'}$, where $\coll{L'}$ is typically a proper fragment of $\coll{L}$, and the goal is to obtain characterizations of the following form: a set $\Phi$ of sentences from $\coll{L}$ is logically equivalent to a set of sentences from $\coll{L'}$ iff the models of $\Phi$ enjoy certain model-theoretic properties. A prototypical example of such a result is the \L{os}-Tarski Theorem~\cite{model-theory-book}, which states that a set $\Phi$ of first-order sentences is equivalent to a set of universal first-order sentences iff the collection of models of $\Phi$ is closed under substructures.
Such results for rule-based formalisms have been obtained in the context of description logics by Lutz et al.~\cite{LuPW11} and of existential rules by Zhang et al.~\cite{ZhZJ20}.

\subsection{Disjunctive Existential Rules}

The extension of existential rules with the feature of disjunction in the right-hand side of the implication,  originally proposed in~\cite{DeTa03}, leads to the central rule-based formalism of disjunctive existential rules that are first-order sentences of the form
\[
\forall \bar x \forall \bar y \left(\phi(\bar x,\bar y)\ \ra\ \bigvee_{i=1}^{k} \exists \bar z_i\, \psi_i(\bar x_i,\bar z_i)\right).
\]
Disjunctive existential rules have also found numerous applications in several areas such as data exchange~\cite{FKPT08}, expressive database query languages~\cite{EiGM97}, and knowledge representation and reasoning~\cite{AFLM12}, to name a few.

Despite the characterizations concerning existential rules discussed above, similar characterizations for disjunctive existential rules have remained largely unexplored so far. The main objective of this work is to change this state of affairs by focusing on finite axiomatizability. Our results towards this end can be summarized as follows:
\begin{itemize}
	\item We establish that a collection of structures $\coll{C}$ is finitely axiomatizable by disjunctive existential rules with at most $n$ universally quantified variables, at most $m$ existentially quantified variables, and at most $\ell$ disjuncts iff $\coll{C}$ is critical, closed under repairable direct products, and diagrammatically $(n,m,\ell)$-compatible. Criticality is a standard property, which has been used in several works (see, e.g.,~\cite{CoKP21}), and states that, for each integer $\kappa > 0$, $\coll{C}$ contains a structure $I_\kappa$ with $\kappa$ domain elements, and each relation of $I_\kappa$ contains all the tuples that can be formed using those $\kappa$ elements. Closure under repairable direct products is a new property obtained by carefully refining the standard property of closure under direct products since it is easy to show that disjunctive existential rules violate closure under direct products. Finally, diagrammatic $(n,m,\ell)$-compatibility, which is actually the main innovation of our characterization, relies on the method of diagrams.
	
	\item Diagrammatic $(n,m,\ell)$-compatibility turns out to be quite flexible. It can be tailored to other classes of disjunctive existential rules, so that it gives rise to the refined notions of linear-diagrammatic and guarded-diagrammatic $(n,m,\ell)$-compatibility. By exploiting these refined properties, we obtain characterizations of finite axiomatizability by linear and guarded disjunctive existential rules.
	
	\item Finally, we study the problem of rewriting a finite set of guarded disjunctive existential rules into an equivalent set that falls in the weaker class of linear disjunctive existential rules, whenever one exists. We provide an algorithm for this non-trivial problem, which heavily exploits the notion of linear-diagrammatic compatibility, that generalizes existing rewritings for non-disjunctive guarded existential rules presented in~\cite{CoKP21,ZhZJ20}.
\end{itemize}
\section{Preliminaries}\label{sec:preliminaries}

Let $\ins{C}$ and $\ins{V}$ be disjoint countably infinite sets of constants and variables, respectively. For an integer $n > 0$, we may write $[n]$ for the set of integers $\{1,\ldots,n\}$.

\medskip

\noindent
\paragraph{Relational Structures.} 
A {\em (relational) schema} $\ins{S}$ is a finite set of relation symbols (or predicates) with positive arity; we write $\ar{R}$ for the arity of the relation symbol $R$.
A {\em (relational) structure} $I$ over a schema $\ins{S} = \{R_1,\ldots,R_n\}$, or {\em $\ins{S}$-structure}, is a tuple $(\adom{I},R_{1}^{I},\ldots,R_{n}^{I})$, where $\adom{I} \subseteq \ins{C}$ is a (finite or infinite) domain and $R_{1}^{I},\ldots,R_{n}^{I}$ are relations over $\adom{I}$, i.e., $R_{i}^{I} \subseteq \adom{I}^{\ar{R_i}}$ for each $i \in [n]$. We say that $I$ is a {\em finite structure} if $\adom{I}$ is finite.
A {\em fact} of $I$ is an expression of the form $R_i(\bar c)$, where $\bar c \in R_{i}^{I}$, and we denote by $\mathsf{facts}(I)$ the set of facts of $I$.
The {\em active domain} of a structure $I$, denoted $\aadom{I}$, is the set of elements of $\adom{I}$ that occur in at least one fact of $I$.
For an $\ins{S}$-structure $J = (\adom{J},R_{1}^{J},\ldots,R_{n}^{J})$, we write $J \subseteq I$ if $\facts{J} \subseteq \facts{I}$.
We say that $J$ is a {\em substrucure} of $I$, denoted $J \preceq I$, if $\adom{J} \subseteq \adom{I}$ and $R^{J} = R_{\mid \adom{J}}^{I}$ for each $R \in \ins{S}$ with $R_{\mid \adom{J}}^{I}$ being the {\em restriction of $R^{I}$ over $\adom{J}$}, i.e., the relation $\left\{\bar c \in R^{I} \mid \bar c \in \adom{J}^{\ar{R}}\right\}$.
Note that $J \preceq I$ implies $J \subseteq I$, but the other direction does not necessarily hold.
A {\em homomorphism} from $I$ to $J$ is a function $h : \adom{I} \ra \adom{J}$ such that, for each $i \in [n]$, $\bar c = (c_1,\ldots,c_m) \in R_{i}^{I}$ implies $h(\bar c) = (h(c_1),\ldots,h(c_m)) \in R_{i}^{J}$. We write $h : I \ra J$ for the fact that $h$ is a homomorphism from $I$ to $J$. Let $h(\facts{I})$ be the set $\{R(h(\bar c)) \mid R(\bar c) \in \facts{I}\}$.
Finally, we say that $I$ and $J$ are {\em isomorphic}, written $I \simeq J$, if there is a bijective homomorphism from $I$ to $J$ such that $h^{-1} : J \ra I$.

\medskip

\noindent
\paragraph{Extended Structures.} An {\em extended schema} is a pair $(\ins{S},\ins{F})$, where $\ins{S} = \{R_1,\ldots,R_n\}$ is a relational schema and $\ins{F} = \{f_1,\ldots,f_m\}$ is a set of $0$-ary function symbols.
A {\em structure} $I$ over $(\ins{S},\ins{F})$ is a tuple $(\adom{I},R_{1}^{I},\ldots,R_{n}^{I},f_{1}^{I},\ldots,f_{m}^{I})$, where $\adom{I} \subseteq \ins{C}$, $R_{1}^{I},\ldots,R_{n}^{I}$ are relations over $\adom{I}$, and $f_{1}^{I},\ldots,f_{m}^{I}$ are constants of $\adom{I}$.
Given a relational schema $\ins{S}$ and a finite non-empty set $C = \{c_1,\ldots,c_m\}$ of constants from $\ins{C}$, we denote by $\ins{S}[C]$ the extended schema $(\ins{S},\{f_{c_1},\ldots,f_{c_m}\})$. Furthermore, given an $\ins{S}$-structure $I = (\adom{I},R_{1}^{I},\ldots,R_{n}^{I})$, we denote by $I[C]$ the structure over $\ins{S}[C]$ defined as $(\adom{I} \cup C,R_{1}^{I},\ldots,R_{n}^{I},f_{c_1}^{I},\ldots,f_{c_m}^{I})$, where $f_{c_j}^{I} = c_j$ for each $j \in [m]$.\footnote{The additional notions defined above for relational structures (such as subset, substructure, homomorphism, and isomorphism) are not needed in the paper for structures over extended schemas.}

By abuse of notation, we may write down first-order formulas that mention a constant $c \in \ins{C}$, which is actually the $0$-ary function symbol $f_c$. For example, $\exists x (R(c,x,d) \wedge \neg (x = d))$, where $c,d \in \ins{C}$ and $x \in \ins{V}$, is essentially the sentence $\exists x (R(f_c,x,f_d) \wedge \neg (x = f_d))$ over the extended schema $(\{R\},\{f_c,f_d\})$.
Now, by abuse of terminology, given a structure $I$ over a schema $\ins{S}$ and a first-order sentence $\Phi$ that mentions relation symbols of $\ins{S}$ and constants (not necessarily from $\adom{I}$) of $\ins{C}$, we may say that $I$ satisfies $\Phi$, denoted $I \models \Phi$. With $C \subseteq \ins{C}$ being the set of constants occurring in $\Phi$, $I \models \Phi$ essentially denotes the fact that $I[C] \models \Phi$, that is, the structure $I[C]$ over the extended schema $\ins{S}[C]$ satisfies $\Phi$ under the standard first-order semantics.

\medskip

\noindent
\paragraph{Disjunctive Existential Rules.} An atom over a schema $\ins{S}$ is an expression of the form $R(\bar v)$, where $R \in \ins{S}$ and $\bar v$ is an $\ar{R}$-tuple of variables from $\ins{V}$.
A {\em disjunctive existential rule} (dexr) $\sigma$ over a schema $\ins{S}$ is a constant-free sentence
\[
\forall \bar x \forall \bar y \left(\phi(\bar x,\bar y)\ \ra\ \bigvee_{i=1}^{k} \exists \bar z_i\, \psi_i(\bar x_i,\bar z_i)\right),
\]
where $k > 0$, $\bar x, \bar y, \bar x_1,\ldots,\bar x_k,\bar z_1,\ldots,\bar z_k$ are tuples of variables of $\ins{V}$, the variables of $\bar x_i$ occur in $\bar x$ for $i \in [k]$, each variable of $\bar x$ occurs in $\bar x_i$ for some $i \in [k]$, $\phi(\bar x,\bar y)$ is a (possibly empty) conjunction of atoms over $\ins{S}$, and $\psi_i(\bar x_i,\bar z_i)$ is a non-empty conjunction of atoms over $\ins{S}$ for each $i \in [k]$.
For brevity, we write $\sigma$ as $\phi(\bar x,\bar y) \ra \bigvee_{i=1}^{k} \exists \bar z_i\, \psi_i(\bar x_i,\bar z_i)$ and use comma instead of $\wedge$ for joining atoms. 
When $\phi(\bar x,\bar y)$ is empty, $\sigma$ is essentially the sentence $\bigvee_{i=1}^{k} \exists \bar z_i\, \psi_i(\bar z_i)$.
We refer to $\phi(\bar x,\bar y)$ and $\bigvee_{i=1}^{k} \psi_i(\bar x_i,\bar z_i)$ as the {\em body} and {\em head} of $\sigma$, denoted $\body{\sigma}$ and $\head{\sigma}$, respectively.
By abuse of notation, we may treat a tuple of variables as a set of variables and a conjunction of atoms as a set of atoms.
Assuming that $\phi(\bar x,\bar y)$ is non-empty, an $\ins{S}$-structure $I$ satisfies $\sigma$ if the following holds: whenever there exists a function $h : \bar x \cup \bar y \ra \adom{I}$ such that $h(\phi(\bar x, \bar y)) \subseteq \facts{I}$ (as usual, we write $h(\phi(\bar x,\bar y))$ for the set $\{R(h(\bar v)) \mid R(\bar v) \in \phi(\bar x,\bar y)\}$), then there exists an integer $i \in [k]$ and an extension $h'$ of $h$ such that $h'(\psi_i(\bar x_i, \bar z_i)) \subseteq \facts{I}$. 
When $\phi(\bar x,\bar y)$ is empty, an $\ins{S}$-structure $I$ satisfies $\sigma$ if there exists $i \in [k]$ and a function $h : \bar z_i \ra \adom{I}$ such that $h(\psi_i(\bar z_i)) \subseteq \facts{I}$.
We write $I \models \sigma$ for the fact that $I$ satisfies $\sigma$. The $\ins{S}$-structure $I$ satisfies a set $\dep$ of dexrs over $\ins{S}$, written $I \models \dep$, in which case we say that $I$ is a {\em model} of $\dep$, if $I \models \sigma$ for each $\sigma \in \dep$

\medskip

\noindent
\paragraph{Finite Axiomatizability.} Let $\coll{C}$ be a collection of structures over a schema $\ins{S}$. We say that $\coll{C}$ is {\em finitely axiomatizable} by dexrs if there exists a finite set $\dep$ of dexrs over $\ins{S}$ such that $I \in \coll{C}$ iff $I \models \dep$, that is, $\coll{C}$ is precisely the set of models of $\dep$.
Henceforth, since dexrs cannot distinguish isomorphic structures, we implicitly assume that collections $\coll{C}$ of structures are {\em closed under isomorphisms}, i.e., if $I \in \coll{C}$ and $J$ is a structure such that $I \simeq J$, then $J \in \coll{C}$. 
\section{Model-theoretic Properties}\label{sec:properties}

We now introduce three model-theoretic properties of collections of structures that will play a crucial role in our characterizations. Fix an arbitrary schema $\ins{S} = \{R_1,\ldots,R_r\}$.

\subsection{Criticality}

An $\ins{S}$-structure $I = (\adom{I},R_{1}^{I},\ldots,R_{r}^{I})$ is {\em $\kappa$-critical}, for $\kappa>0$, if $|\adom{I}| = \kappa$ and $R_{i}^{I} = \adom{I}^{\ar{R_i}}$ for each $i \in [r]$.
We can now lift criticality to collections of structures.

\begin{definition}\label{def:criticality}
	A collection $\coll{C}$ of $\ins{S}$-structures is $\kappa$-critical, for $\kappa > 0$, if it contains a $\kappa$-critical $\ins{S}$-structure. We further say that $\coll{C}$ is {\em critical} if it is $\kappa$-critical for each $\kappa>0$. \hfill\markfull
\end{definition}

It is easy to show the following lemma:

\begin{lemma}\label{lem:dexrs-critical}
	A collection $\coll{C}$ of structures that is finitely axiomatizable by dexrs is critical.
\end{lemma}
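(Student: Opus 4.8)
The key observation is that a $\kappa$-critical structure $I_\kappa$ — where every relation $R_i^{I_\kappa}$ is the full Cartesian power $\adom{I_\kappa}^{\ar{R_i}}$ — satisfies \emph{every} disjunctive existential rule trivially, because the head is always witnessed. So the plan is: given that $\coll{C}$ is finitely axiomatizable by dexrs, fix a finite set $\dep$ of dexrs over $\ins{S}$ with $\coll{C} = \{I \mid I \models \dep\}$; then show that for each $\kappa > 0$, the $\kappa$-critical $\ins{S}$-structure $I_\kappa$ (which is unique up to isomorphism once we fix $\kappa$ constants from $\ins{C}$) is a model of $\dep$, hence $I_\kappa \in \coll{C}$, which by Definition \ref{def:criticality} means $\coll{C}$ is $\kappa$-critical; as $\kappa$ was arbitrary, $\coll{C}$ is critical.

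The heart of the argument is the claim that $I_\kappa \models \sigma$ for every dexr $\sigma$ over $\ins{S}$. Write $\sigma$ as $\phi(\bar x, \bar y) \ra \bigvee_{i=1}^k \exists \bar z_i\, \psi_i(\bar x_i, \bar z_i)$. Consider first the case where $\body{\sigma}$ is non-empty. Take any homomorphism $h : \bar x \cup \bar y \ra \adom{I_\kappa}$ with $h(\phi(\bar x,\bar y)) \subseteq \facts{I_\kappa}$ (such an $h$ exists; in fact any function into $\adom{I_\kappa}$ works since $\adom{I_\kappa}$ is non-empty and all relations are full). Pick any disjunct, say $i = 1$, and extend $h$ to $h'$ by mapping every variable in $\bar z_1$ to some fixed element of $\adom{I_\kappa}$. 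Then every atom $R(h'(\bar v))$ of $\psi_1(\bar x_1, \bar z_1)$ is a tuple over $\adom{I_\kappa}$ of the correct arity, hence lies in $R^{I_\kappa} = \adom{I_\kappa}^{\ar{R}}$; so $h'(\psi_1(\bar x_1,\bar z_1)) \subseteq \facts{I_\kappa}$, and $\sigma$ is satisfied. When $\body{\sigma}$ is empty, $\sigma$ is $\bigvee_{i=1}^k \exists \bar z_i\, \psi_i(\bar z_i)$, and the same reasoning — map all of $\bar z_1$ to a fixed element of $\adom{I_\kappa}$, all resulting atoms are full tuples — shows $I_\kappa \models \sigma$. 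Since this holds for every $\sigma \in \dep$, we get $I_\kappa \models \dep$, so $I_\kappa \in \coll{C}$.

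I do not expect any real obstacle here: the lemma is essentially a one-line observation once the definitions are unpacked, and the only mild care needed is handling the empty-body case and making sure the extension $h'$ is well-defined (which it is, since $\adom{I_\kappa}$ is non-empty, as $\kappa > 0$). One could also phrase it slightly more abstractly by noting that a $\kappa$-critical structure is a model of \emph{every} constant-free first-order sentence that is preserved under the "add all tuples" operation, but spelling out the direct verification for dexrs as above is cleaner and self-contained.
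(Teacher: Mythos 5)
Your proposal is correct and follows essentially the same argument as the paper: fix the finite set $\dep$ of dexrs axiomatizing $\coll{C}$, and observe that any $\kappa$-critical structure satisfies every dexr because all relations are full, so any extension of the body-homomorphism (mapping the existential variables to an arbitrary domain element) witnesses some head-disjunct, with the empty-body case handled analogously. No gaps; the paper's proof is the same direct verification.
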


\subsection{Closure Under Repairable Direct Products}

Consider two $\ins{S}$-structures $I = (\adom{I},R_{1}^{I},\ldots,R_{r}^{I})$ and $J = (\adom{J},R_{1}^{J},\ldots,R_{r}^{J})$.
Their {\em direct product}, denoted $I \otimes J$, is the $\ins{S}$-structure $K = (\adom{K},R_{1}^{K},\ldots,R_{r}^{K})$, where $\adom{K} = \adom{I} \times \adom{J}$ and
\begin{multline*}
R_{i}^{K}\ =\ \left\{\left((a_1,b_1),\ldots,(a_{\ar{R_i}},b_{\ar{R_i}})\right) \mid\right.\\ \left.\left(a_1,\ldots,a_{\ar{R_i}}\right) \in R_{i}^{I} \text{ and } \left(b_1,\ldots,b_{\ar{R_i}}\right) \in R_{i}^{J}\right\},
\end{multline*}
for each $i \in [r]$.
A collection of structures $\coll{C}$ is closed under direct products if, for every $I,J \in \coll{C}$, the direct product of $I$ and $J$ belongs to $\coll{C}$. 
Closure under direct products is a well-known property that has been extensively used in model theory. However, it is not appropriate towards a characterization of finite axiomatizability by dexrs.

\begin{example}\label{exa:direct-product}
	Consider the set $\dep$ consisting of the single dexr
	\[
	R(x)\ \ra\ S(x) \vee T(x)
	\]
	and the structures $I_1$ and $I_2$ with $\facts{I_1} = \{R(a),S(a)\}$ and $\facts{I_2} = \{R(a),T(a)\}$, and thus,
	$
	\facts{I_1 \otimes I_2} = \{R(aa)\}.
	$
	Clearly, $I_1 \models \dep$ and $I_2 \models \dep$, but $I_1 \otimes I_2 \not\models \dep$. \hfill\markfull
\end{example}

The above example essentially tells us that a collection of structures that is finitely axiomatizable by dexrs is not necessarily closed under direct products, and thus, we need to revisit the standard notion of direct product.
To this end, we define the function $\pi_K : \adom{K} \ra \adom{I}$ as follows: for every $(a_I,a_J) \in \adom{K}$, $\pi_K((a_I,a_J)) = a_I$. It is clear that $\pi_K$ is a homomorphism from $K$ to $I$, which we call the {\em projective homomorphism for K}.
Now, a {\em repairable direct product} of the $\ins{S}$-structures $I$ and $J$ is an $\ins{S}$-structure $L$ such that 
	(1) $I \otimes J \subseteq L$, and 
	(2) there exists an extension $h$ of $\pi_K$ that is a homomorphism from $L$ to $I$.
We now define the following novel model-theoretic property:

\begin{definition}\label{def:closure-under-products}
	A collection $\coll{C}$ of $\ins{S}$-structures is {\em closed under repairable direct products} if, for every two $\ins{S}$-structures $I,J \in  \coll{C}$, there exists an $\ins{S}$-structure in $\coll{C}$ that is a repairable direct product of $I$ and $J$. \hfill\markfull
\end{definition}

Coming back to Example~\ref{exa:direct-product}, it is easy to verify that the structure $L$ with $\facts{L} = \{R(aa),S(aa)\}$ is a repairable direct product of $I_1$ and $I_2$; indeed, $I_1 \otimes I_2 \subseteq L$ and $\pi_{I_1 \otimes I_2}$ is a homomorphism from $L$ to $I_1$. We can show the following:

\begin{lemma}\label{lem:dexrs-closed-under-rpds}
	A collection $\coll{C}$ of structures that is finitely axiomatizable by dexrs is closed under repairable direct products.
\end{lemma}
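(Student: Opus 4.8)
The plan is to take a finite set $\dep$ of dexrs over $\ins{S}$ with $\coll{C}$ being precisely its set of models, fix two structures $I, J \in \coll{C}$, and construct a repairable direct product $L$ of $I$ and $J$ that again satisfies $\dep$. The idea is to start from $K = I \otimes J$, which typically fails to be a model of $\dep$ (as Example~\ref{exa:direct-product} shows), and then repair it by a chase-like saturation process that adds facts, while keeping control over a homomorphism back to $I$. Throughout, I would work with the projective homomorphism $\pi_K : K \to I$ and maintain as an invariant that every intermediate structure admits a homomorphism to $I$ extending $\pi_K$.

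First I would observe that $I \models \dep$ and $I \otimes J \subseteq I$ via $\pi_K$ (indeed $\pi_K : K \to I$ is a homomorphism and $\adom{K} = \adom{I}\times\adom{J}$, so $\pi_K$ already is its own ``extension''). Now I repair $K$ as follows: whenever a rule $\sigma \in \dep$ with body $\phi(\bar x,\bar y)$ and head $\bigvee_i \exists \bar z_i\,\psi_i(\bar x_i,\bar z_i)$ is violated at the current structure $K'$ via some $g : \bar x\cup\bar y \to \adom{K'}$ with $g(\body{\sigma}) \subseteq \facts{K'}$ but no disjunct satisfied, I use the fact that $I \models \sigma$: the composition $h'\circ g$ (where $h'$ extends $\pi_K$ and witnesses $K' \to I$) maps $\body{\sigma}$ into $\facts{I}$, so some disjunct $i$ and some extension of $h'\circ g$ to $\bar z_i$ satisfies $\psi_i$ in $I$. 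I then add fresh domain elements as witnesses for $\bar z_i$ to $K'$, with the facts $g'(\psi_i(\bar x_i,\bar z_i))$ for the obvious extension $g'$ of $g$, and extend the homomorphism to $I$ by sending each new witness to the corresponding $I$-element chosen above. This keeps the invariant, keeps $I\otimes J$ as a sub-structure (we only add facts), and strictly ``decreases'' the set of active violations at the trigger just handled. Iterating this over all rules and all triggers (a countable, possibly transfinite but fair, process) yields in the limit a structure $L$ with $I\otimes J \subseteq L$, with a homomorphism $h : L \to I$ extending $\pi_K$ — so $L$ is a repairable direct product — and with $L \models \dep$, hence $L \in \coll{C}$.

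The main obstacle I anticipate is ensuring the saturation actually terminates, or at least converges fairly, so that $L \models \dep$ in the limit: adding witnesses for one rule may create new body-matches for others, and with existential quantification the process need not halt. The standard fix is a fair (breadth-first) chase ordering so that every trigger that ever appears is eventually treated, together with the observation that once a trigger is satisfied it stays satisfied (we never delete facts), so the limit structure has no unsatisfied triggers; this is where I would be careful to spell out the fairness argument. A secondary subtlety is the case of a dexr with empty body: if $\phi$ is empty, satisfaction of $\sigma$ in $I$ already gives us a disjunct and a homomorphism of $\psi_i$ into $I$, and we simply copy those witness facts into $K$ at the start, extending $h$ accordingly — so the empty-body case is handled uniformly by treating it as a trigger that fires once. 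Finally, one should note we never need to add new elements to $\adom{L}$ beyond what the chase produces, and $h$ restricted to $\adom{K}$ is exactly $\pi_K$, so conditions (1) and (2) of repairable direct product hold by construction.
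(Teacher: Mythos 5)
Your proposal is correct and follows essentially the same route as the paper: there, too, one repairs $I \otimes J$ by chasing it with $\dep$, obtaining a chase result $L \supseteq I \otimes J$ that models $\dep$ together with a homomorphism to $I$ extending $\pi_{I \otimes J}$ — the paper simply packages your $I$-guided, fair saturation as the disjunctive chase and cites its soundness and universality properties, whereas you inline the argument behind universality (using $I \models \sigma$ to pick the disjunct and extend the homomorphism at each step).
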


\subsection{Diagrammatic Compatibility}\label{sec:locality}

We now introduce our new property of collections of structures, which in turn relies on the notion of relative diagram of a finite structure. We first introduce relative diagrams and then define the new model-theoretic property of interest.

\medskip

\noindent\paragraph{Relative Diagrams.}
Consider a structure $I$ over a schema $\ins{S}$ and a finite structure $K \subseteq I$ with $\adom{K} = \aadom{K}$. We are interested in the so-called $(m,\ell)$-diagrams of $K$ relative to $I$ for integers $m,\ell \geq 0$, which we define below.
To this end, let $A_{K,m}$ be the set of all atomic formulas of the form $R(\bar u)$ that can be formed using predicates from $\ins{S}$, constants from $\adom{K}$, and $m$ distinct variables $y_1,\ldots,y_m$ from $\ins{V}$, i.e., $R \in \ins{S}$ and $\bar u \in (\adom{K} \cup \{y_1,\ldots,y_m\})^{\ar{R}}$.
Furthermore, let $C_{K,m}$ be the set of all conjunctions of atomic formulas from $A_{K,m}$. Observe that both $A_{K,m}$ and $C_{K,m}$ are finite sets since $\adom{K}$ is finite. Let
\[
N_{K,m}^{I}\ =\ \{\gamma(\bar y) \in C_{K,m} \mid I \not\models \exists \bar y \, \gamma(\bar y)\},
\]
which is clearly finite since $C_{K,m}$ is finite.
For a set of formulas $G \subseteq N_{K,m}^{I}$, the {\em $G$-diagram of $K$ relative to $I$}, denoted $\Delta_{K,G}^{I}$, is defined as the first-order sentence
\[
\underbrace{\bigwedge_{\alpha \in \facts{K}} \alpha}_{\Psi_1}\ \wedge\ \underbrace{\bigwedge_{\substack{c,d \in \adom{K}, \\ c \neq d}} \neg (c=d)\ \wedge\ \\ \bigwedge_{\gamma(\bar y) \in G} \neg \left(\exists \bar y \, \gamma(\bar y)\right)}_{\Psi_2}.
\]
Note that $\Psi_1$ and $\Psi_2$ might be empty (i.e., they have no conjuncts). In particular, $\Psi_1$ is empty if $K$ is empty, whereas $\Psi_2$ is empty if $I$ is $1$-critical, which means that $|\adom{K}|=1$ and $I \models \exists \bar y \, \gamma(\bar y)$ for each $\gamma(\bar y) \in C_{K,m}$, and thus, $G = \emptyset$. When $\Psi_1$ and $\Psi_2$ are empty, $\Delta_{K,G}^{I}$ is the truth value $\mathsf{true}$, i.e., a tautology.
Intuitively, $\Delta_{K,G}^{I}$ provides a witness for the fact that a dexr, whose head-disjuncts belong to $G$, is violated by the structure $I$ due to the structure $K \subseteq I$.

\begin{definition}
Consider an $\ins{S}$-structure $I$ and a finite structure $K \subseteq I$ with $\adom{K} = \aadom{K}$. 
An {\em $(m,\ell)$-diagram of $K$ relative to $I$}, for $m,\ell \geq 0$, is a $G$-diagram of $K$ relative to $I$, where $G \subseteq N_{K,m}^{I}$ and $|G| \leq \ell$. \hfill\markfull
\end{definition}

Let $\Phi_{K,G}^{I}(\bar x)$ be the formula obtained from $\Delta_{K,G}^{I}$ by replacing each constant $c \in \adom{K}$ with a new variable $x_c \in \ins{V} \setminus \{y_1,\ldots,y_m\}$. 
If $\Delta_{K,G}^{I}$ is empty, then $\Phi_{K,G}^{I}$ is the truth value $\mathsf{true}$, i.e., a tautology.
It is easy to verify the following lemma, which will be used in our proofs:

\begin{lemma}\label{lem:sat-diagram}
	It holds that $I \models \exists \bar x \, \Phi_{K,G}^{I}(\bar x)$.
\end{lemma}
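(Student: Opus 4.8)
The plan is to produce an explicit witness for the existential quantifier $\exists \bar x$: send each variable $x_c$ back to the constant $c \in \adom{K}$ from which it was created. Because $\adom{K} \subseteq \adom{I}$, this is a legitimate assignment of domain elements of $I$ to the variables $\bar x$, and since $\Phi_{K,G}^{I}(\bar x)$ is by construction obtained from $\Delta_{K,G}^{I}$ by uniformly replacing each $c \in \adom{K}$ with $x_c$, instantiating $x_c := c$ turns $\Phi_{K,G}^{I}$ back into $\Delta_{K,G}^{I}$. Thus the whole claim reduces to showing $I \models \Delta_{K,G}^{I}$, which under the paper's conventions means $I[\adom{K}] \models \Delta_{K,G}^{I}$ with $f_c^{I} = c$ for every $c \in \adom{K}$.

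From there I would check the conjuncts of $\Delta_{K,G}^{I}$ one by one. The conjunct $\Psi_1 = \bigwedge_{\alpha \in \facts{K}} \alpha$ holds because $K \subseteq I$ gives $\facts{K} \subseteq \facts{I}$, so every fact of $K$ is a fact of $I$. The disequality part of $\Psi_2$, namely $\bigwedge_{c,d \in \adom{K},\, c \neq d} \neg(c = d)$, holds because distinct constants $c,d \in \ins{C}$ are interpreted by the distinct elements $f_c^{I}=c$ and $f_d^{I}=d$. The remaining part of $\Psi_2$, namely $\bigwedge_{\gamma(\bar y) \in G} \neg \exists \bar y\, \gamma(\bar y)$, holds because $G \subseteq N_{K,m}^{I}$ and, by the very definition of $N_{K,m}^{I}$, every $\gamma(\bar y) \in N_{K,m}^{I}$ satisfies $I \not\models \exists \bar y\, \gamma(\bar y)$, i.e.\ $I \models \neg \exists \bar y\, \gamma(\bar y)$. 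Putting the three together yields $I \models \Delta_{K,G}^{I}$, hence $I \models \exists \bar x\, \Phi_{K,G}^{I}(\bar x)$.

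I would also dispatch the degenerate cases separately. If $\Delta_{K,G}^{I}$ is the tautology $\mathsf{true}$ (which, as noted in the text, happens exactly when $K$ is empty and $G = \emptyset$), the statement is trivial. If $K$ is empty but $G \neq \emptyset$, then $\bar x$ is the empty tuple and $\Phi_{K,G}^{I} = \bigwedge_{\gamma(\bar y)\in G} \neg\exists\bar y\,\gamma(\bar y)$, so the argument for the third conjunct in the previous paragraph applies unchanged.

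There is no deep obstacle here; the one thing that genuinely needs care is the bookkeeping around the convention identifying a constant $c$ with the $0$-ary function symbol $f_c$ and the switch between $\models$ over $\ins{S}$ and over $\ins{S}[\adom{K}]$. In particular, one must be sure that when atoms of the conjunctions $\gamma(\bar y) \in G$ mention constants of $\adom{K}$, replacing those constants by $x_c$ and then reinstantiating $x_c := c$ really recovers the original negated existential subformula, so that $I \not\models \exists\bar y\,\gamma(\bar y)$ transfers correctly. Once this is pinned down, the lemma is immediate from the definitions of $K \subseteq I$ and of $N_{K,m}^{I}$.
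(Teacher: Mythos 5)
Your proof is correct: the witness assignment $x_c \mapsto c$, the reduction to $I \models \Delta_{K,G}^{I}$, and the conjunct-by-conjunct check (facts via $K \subseteq I$, disequalities via distinctness of the constants, negated existentials via the definition of $N_{K,m}^{I}$) is exactly the routine verification the paper has in mind when it states the lemma as ``easy to verify'' without giving a proof. Your attention to the constant-versus-$f_c$ convention and to the degenerate (empty/tautological) cases is appropriate and does not change the substance.
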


\noindent\paragraph{The Property.} Consider a collection $\coll{C}$ of structures over a schema $\ins{S}$ and an $\ins{S}$-structure $I$. For $n,m,\ell\geq 0$, we say that $\coll{C}$ is {\em diagrammatically $(n,m,\ell)$-compatible} with $I$ if, for every $K \preceq I$ with $\adom{K} = \aadom{K}$ and $|\adom{K}| \leq n$, and every $(m,\ell)$-diagram $\Delta_{K,G}^{I}$ of $K$ relative to $I$, where $G \subseteq N_{K,m}^{I}$, there is $J \in \coll{C}$ such that $J \models \Delta_{K,G}^{I}$.
Roughly, this states that no matter how a dexr $\sigma$ with a bounded number of variables (at most $n$ universally quantified and at most $m$ existentially quantified variables) and a bounded number of disjuncts in its head (at most $\ell$ disjuncts) is violated by the structure $I$, there is always a structure $J \in \coll{C}$ that violates $\sigma$ for the same reason.
The new property of interest follows.

\begin{definition}\label{def:diagrammatic-compatibility}
	A collection $\coll{C}$ of $\ins{S}$-structures is {\em diagrammatically $(n,m,\ell)$-compatible}, for $n,m,\ell \geq 0$, if, for every $\ins{S}$-structure $I$, the following holds: $\coll{C}$ is diagrammatically $(n,m,\ell)$-compatible with $I$ implies $I \in \coll{C}$. \hfill\markfull
\end{definition}

The next lemma establishes that collections of structures that are finitely axiomatizable by dexrs are diagrammatically compatible. Actually, it shows a stronger claim since it relates the integers $n,m$ and $\ell$ that witness diagrammatic $(n,m,\ell)$-compatibility with the number of universally quantified variables, existentially quantified variables, and head-disjuncts, respectively, occurring in the dexrs.
A dexr is called {\em $(n,m,\ell)$-dexr}, for $n,m \geq 0$, with $n+m > 0$, and $\ell > 0$, if it mentions at most $n$ universally quantified variables in its body, at most $m$ existentially quantified variables in each disjunct of its head, and at most $\ell$ disjuncts in its head.
Note that we require $n+m > 0$ since, by definition, a dexr has at least one variable that is either universally or existentially quantified. Furthermore, we require $\ell > 0$ since, by definition, a dexr has at least one disjunct in its head.

\begin{lemma}\label{lem:dexr-locality}
		 For $n,m \geq 0$, with $n+m > 0$, and $\ell > 0$, every collection $\coll{C}$ of structures that is finitely axiomatizable by $(n,m,\ell)$-dexrs is diagrammatically $(n,m,\ell)$-compatible.
\end{lemma}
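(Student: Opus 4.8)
The plan is to prove the contrapositive at the level of a single structure: if $\coll{C}$ is finitely axiomatizable by $(n,m,\ell)$-dexrs, say $\coll{C}$ is exactly the set of models of a finite set $\dep$ of $(n,m,\ell)$-dexrs, then whenever $I \notin \coll{C}$ we must exhibit some $K \preceq I$ with $\adom{K} = \aadom{K}$ and $|\adom{K}| \le n$, together with an $(m,\ell)$-diagram $\Delta_{K,G}^{I}$ of $K$ relative to $I$, such that no $J \in \coll{C}$ satisfies $\Delta_{K,G}^{I}$. Establishing this suffices: it shows that diagrammatic $(n,m,\ell)$-compatibility of $\coll{C}$ with $I$ forces $I \in \coll{C}$, which is exactly Definition~\ref{def:diagrammatic-compatibility}.

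Since $I \notin \coll{C}$, there is some $\sigma \in \dep$ with $I \not\models \sigma$. Write $\sigma$ as $\phi(\bar x, \bar y) \ra \bigvee_{i=1}^{k} \exists \bar z_i\, \psi_i(\bar x_i, \bar z_i)$ with $k \le \ell$; the case where the body is empty is handled separately (then $\sigma$ is $\bigvee_i \exists \bar z_i\, \psi_i(\bar z_i)$, and $I \not\models \sigma$ means none of these disjuncts is satisfiable in $I$; take $K$ empty so $\Psi_1$ is empty, and put into $G$ the conjunctions $\psi_i$ themselves). Assuming the body is non-empty, the violation of $\sigma$ gives a homomorphism $h : \bar x \cup \bar y \ra \adom{I}$ with $h(\phi(\bar x,\bar y)) \subseteq \facts{I}$, yet for no $i \in [k]$ does $h$ extend to a homomorphism sending $\psi_i(\bar x_i, \bar z_i)$ into $\facts{I}$. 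I would let $K$ be the substructure of $I$ induced by $h(\bar x)$ — note $|h(\bar x)| \le |\bar x| \le n$, and by passing to the active-domain restriction we may assume $\adom{K} = \aadom{K}$, possibly enlarging $K$ harmlessly within $I$ (one has to be slightly careful here: $K$ must contain $h(\phi)$ but the frontier $\bar x$ has at most $n$ elements; the key point is that $\body{\sigma}$ has at most $n$ body-variables, so its image has at most $n$ elements, and we take $K$ induced by that image). Now for each $i \in [k]$ let $\gamma_i(\bar y)$ be the conjunction obtained from $\psi_i(\bar x_i, \bar z_i)$ by replacing each frontier variable in $\bar x_i$ by the constant $h(\cdot) \in \adom{K}$ it is mapped to, and renaming the at-most-$m$ existential variables in $\bar z_i$ to fresh variables among $y_1,\ldots,y_m$; then $\gamma_i \in A_{K,m}$-conjunctions, i.e.\ $\gamma_i \in C_{K,m}$, and the fact that $h$ does not extend to a homomorphism for disjunct $i$ means precisely $I \not\models \exists \bar y\, \gamma_i(\bar y)$, so $\gamma_i \in N_{K,m}^{I}$. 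Set $G = \{\gamma_1,\ldots,\gamma_k\}$, so $|G| \le k \le \ell$, and consider the $(m,\ell)$-diagram $\Delta_{K,G}^{I}$.

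It remains to check that no $J \in \coll{C}$ satisfies $\Delta_{K,G}^{I}$. Suppose $J \models \Delta_{K,G}^{I}$. From the $\Psi_1$-part, $\facts{K[\adom{K}]}$ embeds into $J$ via the evaluation of the constants, so there is a map placing $h(\phi(\bar x,\bar y))$ — which lies in $\facts{K}$ — into $\facts{J}$, and the $\bigwedge_{c \neq d} \neg(c=d)$ conjuncts of $\Psi_2$ ensure this map is injective on $\adom{K}$ and in particular behaves consistently as an assignment of the body variables of $\sigma$. Hence the body of $\sigma$ fires in $J$ under the induced assignment. But the $\bigwedge_{\gamma \in G} \neg(\exists \bar y\, \gamma(\bar y))$ conjuncts of $\Psi_2$ assert $J \not\models \exists \bar y\, \gamma_i(\bar y)$ for every $i$, which translated back says the body assignment cannot be extended to satisfy any head-disjunct $\psi_i$ of $\sigma$ in $J$; therefore $J \not\models \sigma$, contradicting $J \in \coll{C}$ and $\coll{C} = \mathrm{Mod}(\dep)$. (When the body is empty the argument is the same but shorter: $J \models \Delta_{K,G}^{I}$ directly contradicts $J \models \sigma$ since $G = \{\psi_1,\ldots,\psi_k\}$.)

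The main obstacle I anticipate is the bookkeeping around $K$ and the frontier: one must ensure simultaneously that (i) $|\adom{K}| \le n$, (ii) $\adom{K} = \aadom{K}$, (iii) $h(\phi(\bar x,\bar y)) \subseteq \facts{K}$, and (iv) every head-disjunct $\psi_i$, after substituting $h$ on the frontier and renaming $\bar z_i$, yields a legitimate conjunction in $C_{K,m}$ — i.e.\ that all surviving non-existential positions in $\psi_i$ are indeed frontier variables (guaranteed by the dexr definition, since variables of $\bar x_i$ occur in $\bar x$) mapped into $\adom{K}$ (guaranteed because $\bar x_i \subseteq \bar x$ and $h(\bar x) \subseteq \adom{K}$). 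The only genuinely delicate point is reconciling (i) and (ii): the image $h(\bar x \cup \bar y)$ can have more than $n$ elements because $\bar y$ is unbounded, but we only need $K$ to carry the \emph{frontier} elements $h(\bar x)$ plus whatever active-domain elements of $I$ are needed to witness $h(\phi)$; I would resolve this by taking $\bar x$ itself to be at most $n$ (it is, since body-variables of an $(n,m,\ell)$-dexr number at most $n$ and $\bar x \subseteq \bar x \cup \bar y$ are the body variables — in fact the definition of $(n,m,\ell)$-dexr bounds \emph{all} body variables by $n$, so $|\bar x \cup \bar y| \le n$ and the image has at most $n$ elements outright), which dissolves the issue. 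The rest is routine translation between the syntactic diagram and the semantics of $\sigma$.
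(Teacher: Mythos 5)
Your proposal is correct and follows essentially the same route as the paper's proof: take a violated $(n,m,\ell)$-dexr $\sigma$ with witnessing assignment $h$, let $K$ be the induced substructure of $I$ on the (at most $n$) elements in the image of the body, let $G$ collect the head-disjuncts with frontier variables replaced by their $h$-images, and observe that any $J$ satisfying the resulting $(m,\ell)$-diagram violates $\sigma$, so diagrammatic compatibility with $I$ forces $I \models \dep$. Your only wobble --- initially taking $K$ induced by $h(\bar x)$ rather than by the image of all body variables --- is corrected within your own parenthetical remark and in the final paragraph, so the argument as stated goes through.
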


\begin{proof}
	Let $\dep$ be a finite set of $(n,m,\ell)$-dexrs with $I \in \coll{C}$ iff $I \models \dep$. Consider a structure $I$ and assume that $\coll{C}$ is diagrammatically $(n,m,\ell)$-compatible with $I$. We proceed to show that $I \in \coll{C}$, or, equivalently, $I \models \dep$.
	Consider a dexr $\sigma \in \dep$ of the form $\phi(\bar x,\bar y) \ra \bigvee_{i=1}^{k} \exists \bar z_i\, \psi_i(\bar x_i,\bar z_i)$. We assume that $\phi(\bar x,\bar y)$ is non-empty; the case where $\phi(\bar x,\bar y)$ is empty is treated similarly. 
	Assume that there exists a function $h : \bar x \cup \bar y \ra \adom{I}$ such that $h(\phi(\bar x, \bar y)) \subseteq \facts{I}$. We need to show that there exists $i \in [k]$ such that $\lambda(\psi_i(\bar x_i, \bar z_i)) \subseteq \facts{I}$ with $\lambda$ being an extension of $h$.
	By contradiction, assume that such $i \in [k]$ does not exist. This means that, for every $i \in [k]$, $I \models \neg \exists \bar z_i \, \psi_i'(\bar z_i)$, where $\psi_i'(\bar z_i)$ is obtained from $\psi_i(\bar x_i,\bar z_i)$ by replacing each variable $x$ of $\bar x_i$ with $h(x)$.
	Let $K$ be the structure $(\adom{K},R_{1}^{K},\ldots,R_{r}^{K})$, where $\adom{K}$ is the set of constants occurring in $h(\phi(\bar x,\bar y))$, and, for each $i \in [r]$, $R_{i}^{K} = {R_{i}^{I}}_{|K}$. It is clear that $K \preceq I$ with $\adom{K} = \aadom{K}$ and $|\aadom{K}| \leq n$ since $\phi(\bar x,\bar y)$ mentions at most $n$ variables. 
	Let $G = \{\psi'_i(\bar z_i)\}_{i \in [k]}$. Since, for each $i \in [k]$, $\psi_i(\bar x_i,\bar z_i)$ mentions at most $m$ existentially quantified variables and $k \leq \ell$, it is easy to see that $G \subseteq N_{K,m}^{I}$ and $|G| \leq \ell$. Therefore, $\Delta_{K,G}^{I}$, that is, the $G$-diagram of $K$ relative to $I$ is an $(m,\ell)$-diagram of $K$ relative to $I$. Since $\coll{C}$ is diagrammatically $(n,m,\ell)$-compatible with $I$, we get that there exists $J \in \coll{C}$ that satisfies $\Delta_{K,G}^{I}$. The latter implies that $h(\phi(\bar x,\bar y)) \subseteq \facts{J}$, but there is no extension $\lambda$ of $h$ such that $\lambda(\psi_i(\bar x_i,\bar z_i)) \subseteq \facts{J}$ for some $i \in [k]$. Consequently, $J \not\models \sigma$, and thus, $J \not\models \dep$. But this contradicts the fact that $J \in \coll{C}$, which is equivalent to say that $J \models \dep$, and the claim follows.
\end{proof}
\section{Finite Axiomatizability by Disjunctive Existential Rules}\label{sec:dexrs}

We proceed to characterize when a collection $\coll{C}$ of structures is finitely axiomatizable by dexrs. More precisely, the goal of this section is to establish the following result:

\begin{theorem}\label{the:dexr-refined-characterization}
	For a collection $\coll{C}$ of structures and $n,m \geq 0$, with $n+m > 0$, and $\ell > 0 $, the following are equivalent:
	\begin{enumerate}
		\item $\coll{C}$ is finitely axiomatizable by $(n,m,\ell)$-dexrs.
		\item $\coll{C}$ is critical, closed under repairable direct products, and diagrammatically $(n,m,\ell)$-compatible.
	\end{enumerate}
\end{theorem}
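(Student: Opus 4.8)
The direction $(1) \Rightarrow (2)$ is already handled by Lemmas~\ref{lem:dexrs-critical}, \ref{lem:dexrs-closed-under-rpds}, and \ref{lem:dexr-locality}, so the real work is the direction $(2) \Rightarrow (1)$: from the three model-theoretic properties we must synthesize a finite set $\dep$ of $(n,m,\ell)$-dexrs whose models are exactly $\coll{C}$. The plan is to build $\dep$ directly from the ``forbidden'' relative diagrams. Fix a finite relational schema $\ins{S}$. For every finite structure $K$ with $\adom{K} = \aadom{K}$ and $|\adom{K}| \leq n$, built over $\adom{K} \subseteq \{c_1,\dots,c_n\}$ a fixed pool of constants, and for every candidate head-disjunction $G$ consisting of at most $\ell$ conjunctions of atoms over $\adom{K}$ and at most $m$ fresh existential variables each, consider the dexr
\[
\sigma_{K,G}\ :\ \bigwedge_{\alpha \in \facts{K}} \alpha'\ \ra\ \bigvee_{\gamma(\bar y) \in G} \exists \bar y\, \gamma'(\bar y),
\]
where the primes indicate that each constant $c \in \adom{K}$ has been uniformly replaced by a distinct universally quantified variable $x_c$ (this is exactly the operation $\Delta_{K,G}^{I} \mapsto \Phi_{K,G}^{I}$ from the excerpt, read backwards). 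Since $\adom{K}$ and the variable pools are of bounded size and $\ins{S}$ is finite, there are, up to renaming, only finitely many such $\sigma_{K,G}$; take $\dep$ to be the set of all of them that are satisfied by \emph{every} structure in $\coll{C}$ (equivalently, that are not falsified by any member of $\coll{C}$). By construction $\dep$ is finite and consists of $(n,m,\ell)$-dexrs. The crux is to show $\coll{C} = \{I : I \models \dep\}$.

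Soundness ($\coll{C} \subseteq \mathrm{Mod}(\dep)$) is immediate: every $\sigma \in \dep$ was chosen precisely so that all of $\coll{C}$ satisfies it. For completeness ($\mathrm{Mod}(\dep) \subseteq \coll{C}$), take $I \models \dep$; by diagrammatic $(n,m,\ell)$-compatibility it suffices to show $\coll{C}$ is diagrammatically $(n,m,\ell)$-compatible with $I$. So fix $K \preceq I$ with $\adom{K}=\aadom{K}$, $|\adom{K}|\leq n$, and an $(m,\ell)$-diagram $\Delta_{K,G}^{I}$ with $G \subseteq N_{K,m}^{I}$; we must produce $J \in \coll{C}$ with $J \models \Delta_{K,G}^{I}$. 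Now look at the corresponding rule $\sigma_{K,G}$ as defined above. Because $I$ contains $\facts{K}$ and $G \subseteq N_{K,m}^{I}$ (so $I$ falsifies every head-disjunct of $\sigma_{K,G}$ under the identity match on $\adom{K}$), we have $I \not\models \sigma_{K,G}$; hence $\sigma_{K,G} \notin \dep$, which by the definition of $\dep$ means there is some $J_0 \in \coll{C}$ with $J_0 \not\models \sigma_{K,G}$. Unpacking: there is a homomorphic image of $K$ inside $J_0$ — say via $g : \adom{K} \to \adom{J_0}$ with $g(\facts{K}) \subseteq \facts{J_0}$ — such that no extension of $g$ satisfies any disjunct of $G$ in $J_0$. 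This is \emph{almost} a model of $\Delta_{K,G}^{I}$, but not quite: $g$ need not be injective (so the inequality conjuncts $\neg(c=d)$ in $\Psi_2$ may fail), and the ``no extension'' statement is only about extensions of $g$, not the full $\neg \exists \bar y\, \gamma'(\bar y)$.

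Repairing this is where criticality and closure under repairable direct products come in, exactly as the paper's names for them advertise. To force the domain elements of $K$ to stay distinct, take a $\kappa$-critical structure $I_\kappa \in \coll{C}$ for a suitably large $\kappa$ (here we use criticality), and form a repairable direct product $L \in \coll{C}$ of $J_0$ and $I_\kappa$; by choosing the $I_\kappa$-coordinates of the images of the $\leq n$ constants of $\adom{K}$ to be pairwise distinct, the composite map $c \mapsto (g(c), \text{chosen element})$ becomes an injection into $\adom{L}$, and since $L \supseteq J_0 \otimes I_\kappa$ and $I_\kappa$ is critical, $L$ still contains the image of $\facts{K}$. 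The falsification of the head-disjuncts is preserved by the homomorphism witnessing repairability: the extension $h$ of the projective homomorphism $\pi$ maps $L$ back to $J_0$, so any witness in $L$ for some $\exists \bar y\, \gamma'(\bar y)$ that extended our injective copy of $K$ would, after composing with $h$, give a witness in $J_0$ extending $g$ — contradicting $J_0 \not\models \sigma_{K,G}$. This handles the ``only extensions of $g$'' subtlety. Thus $L \models \Delta_{K,G}^{I}$, and $L \in \coll{C}$, giving diagrammatic compatibility of $\coll{C}$ with $I$, hence $I \in \coll{C}$.

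The main obstacle, and the step I would spend the most care on, is the last one: arranging that the single structure $L \in \coll{C}$ simultaneously (i) contains an \emph{injective} copy of $K$ with its facts, (ii) falsifies \emph{all} $\ell$ head-disjuncts of $G$ — including the ``full'' negation $\neg \exists \bar y\, \gamma'(\bar y)$ rather than merely the ``extends-$g$'' version — and (iii) still lies in $\coll{C}$. Point (ii) in its strong form is subtle because $\Delta_{K,G}^{I}$ quantifies $\bar y$ over \emph{all} of $\adom{L}$, whereas $J_0 \not\models \sigma_{K,G}$ only rules out extensions of the particular match $g$; one must check that the combination of the projective homomorphism $h : L \to J_0$ and the freedom to re-home the $K$-copy in the critical factor is enough to pull an arbitrary $L$-witness for $\gamma'$ back to a $g$-compatible $J_0$-witness. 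I expect this to go through because $h$ is the identity on the $J_0$-coordinate and $\gamma'$ mentions only constants from $\adom{K}$ (whose images are pinned down) plus existential variables (which $h$ may move freely), but getting the quantifier bookkeeping exactly right — and confirming that the repairable direct product can always be chosen to make the $K$-embedding injective, which is where the bound $|\adom{K}| \leq n$ and the availability of $\kappa$-critical structures for every $\kappa$ are essential — is the technical heart of the argument.
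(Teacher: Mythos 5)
Your argument is correct, but it is organized differently from the paper's proof, and the difference is worth spelling out. The paper does not axiomatize with dexrs directly: it first introduces the class $\class{DD}_{n,m,\ell}^{\ins{S}}$ of disjunctive dependencies whose heads may also contain \emph{equality} disjuncts, takes $\dep^{\vee}$ to be all such dds valid in $\coll{C}$, and proves $\coll{C}=\{I \mid I\models\dep^{\vee}\}$ via diagrammatic compatibility (the negated diagram $\neg\exists\bar x\,\Phi_{K,G}^{I}(\bar x)$ is \emph{exactly} such a dd, the inequality conjuncts turning into equality disjuncts, so the counterexample structure obtained is automatically injective on $\adom{K}$); it then needs two further steps, using closure under repairable direct products to show that every mixed dd in $\dep^{\vee}$ is entailed by the pure dexrs and pure deqrs in it, and criticality to rule out the deqrs. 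You instead take the axiom set to be the valid diagram-shaped $(n,m,\ell)$-dexrs (no equalities anywhere) and push the work into the compatibility argument: the counterexample $J_0\not\models\sigma_{K,G}$ may identify constants of $K$, and you restore injectivity by forming a repairable direct product of $J_0$ with a $\kappa$-critical structure and pulling head-witnesses back along the repairing homomorphism. This is a genuinely different decomposition --- morally you inline the paper's Steps 2 and 3 (your critical factor plays the role of the paper's structure $I_=$ violating the equality part, and the repairing homomorphism into $J_0$ plays the role it plays in the paper's Step 2) --- and it buys a shorter route that never mentions equality-carrying dependencies, at the price of doing the product construction separately for every diagram rather than once per sentence. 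Two small points: your worry (ii) is actually vacuous, since $\adom{K}=\aadom{K}$ forces every constant of each $\gamma\in G$ to occur in the body, so ``no extension of $g$'' already \emph{is} the full negation $\neg\exists\bar y\,\gamma$ with the constants pinned to $g$; and you should handle the corner case $G=\emptyset$ explicitly, since then $\sigma_{K,G}$ has an empty head and is not a dexr --- there the required $J\in\coll{C}$ satisfying $\Delta_{K,\emptyset}^{I}$ is supplied directly by a $\kappa$-critical structure with $\kappa\geq|\adom{K}|$ (and similarly $\kappa$ should be taken as $\max(|\adom{K}|,1)$ when $K$ is empty).
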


It is clear that the direction $(1) \Rightarrow (2)$ immediately follows from Lemmas~\ref{lem:dexrs-critical},~\ref{lem:dexrs-closed-under-rpds} and~\ref{lem:dexr-locality}. The rest of this section is devoted to discussing the proof of $(2) \Rightarrow (1)$. To this end, we need to introduce disjunctive dependencies.

\medskip
\noindent \paragraph{Disjunctive Dependencies.} A {\em disjunctive dependency} (dd) $\delta$ over a schema $\ins{S}$ is a constant-free sentence of the form
\[
\forall \bar x \, \left(\phi(\bar x)\ \ra\ \bigvee_{i=1}^{k}
\psi_i(\bar x_i)\right),
\]
where $k \geq 0$, $\bar x$ is a (possibly empty) tuple of variables of $\ins{V}$, the expression $\phi(\bar x)$ is a (possibly empty) conjunction of atoms over $\ins{S}$, and, assuming $k>0$, for each $i \in [k]$, $\bar x_i \subseteq \bar x$ and the expression $\psi_i(\bar x_i)$ is either an equality formula $y=z$ with $\bar x_i = \{y,z\}$, or a formula $\exists \bar y_i \chi_i(\bar x_i,\bar y_i)$ with $\bar y_i$ being a tuple of variables from $\ins{V} \setminus \bar x$ and $\chi_i(\bar x_i,\bar y_i)$ a (non-empty) conjunction of atoms over $\ins{S}$.
When $k=0$, there are no disjuncts in the conclusion of $\delta$. In this case, if $\phi(\bar x)$ is empty, then $\delta$ is interpreted as the truth value $\mathsf{false}$, i.e., a contradiction; otherwise, $\delta$ is essentially the sentence $\forall \bar x \left(\phi(\bar x) \ra \mathsf{false}\right) \equiv \neg(\exists \bar x \, \phi(\bar x))$.
Now, when $k > 0$ and $\phi(\bar x)$ is empty, $\delta$ is essentially the sentence $\bigvee_{i=1}^{k} \psi_i$. 
If $k > 0$ and, for each $i \in [k]$, $\psi_i(\bar x_i)$ is an equality formula, then $\delta$ is called a {\em disjunctive equality rule} (deqr).

Assuming that $k = 0$ and $\phi(\bar x)$ is non-empty, an $\ins{S}$-structure $I$ satisfies $\delta$ if there is no function $h : \bar x \ra \adom{I}$ such that $h(\phi(\bar x)) \subseteq \facts{I}$.
Assume now that $k > 0$. If $\phi(\bar x)$ is non-empty, then $\delta$ is satisfied by an $\ins{S}$-structure $I$ if, whenever there exists a function $h : \bar x \ra \adom{I}$ such that $h(\phi(\bar x)) \subseteq \facts{I}$, then there is $i \in [k]$ such that, if $\psi_i(\bar y_i)$ is $y=z$, then $h(y) = h(z)$; otherwise, if $\psi_i(\bar x_i)$ is $\exists \bar y_i \, \chi_i(\bar x_i,\bar y_i)$, then there is an extension $h'$ of $h$ such that $h'(\chi_i(\bar x_i,\bar y_i)) \subseteq \facts{I}$. 
In case $\phi(\bar x)$ is empty, then $\delta = \bigvee_{i \in [k]} \exists \bar y_i \, \chi_i(\bar y_i)$ is satisfied by $I$ if there is $i \in [k]$ and a function $h : \bar y_i \ra \adom{I}$ such that $h(\chi_i(\bar y_i)) \subseteq \facts{I}$.
We write $I \models \delta$ for the fact that $I$ satisfies $\delta$. The structure $I$ satisfies a set $\dep$ of dds, written $I \models \dep$, in which case we say that $I$ is a {\em model} of $\dep$, if $I \models \delta$ for each $\delta \in \dep$.

For a schema $\ins{S}$, let $\class{DD}_{n,m,\ell}^{\ins{S}}$, for integers $n,m,\ell \geq 0$, be the set of dds over $\ins{S}$ of the form
\[
\forall \bar x \left(\phi(\bar x)\ \ra\ \bigvee_{i=1}^{k} \psi_i(\bar x_i)\right),
\]
where $k \geq 0$, such that 
\begin{enumerate}
	\item $\bar x$ consists of at most $n$ distinct variables, 
	\item if $k \geq 1$, then, for each $i \in [k]$, if $\psi_i(\bar x_i)$ is a formula of the form $\exists \bar y_i \, \chi_i(\bar x_i,\bar y_i)$ with $\chi_i(\bar x_i,\bar y_i)$ being a non-empty conjunction of atoms, then $\bar y_i$ consists of at most $m$ distinct variables, and 
	\item if $k \geq 1$, then it holds that
	\[
	|\{i \in [k] \mid \psi_i(\bar x_i) \text{ is not an equality formula}\}| \leq \ell,
	\]
	i.e., at most $\ell$ disjuncts are a non-empty conjunction of atoms (i.e., not an equality formula).
\end{enumerate}
Note that $\class{DD}_{n,m,\ell}^{\ins{S}}$ is a finite set (up to variable renaming) since $\ins{S}$ is finite, and the number of variables and number of disjuncts in each element of $\class{DD}_{n,m,\ell}^{\ins{S}}$ is finite.

\subsection{Proving the Direction $(2) \Rightarrow (1)$}

We now have all the ingredients needed for discussing the proof of the direction $(2) \Rightarrow (1)$ of Theorem~\ref{the:dexr-refined-characterization}.
Consider a collection $\coll{C}$ of structures over a schema $\ins{S}$ that is critical, closed under repairable direct products, and diagrammatically $(n,m,\ell)$-compatible for integers $n,m \geq 0$, with $n+m >0$, and $\ell > 0$. We proceed to show that $\coll{C}$ is finitely axiomatizable by $(n,m,\ell)$-dexrs in three steps:
\begin{enumerate}
	\item We first define a finite set $\dep^\vee$ of dds from $\class{DD}_{n,m,\ell}^{\ins{S}}$ such that, for every $\ins{S}$-structure $I$, $I \in \coll{C}$ iff $I \models \dep^\vee$. This exploits the fact that $\coll{C}$ is $1$-critical (since it is critical) and diagrammatically $(n,m,\ell)$-compatible.
	
	\item We then show that there is a finite set $\dep^{\exists,=}$ of $(n,m,\ell)$-dexrs and $n$-deqrs (i.e., deqrs with at most $n$ universally quantified variables with the corner case of $0$-deqr being the value $\mathsf{true}$) over $\ins{S}$ such that $\dep^\vee \equiv \dep^{\exists,=}$; in fact, $\dep^{\exists,=}$ is the set of dexrs and deqrs occurring in $\dep^{\vee}$. This exploits the fact that $\coll{C}$ is closed under repairable direct products.

	\item We finally argue that $\dep^{\exists,=}$ consists only of dexrs, which implies that $\coll{C}$ is finitely axiomatizable by $(n,m,\ell)$-dexrs. This exploits the fact that $\coll{C}$ is critical.
\end{enumerate}
We proceed to give further details for the above three steps.

\medskip

\noindent
\paragraph{\underline{Step 1: The finite set $\dep^{\vee}$ of dds}}

\smallskip

\noindent  
Let $\dep^{\vee}$ be the set of all dds from $\class{DD}_{n,m,\ell}^{\ins{S}}$ that are satisfied by every structure of $\coll{C}$, that is,
\[
\dep^{\vee}\ =\ \left\{\delta \in \class{DD}_{n,m,\ell}^{\ins{S}} \mid \text{ for each } I \in \coll{C}, \text{ we have } I \models \delta\right\}.
\]
Clearly, the set $\dep^{\vee}$ is finite (up to variable renaming) since $\dep^{\vee} \subseteq \class{DD}_{n,m,\ell}^{\ins{S}}$. We proceed to show that $\coll{C}$ is precisely the set of $\ins{S}$-structures that satisfy $\dep^{\vee}$.

\begin{lemma}\label{lem:tgd-egd-dc-characterization-lemma-1}
	For every $\ins{S}$-structure $I$, $I \in \coll{C}$ iff $I \models \dep^{\vee}$.
\end{lemma}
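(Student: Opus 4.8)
The plan is to prove the two directions of the biconditional separately, with essentially all the content residing in the ``$\Leftarrow$'' direction. The ``$\Rightarrow$'' direction is immediate from the definition of $\dep^{\vee}$: if $I \in \coll{C}$, then every $\delta \in \dep^{\vee}$ is by construction satisfied by all members of $\coll{C}$, hence in particular by $I$, so $I \models \dep^{\vee}$; this uses none of the three model-theoretic properties.

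For the converse, assume $I \models \dep^{\vee}$. Since $\coll{C}$ is diagrammatically $(n,m,\ell)$-compatible, it suffices to show that $\coll{C}$ is diagrammatically $(n,m,\ell)$-compatible \emph{with} $I$, and then $I \in \coll{C}$ follows by definition. So fix a substructure $K \preceq I$ with $\adom{K} = \aadom{K}$ and $|\adom{K}| \leq n$, and a set $G \subseteq N_{K,m}^{I}$ with $|G| \leq \ell$; the task is to exhibit some $J \in \coll{C}$ with $J \models \Delta_{K,G}^{I}$.

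The key step is to convert the (constant-mentioning) diagram into a constant-free disjunctive dependency. Assuming $\Phi_{K,G}^{I}$ is not the tautology $\mathsf{true}$, let $\delta_{K,G}$ be the sentence $\forall \bar x \, \neg\, \Phi_{K,G}^{I}(\bar x)$; pushing the negation through the conjunction defining $\Phi_{K,G}^{I}$ rewrites it as a dd of the form $\phi(\bar x) \ra \bigvee_{c \neq d}(x_c = x_d) \vee \bigvee_{\gamma \in G} \exists \bar y\, \gamma'(\bar x, \bar y)$, where $\phi$ collects the facts of $K$ with each constant $c$ replaced by a variable $x_c$ and $\gamma'$ is $\gamma$ under the same replacement. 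The bound $|\adom{K}| \leq n$ (so $\bar x$ has at most $n$ variables), the fact that each $\gamma \in C_{K,m}$ mentions at most the $m$ variables $y_1,\dots,y_m$, and $|G| \leq \ell$ (so at most $\ell$ disjuncts are non-equality conjunctions, the equality disjuncts being uncounted) together ensure $\delta_{K,G} \in \class{DD}_{n,m,\ell}^{\ins{S}}$. Now, if $\delta_{K,G} \in \dep^{\vee}$, then $I \models \delta_{K,G}$, i.e., $I \not\models \exists \bar x\, \Phi_{K,G}^{I}(\bar x)$, contradicting Lemma~\ref{lem:sat-diagram}; hence $\delta_{K,G} \notin \dep^{\vee}$. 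By the very definition of $\dep^{\vee}$, there is some $J \in \coll{C}$ with $J \not\models \delta_{K,G}$, i.e., $J \models \exists \bar x\, \Phi_{K,G}^{I}(\bar x)$ under an assignment of the variables $x_c$ to pairwise distinct elements of $\adom{J}$ (distinctness being forced by the inequality conjuncts of $\Phi_{K,G}^{I}$ when $|\adom{K}| \geq 2$, and trivial otherwise). Composing $J$ with a bijection of $\adom{J}$ that sends this witnessing tuple exactly onto the constants of $\adom{K}$ yields, by closure of $\coll{C}$ under isomorphism, a structure $J' \in \coll{C}$, and a conjunct-by-conjunct inspection shows $J' \models \Delta_{K,G}^{I}$, the positive facts of $K$, the inequalities among the constants of $\adom{K}$, and the negated existential conjunctions over $G$ being each matched. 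This $J'$ is the desired witness.

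It remains to dispatch the degenerate cases. When $K = \emptyset$ and $G = \emptyset$, the diagram $\Delta_{K,G}^{I}$ is the tautology $\mathsf{true}$, and any element of $\coll{C}$ serves as the witness; this is where we use that $\coll{C}$ is $1$-critical, hence nonempty. The remaining boundary situations ($|\adom{K}| \leq 1$, an empty body, or an empty $G$) are covered by the argument of the previous paragraph after the obvious simplifications of $\delta_{K,G}$. With a witness $J$ found for every admissible pair $(K,G)$, $\coll{C}$ is diagrammatically $(n,m,\ell)$-compatible with $I$, and therefore $I \in \coll{C}$. I expect the main obstacle to be the bookkeeping in the last step above, namely verifying that renaming the witnessing tuple onto the named constants of $\adom{K}$ converts each conjunct of $\Phi_{K,G}^{I}$ into the corresponding conjunct of $\Delta_{K,G}^{I}$, which is exactly where closure under isomorphism and the distinctness of the witnesses are used, together with confirming that $\delta_{K,G}$ really belongs to $\class{DD}_{n,m,\ell}^{\ins{S}}$ in every corner case.
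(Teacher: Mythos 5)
Your proof is correct and follows essentially the same route as the paper's: show diagrammatic $(n,m,\ell)$-compatibility with $I$ by turning each diagram into a dd in $\class{DD}_{n,m,\ell}^{\ins{S}}$ (the paper isolates this as Claim~\ref{cla:equivalent-dd}, which you inline), use Lemma~\ref{lem:sat-diagram} to conclude it cannot lie in $\dep^{\vee}$, and then use the definition of $\dep^{\vee}$ plus closure under isomorphisms to produce the witness $J$ satisfying $\Delta_{K,G}^{I}$. The degenerate-case discussion and the renaming-onto-constants step are handled at the same level of detail as in the paper, so no gap.
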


\begin{proof}
	The $(\Rightarrow)$ direction holds by construction. We proceed with the non-trivial direction $(\Leftarrow)$. Consider an $\ins{S}$-structure $I$ such that $I \models \dep^\vee$. We are going to show that $\coll{C}$ is diagrammatically $(n,m,\ell)$-compatible with $I$, which in turn implies that $I \in \coll{C}$ since $\coll{C}$ is diagrammatically $(n,m,\ell)$-compatible. Consider an arbitrary substructure $K$ of $I$ with $\adom{K} = \aadom{K}$ and $|\adom{K}| \leq n$, and an arbitrary $(m,\ell)$-diagram $\Delta_{K,G}^{I}$ of $K$ relative to $I$, where $G \subseteq N_{K,m}^{I}$. We need to show that there exists $J \in \coll{C}$ that satisfies $\Delta_{K,G}^{I}$. To this end, we first establish the following auxiliary claim:
	
	\begin{claim}\label{cla:equivalent-dd}
		There is $\delta \in \class{DD}_{n,m,\ell}^{\ins{S}}$ with $\delta \equiv \neg \exists \bar x \, \Phi_{K,G}^{I}(\bar x)$.
	\end{claim}

Let $\delta \in \class{DD}_{n,m,\ell}^{\ins{S}}$ be the dd provided by Claim~\ref{cla:equivalent-dd} such that $\delta \equiv \neg \exists \bar x \, \Phi_{K,G}^{I}(\bar x)$. 
We claim that $\delta \not\in \dep^\vee$. By contradiction, assume that $\delta \in \dep^\vee$. This implies that $I \models \delta$, which cannot be the case since, by Lemma~\ref{lem:sat-diagram}, $I \models \exists \bar x \, \Phi_{K,G}^{I}(\bar x)$. The fact that $\delta \not\in \dep^\vee$ implies that there exists an $\ins{S}$-structure $L \in \coll{C}$ such that $L \not\models \delta$, which means that $L \models \exists \bar x \, \Phi_{K,G}^{I}(\bar x)$. Therefore, there is an $\ins{S}$-structure $J$ such that $J \simeq L$ and $J \models \Delta_{K,G}^{I}$. Since $\coll{C}$ is closed under isomorphisms, we can conclude that $J \in \coll{C}$, and the claim follows.
\end{proof}

\noindent
\paragraph{\underline{Step 2: The finite set $\dep^{\exists,=}$ of dexrs and deqrs}}

\smallskip

\noindent It is clear that $\dep^\vee \models \dep^{\exists,=}$, that is, each model of $\dep^\vee$ is a model of $\dep^{\exists,=}$, since $\dep^{\exists,=} \subseteq \dep^\vee$. It remains to show that $\dep^{\exists,=} \models \dep^\vee$. By contradiction, assume that $\dep^{\exists,=} \not\models \dep^\vee$. This implies that there exists a dd $\delta \in \dep^\vee$ such that $\dep^{\exists,=} \not\models \delta$. Clearly, $\delta$ is neither a dexr nor a deqr. Thus, $\delta$ can be written as a sentence of the form
\[
\forall \bar x \, \left(\phi(\bar x)\ \ra\ \bigvee_{i=1}^{k_1}
(z_i = w_i)\ \vee\ \bigvee_{i=1}^{k_2}
\exists \bar y_i \, \chi_i(\bar x_i,\bar y_i)\right),
\]
where $k_1,k_2 \geq 1$. Let $\delta_=$ be the deqr
\[
\forall \bar x \, \left(\phi(\bar x)\ \ra\ \bigvee_{i=1}^{k_1}
(z_i = w_i)\right).
\]
Since $\dep^{\exists,=} \not\models \delta$, we can conclude that $\dep^\vee \not\models \delta_=$; otherwise, $\delta_= \in \dep^{\exists,=}$ which cannot be the case. Therefore, we get that there exists an $\ins{S}$-structure $I_= \in \coll{C}$ such that $I_= \not\models \delta_=$. Analogously, with $\delta_\exists$ being the dexr
\[
\forall \bar x \, \left(\phi(\bar x)\ \ra\ \bigvee_{i=1}^{k_2}
\exists \bar y_i \, \chi_i(\bar x_i,\bar y_i)\right),
\]
we can show that there exists an $\ins{S}$-structure $I_\exists \in \coll{C}$ such that $I_\exists \not\models \delta_\exists$. We now proceed, by exploiting the structures $I_=$ and $I_\exists$, to show that there is an $\ins{S}$-structure $J$ such that $J \in \coll{C}$ and $J \not\models \delta$, which leads to a contradiction.
Since $\coll{C}$ is closed under repairable direct products, we get that there exists an $\ins{S}$-structure $J \in \coll{C}$ that is a repairable direct product of $I_\exists$ and $I_=$, which means that (i) $I_\exists \otimes I_= \subseteq J$, and (ii) there exists an extension $h_{I_\exists \otimes I_=}$ of $\pi_{I_\exists \otimes I_=}$ that is a homomorphism from $J$ to $I_\exists$. It remains to show that $J \not\models \delta$.

Since $I_= \not\models \delta_=$, we get that there exists a function $h_= : \bar x \ra \adom{I_=}$ such that $h_=(\phi(\bar x)) \subseteq \facts{I_=}$ and $h_=(z_i) \neq h_=(w_i)$, for each $i \in [k_1]$. Similarly, since $I_\exists \not\models \delta_\exists$, we get that there exists a function $h_\exists : \bar x \ra \adom{I_\exists}$ such that $h_\exists(\phi(\bar x)) \subseteq \facts{I_\exists}$ and there is no extension $h'_{\exists}$ of $h_\exists$ such that $h'_\exists(\chi_i(\bar x_i,\bar y_i)) \subseteq \facts{I_\exists}$, for each $i \in [k_2]$.
Let $L_{\phi}^{\exists}$ be the structure with $\adom{L_{\phi}^{\exists}} = h_\exists(\bar x)$ and $\facts{L_{\phi}^{\exists}} = h_\exists(\phi(\bar x))$. Analogously, let $L_{\phi}^{=}$ be the structure with $\adom{L_{\phi}^{=}} = h_=(\bar x)$ and $\facts{L_{\phi}^{=}} = h_=(\phi(\bar x))$. It is easy to verify that there exists a function $h : \bar x \ra \adom{L_{\phi}^{\exists} \otimes L_{\phi}^{=}}$ such that $h(\phi(\bar x)) \subseteq \facts{L_{\phi}^{\exists} \otimes L_{\phi}^{=}}$; in particular, for each $x \in \bar x$, $h(x) = (h_\exists(x),h_=(x))$. We show that $h$ witnesses the fact that $J \not\models \delta$.

Clearly, $L_{\phi}^{\exists} \otimes L_{\phi}^{=} \subseteq I_\exists \otimes I_= \subseteq J$, which implies that $h(\phi(\bar x)) \subseteq \facts{J}$. It remains to show that (i) for each $i \in [k_1]$, $h(z_i) \neq h(w_i)$, and (ii) for each $i \in [k_2]$, there is no extension $h'$ of $h$ such that $h'(\chi_i(\bar x_i,\bar y_i)) \subseteq \facts{J}$.
Concerning item (i), we observe that $h_=(z_i) \neq h_=(w_i)$ since $h_=$ witnesses the fact that $I_= \not\models \delta_=$. Therefore, by definition of $h$, we get that $h(z_i) \neq h(w_i)$, as needed.
Concerning item (2), by contradiction, assume that there exists $i \in [k_2]$ and an extension $h'$ of $h$ such that $h'(\chi_i(\bar x_i,\bar y_i)) \subseteq \facts{J}$. Let $h'' = h_{I_\exists \otimes I_=} \circ h'$. By definition of $h$, we get that $h''(x) = h_\exists(x)$ for each $x \in \bar x$. Moreover, by composition, we get that $h''(\chi_i(\bar x_,\bar y_i)) \subseteq \facts{I_\exists}$. The latter implies that $h''$ is an extension of $h_\exists$ such that $h''(\chi(\bar x_i,\bar y_i)) \subseteq \facts{I_\exists}$, which contradicts the fact that $h_\exists$ witnesses $I_\exists \not\models \delta_\exists$.

\medskip

\noindent
\paragraph{\underline{Step 3: The set $\dep^{\exists,=}$ consists only of dexrs}}

\smallskip

\noindent By contradiction, assume that a deqr $\delta$ of the form
\[
\forall \bar x \, \left(\phi(\bar x)\ \ra\ \bigvee_{i=1}^{k}
(z_i = w_i)\right),
\]
where $k \geq 1$, occurs in $\dep^{\exists,=}$. Since $\coll{C}$ is critical, there exists a $|\bar x|$-critical structure $I \in \coll{C}$, where $|\bar x|$ is the number of distinct variables in $\bar x$, and a function $h : \bar x \ra \adom{I}$ such that $h(\phi(\bar x)) \subseteq \facts{I}$ and $h(z_i) \neq h(w_i)$ for each $i \in [k]$. This implies that $I \not\models \delta$, which contradicts the fact that every dd of $\dep^{\exists,=}$ is satisfied by every structure of $\coll{C}$.

\subsection{Diagrammatic Compatibility vs Locality}

As discussed in Section~\ref{sec:introduction}, a characterization similar to Theorem~\ref{the:dexr-refined-characterization} for existential rules (i.e., the special case of disjunctive existential rules where the head consists of exactly one disjunct) has been established in~\cite{CoKP21} by exploiting a model-theoretic property called locality; for our discussion, the formal definition of locality is not crucial and is deferred to the appendix. More precisely, it was shown that, for a collection $\coll{C}$ of structures and $n,m \geq 0$, with $n+m>0$, the following are equivalent:
\begin{enumerate}
	\item $\coll{C}$ is finitely axiomatizable by existential rules with at most $n$ universally and $m$ existentially quantified variables.
	\item $\coll{C}$ is critical, closed under direct products, and $(n,m)$-local.
\end{enumerate}
The question that comes up is whether we could use the notion of locality, instead of introducing the new property of diagrammatic compatibility, for achieving the characterization for dexrs stated in Theorem~\ref{the:dexr-refined-characterization}. It turns out, as shown below, that this is not the case, which justifies the introduction of diagrammatic compatibility that can be understood as a refined notion of locality that explicitly takes into account the
number of disjuncts in the head of a rule.

\begin{proposition}\label{pro:compatibility-implies-locality}
	Consider a collection $\coll{C}$ of structures. For every $n,m \geq 0$, with $n+m>0$, and $\ell >0$, $\coll{C}$ is diagrammatically $(n,m,\ell)$-compatible implies $\coll{C}$ is $(n,m)$-local.
\end{proposition}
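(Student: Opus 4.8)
The plan is to unfold both definitions and observe that the ``full'' diagram of a substructure $K$ relative to $I$ — the one for which $(n,m)$-locality demands a witness in $\coll{C}$ — entails every $(m,\ell)$-diagram of $K$ relative to $I$; hence any witness for the former is a witness for the latter, diagrammatic $(n,m,\ell)$-compatibility is the stronger property, and the implication follows. Concretely, recall from the appendix that $(n,m)$-locality of $\coll{C}$ says: for every $\ins{S}$-structure $I$, if for every $K \preceq I$ with $\adom{K} = \aadom{K}$ and $|\adom{K}| \leq n$ there is a structure $J_K \in \coll{C}$ realizing $K$ together with all of its missing $m$-variable conjunctive queries — i.e.\ $J_K \models \Delta_{K,N_{K,m}^{I}}^{I}$, which unpacks to $\facts{K} \subseteq \facts{J_K}$ with the elements of $\adom{K}$ pairwise distinct in $J_K$ and $J_K \not\models \exists \bar y\, \gamma(\bar y)$ for every $\gamma(\bar y) \in N_{K,m}^{I}$ — then $I \in \coll{C}$. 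The point is that locality uses the whole of $N_{K,m}^{I}$, placing no bound on the number of negated conjunctive queries, whereas diagrammatic $(n,m,\ell)$-compatibility only considers $G \subseteq N_{K,m}^{I}$ with $|G| \leq \ell$.

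First I would assume that $\coll{C}$ is diagrammatically $(n,m,\ell)$-compatible and fix an arbitrary $\ins{S}$-structure $I$ satisfying the hypothesis of $(n,m)$-locality; the goal is to conclude $I \in \coll{C}$. By Definition~\ref{def:diagrammatic-compatibility} it suffices to show that $\coll{C}$ is diagrammatically $(n,m,\ell)$-compatible with $I$. So I would take any $K \preceq I$ with $\adom{K} = \aadom{K}$ and $|\adom{K}| \leq n$, and any $(m,\ell)$-diagram $\Delta_{K,G}^{I}$, where $G \subseteq N_{K,m}^{I}$ and $|G| \leq \ell$; the locality hypothesis supplies $J_K \in \coll{C}$ with $J_K \models \Delta_{K,N_{K,m}^{I}}^{I}$. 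Since $G \subseteq N_{K,m}^{I}$, the sentence $\Delta_{K,G}^{I}$ is obtained from $\Delta_{K,N_{K,m}^{I}}^{I}$ by deleting the conjuncts $\neg(\exists \bar y\, \gamma(\bar y))$ with $\gamma \notin G$, so $J_K \models \Delta_{K,G}^{I}$ as well. As $K$ and $G$ were arbitrary, $\coll{C}$ is diagrammatically $(n,m,\ell)$-compatible with $I$, whence $I \in \coll{C}$; and since $I$ was arbitrary, $\coll{C}$ is $(n,m)$-local.

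I expect the only real work to be definitional bookkeeping: one has to line up the appendix's formulation of $(n,m)$-locality with the diagram machinery of Section~\ref{sec:locality}, checking that ``realizing $K$ together with its missing $m$-variable conjunctive queries'' is indeed satisfaction of $\Delta_{K,N_{K,m}^{I}}^{I}$ (up to renaming the elements of $\adom{K}$ and using closure under isomorphisms, and modulo the corner cases in which $K$ is empty or $I$ is $1$-critical, where the diagram degenerates to $\mathsf{true}$), and that the two notions quantify over the same family of finite substructures $K$. Once this dictionary is in place, the argument is just the trivial observation that an $(m,\ell)$-diagram is a weakening — fewer negated conjuncts — of the ``full'' diagram $\Delta_{K,N_{K,m}^{I}}^{I}$, so any model of the latter is a model of the former; the bound $\ell$ plays no role whatsoever, which is exactly why diagrammatic compatibility refines locality and is, in fact, strictly stronger.
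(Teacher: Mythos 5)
Your overall skeleton is the same as the paper's (assume the compatibility property, take $I$ in which $\coll{C}$ is $(n,m)$-locally embeddable, show $\coll{C}$ is diagrammatically $(n,m,\ell)$-compatible with $I$, conclude $I \in \coll{C}$), and your observation that the bound $\ell$ plays no role is correct. But there is a genuine gap: you have silently replaced the paper's definition of $(n,m)$-locality with a convenient paraphrase. The appendix does \emph{not} define local embeddability as ``there is $J_K \in \coll{C}$ with $J_K \models \Delta_{K,N_{K,m}^{I}}^{I}$''; it says that there is $J_K \in \coll{C}$ with $K \subseteq J_K$ such that every structure $J'$ in the $m$-neighbourhood of $K$ in $J_K$ admits a map $h_{J'} : \aadom{J'} \ra \aadom{I}$ that is the identity on $\aadom{K}$ with $h_{J'}(\facts{J'}) \subseteq \facts{I}$. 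Deriving from this that $J_K$ satisfies the negated conjuncts $\neg \exists \bar y\, \gamma(\bar y)$ for every $\gamma(\bar y) \in N_{K,m}^{I}$ is precisely the mathematical content of the proposition, and it is the step you dismiss as ``definitional bookkeeping'' without carrying it out. The argument needed (and the one the paper gives) is: if $J_K \models \exists \bar y\, \gamma(\bar y)$ for some $\gamma \in G \subseteq N_{K,m}^{I}$, then since $\gamma$ mentions only constants of $\adom{K}$ and at most $m$ variables, the witnessing facts lie inside some $K'$ in the $m$-neighbourhood of $K$ in $J_K$; composing the witness with $h_{K'}$, which fixes $\aadom{K}$, yields $I \models \exists \bar y\, \gamma(\bar y)$, contradicting $\gamma \in N_{K,m}^{I}$ (i.e., Lemma~\ref{lem:sat-diagram} for the diagram of $I$). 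Your claimed reformulation of locality is in fact true, but proving it requires exactly this neighbourhood-plus-composition argument, so it cannot be taken as the starting point.

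Once that step is supplied, the remainder of your proposal is fine and even slightly simpler than the paper's write-up: the paper verifies the given $(m,\ell)$-diagram $\Delta_{K,G}^{I}$ directly, whereas you note that the locality witness satisfies the ``full'' diagram and that any $\Delta_{K,G}^{I}$ with $G \subseteq N_{K,m}^{I}$ is a weakening of it, which is an accurate and harmless reorganization (no isomorphism or renaming of $\adom{K}$ is actually needed, since $K \subseteq J_K$ holds with the same constants). Also note a small mismatch you wave at but should state: locality quantifies over all $K \preceq I$ with $|\aadom{K}| \leq n$, while compatibility restricts to those with $\adom{K} = \aadom{K}$; only the inclusion of the latter family in the former is needed, so this is indeed harmless, but it is worth saying explicitly rather than leaving as an unchecked ``dictionary'' item.
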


The above proposition implies that every collection of structures that is finitely axiomatizable by $(n,m,\ell)$-dexrs is $(n,m)$-local. This means that $(n,m)$-locality is not powerful enough to distinguish between classes of structures $\coll{C}$ and $\coll{C}'$ such that $\coll{C}$ is finitely axiomatizable by $(n,m,\ell)$-dexrs and $\coll{C}'$ is finitely axiomatizable by $(n,m,\ell')$-dexrs, for $\ell \neq \ell'$. To further illustrate this fact, let us consider again the set $\dep$ of dexrs from Example~\ref{exa:direct-product} consisting of the dexr $R(x) \ra\ S(x) \vee T(x)$ and let $\coll{C}_\dep$ be the set of models of $\dep$. By definition, $\coll{C}_\dep$ is finitely axiomatizable by $(1,0,2)$-dexrs. Therefore, $\coll{C}_\dep$ is diagrammatically $(1,0,2)$-compatible (by Theorem~\ref{the:dexr-refined-characterization}) and $(1,0)$-local (by Proposition~\ref{pro:compatibility-implies-locality}). Moreover, by Proposition~\ref{pro:compatibility-implies-locality}, we get that every collection of structures that is finitely axiomatizable by $(1,0,1)$-dexrs is also $(1,0)$-local. However, by exploiting our characterization, we can show that $\coll{C}_\dep$ is not finitely axiomatizable by $(1,0,1)$-dexrs.
\section{Finite Axiomatizability by Guarded-based Disjunctive Existential Rules }\label{sec:guarded-based-dexrs}

The goal here is to establish a result analogous to Theorem~\ref{the:dexr-refined-characterization} for the two main members of the guarded family of dexrs:
\begin{itemize}
	\item A dexr is {\em linear} if it has at most one atom in its body.
	\item A dexr is {\em guarded} if its body is empty or has an atom that mentions all the universally quantified variables.
\end{itemize}
Interestingly, to achieve the desired characterizations for the above classes of dexrs, we simply need to replace the diagrammatic compatibility property in Theorem~\ref{the:dexr-refined-characterization} with a refined version of it that takes into account the syntactic property underlying linear and guarded dexrs. We start our analysis with linear dexrs, and then proceed with guarded dexrs.

\subsection{Linear Disjunctive Existential Rules}

A structure $J$ is called {\em linear} if $|\facts{J}| \leq 1$. Consider a collection $\coll{C}$ of $\ins{S}$-structures and an $\ins{S}$-structure $I$. For integers $n,m,\ell\geq 0$, we say that $\coll{C}$ is {\em linear-diagrammatically $(n,m,\ell)$-compatible} with $I$ if, for every linear structure $K \subseteq I$ with $\adom{K} = \aadom{K}$ and $|\adom{K}| \leq n$, and every $(m,\ell)$-diagram $\Delta_{K,G}^{I}$ of $K$ relative to $I$, where $G \subseteq N_{K,m}^{I}$, there exists $J \in \coll{C}$ such that $J \models \Delta_{K,G}^{I}$.

\begin{definition}\label{def:linear-diagrammatic-compatibility}
	A collection $\coll{C}$ of $\ins{S}$-structures is {\em linear-diagrammatically $(n,m,\ell)$-compatible}, for $n,m,\ell \geq 0$, if, for every $\ins{S}$-structure $I$, $\coll{C}$ is linear-diagrammatically $(n,m,\ell)$-compatible with $I$ implies $I \in \coll{C}$. \hfill\markfull
\end{definition}

It is important to observe that linear-diagrammatic compatibility implies diagrammatic compatibility as
this will be crucial for obtaining our main characterization.

\begin{lemma}\label{lem:linear-compatibility-to-compatibility}
	Consider a collection $\coll{C}$ of $\ins{S}$-structures that is linear-diagrammatically $(n,m,\ell)$-compatible, for integers $n,m \geq 0$, with $n+m>0$, and $\ell>0$. It holds that $\coll{C}$ is diagrammatically $(n,m,\ell)$-compatible.
\end{lemma}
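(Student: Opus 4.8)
The plan is to prove the implication directly. Assume $\coll{C}$ is linear-diagrammatically $(n,m,\ell)$-compatible, fix an arbitrary $\ins{S}$-structure $I$ such that $\coll{C}$ is diagrammatically $(n,m,\ell)$-compatible with $I$, and aim to conclude $I \in \coll{C}$. By Definition~\ref{def:linear-diagrammatic-compatibility}, it suffices to show that $\coll{C}$ is linear-diagrammatically $(n,m,\ell)$-compatible with $I$; that is, for every linear structure $K \subseteq I$ with $\adom{K} = \aadom{K}$ and $|\adom{K}| \leq n$, and every $(m,\ell)$-diagram $\Delta_{K,G}^{I}$ of $K$ relative to $I$ with $G \subseteq N_{K,m}^{I}$, there is some $J \in \coll{C}$ with $J \models \Delta_{K,G}^{I}$. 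Note that the hypothesis $\adom{K} = \aadom{K}$ together with linearity means $K$ is either the empty structure (empty domain, no facts) or consists of a single fact $R(\bar c)$ whose constants exhaust $\adom{K}$.

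The subtle point -- and the step I expect to be the main obstacle -- is that such a linear $K$ need not be a $\preceq$-substructure of $I$: it carries at most one fact, whereas $K \preceq I$ would force $K$ to contain \emph{all} $I$-facts over $\adom{K}$. Hence diagrammatic compatibility with $I$ cannot be applied to $K$ as is. To bridge this, I pass to the substructure $K'$ of $I$ induced by $\adom{K}$, i.e., $\adom{K'} = \adom{K}$ and $R^{K'} = R^{I}_{\mid \adom{K}}$ for every $R \in \ins{S}$, so that $K' \preceq I$. One then checks the required properties of $K'$: since $K \subseteq I$ we have $R(\bar c) \in \facts{I}$ with $\bar c \in \adom{K}^{\ar{R}} = \adom{K'}^{\ar{R}}$, so $\facts{K} \subseteq \facts{K'}$; consequently every element of $\adom{K'} = \adom{K}$ occurs in a fact of $K'$, giving $\aadom{K'} = \adom{K'} = \adom{K}$ and $|\adom{K'}| = |\adom{K}| \leq n$. (The degenerate case $K = \emptyset$ is subsumed: then $K' = K = \emptyset \preceq I$ and everything is trivial.)

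The final step is to observe that the two diagrams match up. Because $\adom{K'} = \adom{K}$, the sets $C_{K,m}$ and $C_{K',m}$ coincide, hence $N_{K',m}^{I} = N_{K,m}^{I}$; in particular $G \subseteq N_{K',m}^{I}$ and $|G| \leq \ell$, so $\Delta_{K',G}^{I}$ is a legitimate $(m,\ell)$-diagram of $K'$ relative to $I$. Comparing $\Delta_{K',G}^{I}$ and $\Delta_{K,G}^{I}$: the inequality conjuncts $\bigwedge_{c \neq d \in \adom{\cdot}} \neg(c = d)$ and the negated-existential conjuncts $\bigwedge_{\gamma(\bar y) \in G} \neg(\exists \bar y\, \gamma(\bar y))$ are literally identical, while the positive part of $\Delta_{K',G}^{I}$ is the conjunction over $\facts{K'} \supseteq \facts{K}$, so any structure satisfying $\Delta_{K',G}^{I}$ also satisfies $\Delta_{K,G}^{I}$. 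Applying diagrammatic $(n,m,\ell)$-compatibility of $\coll{C}$ with $I$ to the substructure $K'$ and the diagram $\Delta_{K',G}^{I}$ yields some $J \in \coll{C}$ with $J \models \Delta_{K',G}^{I}$, hence $J \models \Delta_{K,G}^{I}$, as required. This establishes linear-diagrammatic compatibility with $I$, and therefore $I \in \coll{C}$. Beyond the $\subseteq$-versus-$\preceq$ reconciliation, the remaining work is just the routine verification that $\adom{K'} = \aadom{K'}$ and that the diagrams behave monotonically in the set of facts.
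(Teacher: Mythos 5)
Your proof is correct and follows essentially the same route as the paper's own one-paragraph argument: fix an $\ins{S}$-structure $I$ with which $\coll{C}$ is diagrammatically $(n,m,\ell)$-compatible, argue that $\coll{C}$ is then linear-diagrammatically $(n,m,\ell)$-compatible with $I$, and conclude $I \in \coll{C}$ from the global linear-diagrammatic compatibility hypothesis. The only difference is that the paper dismisses the transfer step as holding ``by definition,'' whereas you explicitly reconcile the $K \subseteq I$ versus $K \preceq I$ mismatch by passing to the induced substructure $K'$ and using monotonicity of the diagram in $\facts{K}$ --- a detail the paper elides but which your argument correctly supplies.
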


The characterization of interest for linear dexrs follows:

\begin{theorem}\label{the:linear-dexr-refined-characterization}
	For a collection $\coll{C}$ of structures and $n,m \geq 0$, with $n+m > 0$, and $\ell > 0 $, the following are equivalent:
	\begin{enumerate}
		\item $\coll{C}$ is finitely axiomatizable by linear $(n,m,\ell)$-dexrs.
		\item $\coll{C}$ is critical, closed under repairable direct products, and linear-diagrammatically $(n,m,\ell)$-compatible.
	\end{enumerate}
\end{theorem}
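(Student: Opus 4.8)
The plan is to prove the two directions separately, following as closely as possible the proof template of Theorem~\ref{the:dexr-refined-characterization} while systematically replacing every appeal to diagrammatic compatibility by its linear refinement and tracking linearity of the auxiliary dependencies.

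For $(1) \Rightarrow (2)$: since linear $(n,m,\ell)$-dexrs are in particular $(n,m,\ell)$-dexrs, criticality and closure under repairable direct products follow at once from Lemmas~\ref{lem:dexrs-critical} and~\ref{lem:dexrs-closed-under-rpds}. It remains to establish linear-diagrammatic $(n,m,\ell)$-compatibility by adapting the proof of Lemma~\ref{lem:dexr-locality}. Let $\dep$ be a finite set of linear $(n,m,\ell)$-dexrs with $I \in \coll{C}$ iff $I \models \dep$, assume $\coll{C}$ is linear-diagrammatically $(n,m,\ell)$-compatible with $I$, and show $I \models \dep$. Given $\sigma \in \dep$, say $\phi(\bar x,\bar y) \ra \bigvee_{i=1}^{k} \exists \bar z_i\, \psi_i(\bar x_i,\bar z_i)$ (the empty-body case being analogous), and a homomorphism $h$ of $\phi$ into $I$ with no head-disjunct satisfied, the one new point is the choice of $K$: rather than the substructure induced by the image constants (which need not be linear), we take $K$ with $\facts{K} = h(\phi(\bar x,\bar y))$ and $\adom{K} = \aadom{K}$. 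Since $\sigma$ is linear, $\phi$ has at most one atom, so $K$ has at most one fact and is linear; moreover $K \subseteq I$ — here it is essential that linear-diagrammatic compatibility is phrased with $\subseteq$ and not with $\preceq$ — and $|\adom{K}| \leq n$. Then, exactly as before, $G = \{\psi_i'(\bar z_i)\}_{i\in[k]} \subseteq N_{K,m}^{I}$ with $|G|\leq\ell$, the diagram $\Delta_{K,G}^{I}$ is an $(m,\ell)$-diagram of $K$ relative to $I$, linear-diagrammatic compatibility with $I$ yields $J \in \coll{C}$ with $J \models \Delta_{K,G}^{I}$, and $J \not\models \sigma$, contradicting $J \in \coll{C}$.

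For $(2) \Rightarrow (1)$: we mirror the three-step proof of Theorem~\ref{the:dexr-refined-characterization}, restricting throughout to linear dependencies. In Step~1, let $\dep^{\vee}_{\mathsf{lin}}$ be the (finite) set of all linear dds from $\class{DD}_{n,m,\ell}^{\ins{S}}$ satisfied by every structure of $\coll{C}$, and show $\coll{C}$ is precisely its set of models; the direction $(\Rightarrow)$ is by construction, and for $(\Leftarrow)$, given $I \models \dep^{\vee}_{\mathsf{lin}}$ we prove $\coll{C}$ is linear-diagrammatically $(n,m,\ell)$-compatible with $I$, hence $I \in \coll{C}$. This invokes a linear version of Claim~\ref{cla:equivalent-dd}: for linear $K$, the dd $\delta \equiv \neg\exists\bar x\,\Phi_{K,G}^{I}(\bar x)$ produced there has body equal to the image of $\facts{K}$, a conjunction of at most one atom, so $\delta$ is linear and still lies in $\class{DD}_{n,m,\ell}^{\ins{S}}$; the rest of the argument (using Lemma~\ref{lem:sat-diagram} to get $\delta \notin \dep^{\vee}_{\mathsf{lin}}$, extracting $L \in \coll{C}$ falsifying $\delta$, and passing to an isomorphic copy $J$ satisfying $\Delta_{K,G}^{I}$) is verbatim. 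In Step~2, let $\dep^{\exists,=}_{\mathsf{lin}}$ be the linear dexrs and deqrs occurring in $\dep^{\vee}_{\mathsf{lin}}$; the proof that $\dep^{\vee}_{\mathsf{lin}} \equiv \dep^{\exists,=}_{\mathsf{lin}}$ carries over unchanged, the only new observation being that splitting a linear dd into its equality part $\delta_=$ and its existential part $\delta_\exists$ preserves linearity, so both remain in $\class{DD}_{n,m,\ell}^{\ins{S}}$ and the closure-under-repairable-direct-products argument applies as is. In Step~3, criticality rules out deqrs exactly as before, so $\dep^{\exists,=}_{\mathsf{lin}}$ is a finite set of linear $(n,m,\ell)$-dexrs whose models are precisely $\coll{C}$. (Lemma~\ref{lem:linear-compatibility-to-compatibility} is available throughout, guaranteeing in particular that $\coll{C}$ also meets the hypotheses of Theorem~\ref{the:dexr-refined-characterization}.)

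The main obstacle is the correct handling of the auxiliary structure $K$. In $(1)\Rightarrow(2)$ one must not reuse the induced substructure of the non-linear proof, since it may carry spurious facts over $\adom{K}$ and fail to be linear; taking $K$ to consist of exactly the body facts is legitimate only because linear-diagrammatic compatibility quantifies over $K \subseteq I$ rather than $K \preceq I$. In $(2)\Rightarrow(1)$ one must check that the dd obtained from a diagram of a linear $K$ really is linear, i.e.\ that the construction underlying Claim~\ref{cla:equivalent-dd} places the (renamed) facts of $K$ into the body and all other conjuncts of the diagram into the head; once this is confirmed, the remaining steps are routine bookkeeping inherited from the non-disjunctive and non-linear cases.
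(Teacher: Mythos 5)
Your proof is correct and follows essentially the same route as the paper: the paper packages your two arguments as the Linearization Lemma (Lemma~\ref{lem:linearization}), whose direction $(1)\Rightarrow(2)$ is exactly your adaptation of Lemma~\ref{lem:dexr-locality} with $K$ taken to be the single body fact (using that linear-diagrammatic compatibility quantifies over $K \subseteq I$), and whose direction $(2)\Rightarrow(1)$ is your three-step construction over linear dds (cf.\ Claim~\ref{cla:equivalent-dd-linear}). The only organizational difference is that for $(2)\Rightarrow(1)$ the paper first invokes Lemma~\ref{lem:linear-compatibility-to-compatibility} and Theorem~\ref{the:dexr-refined-characterization} to obtain finite axiomatizability by $(n,m,\ell)$-dexrs and then applies the Linearization Lemma, whereas you run the linear-dd construction directly from the stated hypotheses, which is equally valid.
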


To establish the above characterization, we first show a technical lemma, called {\em Linearization Lemma}, which is interesting in its own right as it characterizes when a collection of structures that is finitely axiomatizable by dexrs is finitely axiomatizable by linear dexrs. To this end, the property of linear-diagrammatic compatibility plays a central role.

\begin{lemma}\label{lem:linearization}
	Consider a collection $\coll{C}$ of structures that is finitely axiomatizable by $(n,m,\ell)$-dexrs, for integers $n,m \geq 0$, with $n+m >0$, and $\ell >0$. For every integer $\ell' > 0$, the following are equivalent: 
	\begin{enumerate}
		\item $\coll{C}$ is finitely axiomatizable by linear $(n,m,\ell')$-dexrs.
		\item $\coll{C}$ is linear-diagrammatically $(n,m,\ell')$-compatible.
	\end{enumerate}
\end{lemma}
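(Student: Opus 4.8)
The plan is to prove the two implications separately. The direction $(1)\Rightarrow(2)$ is the linear counterpart of Lemma~\ref{lem:dexr-locality} and does not even use the standing hypothesis that $\coll{C}$ is finitely axiomatizable by $(n,m,\ell)$-dexrs: given a finite set $\dep$ of linear $(n,m,\ell')$-dexrs with $\coll{C}=\{I : I\models\dep\}$, and a structure $I$ with which $\coll{C}$ is linear-diagrammatically $(n,m,\ell')$-compatible, one shows $I\models\dep$. For $\sigma\in\dep$ and a homomorphism $h$ satisfying $\body{\sigma}$ in $I$, assume for contradiction that no head-disjunct of $\sigma$ holds under any extension of $h$; since $\sigma$ is linear, $h(\body{\sigma})$ is a single fact, which induces a \emph{linear} structure $K\subseteq I$ with $\adom{K}=\aadom{K}$ and $|\adom{K}|\le n$, while the at most $\ell'$ failed head-disjuncts instantiated via $h$ give $G\subseteq N_{K,m}^{I}$ with $|G|\le\ell'$. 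Thus $\Delta_{K,G}^{I}$ is an $(m,\ell')$-diagram of the linear $K$ relative to $I$, so linear-diagrammatic compatibility with $I$ yields $J\in\coll{C}$ with $J\models\Delta_{K,G}^{I}$; but then $J\not\models\sigma$, contradicting $J\in\coll{C}$. This is verbatim the proof of Lemma~\ref{lem:dexr-locality} except that a single body atom forces $K$ to be linear.

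For the direction $(2)\Rightarrow(1)$, we first use the hypothesis: since $\coll{C}$ is finitely axiomatizable by $(n,m,\ell)$-dexrs, Lemmas~\ref{lem:dexrs-critical} and~\ref{lem:dexrs-closed-under-rpds} give that $\coll{C}$ is critical and closed under repairable direct products. We then replay the three-step proof of $(2)\Rightarrow(1)$ of Theorem~\ref{the:dexr-refined-characterization}, relativized to \emph{linear} disjunctive dependencies. Let $\class{LDD}_{n,m,\ell'}^{\ins{S}}\subseteq\class{DD}_{n,m,\ell'}^{\ins{S}}$ collect the dds whose body is empty or a single atom, and let $\dep^{\vee}_{\mathsf{lin}}$ be the (finite, up to renaming) set of members of $\class{LDD}_{n,m,\ell'}^{\ins{S}}$ satisfied by every structure of $\coll{C}$. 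The linear version of Step~1 asserts $I\in\coll{C}$ iff $I\models\dep^{\vee}_{\mathsf{lin}}$; the forward direction holds by construction, and for the converse one shows, assuming $I\models\dep^{\vee}_{\mathsf{lin}}$, that $\coll{C}$ is linear-diagrammatically $(n,m,\ell')$-compatible with $I$. Here, for a linear $K\subseteq I$ with $\adom{K}=\aadom{K}$, $|\adom{K}|\le n$, and an $(m,\ell')$-diagram $\Delta_{K,G}^{I}$, the point is that $\Psi_1$ is at most one atom, so $\neg\exists\bar x\,\Phi_{K,G}^{I}(\bar x)$ is equivalent to a dd in $\class{LDD}_{n,m,\ell'}^{\ins{S}}$ (the linear refinement of Claim~\ref{cla:equivalent-dd}); by Lemma~\ref{lem:sat-diagram} this dd fails in $I$, hence is not in $\dep^{\vee}_{\mathsf{lin}}$, so some $L\in\coll{C}$ satisfies $\exists\bar x\,\Phi_{K,G}^{I}(\bar x)$, and a renaming of its witnesses to the constants of $K$ gives $J\simeq L$ with $J\models\Delta_{K,G}^{I}$ and $J\in\coll{C}$.

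Steps~2 and~3 then carry over essentially verbatim. Letting $\dep^{\exists,=}_{\mathsf{lin}}$ be the set of linear dexrs and linear deqrs occurring in $\dep^{\vee}_{\mathsf{lin}}$: closure under repairable direct products yields $\dep^{\exists,=}_{\mathsf{lin}}\equiv\dep^{\vee}_{\mathsf{lin}}$, since any ``mixed'' linear dd splits into a linear deqr and a linear dexr and the repairable-direct-product construction of Step~2 is insensitive to linearity; and criticality yields that $\dep^{\exists,=}_{\mathsf{lin}}$ contains no deqr, since a $|\bar x|$-critical structure of $\coll{C}$ violates any deqr exactly as in Step~3. Hence $\coll{C}$ is the set of models of the finite set $\dep^{\exists,=}_{\mathsf{lin}}$ of linear $(n,m,\ell')$-dexrs, proving $(1)$.

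I expect the only genuinely new work to be the linear refinement of Claim~\ref{cla:equivalent-dd} inside Step~1: one must check that the negated relative diagram of a \emph{linear} $K$ is a dd with a single body atom and that the bounds on the number of variables and of non-equality disjuncts are respected, so that it indeed lies in $\class{LDD}_{n,m,\ell'}^{\ins{S}}$. This is exactly where linearity of $K$ is exploited and where the diagram machinery gets tailored to linear rules; everything else is bookkeeping, with the hypothesis on $\coll{C}$ used only to transport criticality and closure under repairable direct products into Steps~3 and~2, and with $(1)\Rightarrow(2)$ and Step~1 being self-contained.
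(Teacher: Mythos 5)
Your proposal is correct and follows essentially the same route as the paper: the direction $(1)\Rightarrow(2)$ repeats the argument of Lemma~\ref{lem:dexr-locality} with the single body atom yielding a linear $K$, and $(2)\Rightarrow(1)$ transports criticality and closure under repairable direct products from the standing hypothesis and then replays the three-step proof of Theorem~\ref{the:dexr-refined-characterization} restricted to dds with at most one body atom, the only new ingredient being the linear refinement of Claim~\ref{cla:equivalent-dd} (the paper's Claim~\ref{cla:equivalent-dd-linear}, resting exactly on $|\facts{K}|\leq 1$). No gaps; this matches the paper's proof in structure and substance.
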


The proof of the direction $(1) \Rightarrow (2)$ of the Linearization Lemma is analogous to the proof of Lemma~\ref{lem:dexr-locality}, whereas the direction $(2) \Rightarrow (1)$ is shown by following the same strategy as in the proof of the direction $(2) \Rightarrow (1)$ of Theorem~\ref{the:dexr-refined-characterization}.  
Having the Linearization Lemma in place, it is now not difficult to prove the characterization provided by Theorem~\ref{the:linear-dexr-refined-characterization}:

\begin{proof}[Proof of Theorem~\ref{the:linear-dexr-refined-characterization}]
The direction $(1) \Rightarrow (2)$ follows from Lemmas~\ref{lem:dexrs-critical}, \ref{lem:dexrs-closed-under-rpds}, and \ref{lem:linearization} (direction $(1) \Rightarrow (2)$).
For $(2) \Rightarrow (1)$, since $\coll{C}$ is linear-diagrammatically $(n,m,\ell)$-compatible, by Lemma~\ref{lem:linear-compatibility-to-compatibility} we get that $\coll{C}$ is also diagrammatically $(n,m,\ell)$-compatible. Since $\coll{C}$ is critical and closed under repairable direct products, we get from Theorem~\ref{the:dexr-refined-characterization} that $\coll{C}$ is finitely axiomatizable by $(n,m,\ell)$-dexrs. This allows us to apply Lemma~\ref{lem:linearization} (direction $(2) \Rightarrow (1)$), and get that $\coll{C}$ is finitely axiomatizable by linear $(n,m,\ell)$-dexrs, as needed.
\end{proof}

\subsection{Guarded Disjunctive Existential Rules}

Let us now proceed with guarded dexrs and perform a similar analysis as for linear dexrs. As one might suspect, the refined notion of diagrammatic compatibility with a structure $I$ is defined as diagrammatic compatibility with $I$ with the key difference that only guarded substructures $K$ of $I$ are considered. 
Formally, a structure $J$ is {\em guarded} if either $\facts{J} = \emptyset$, or there is $R(c_1,\ldots,c_r) \in \facts{J}$ such that $\aadom{J} = \{c_1,\ldots,c_r\}$.
Consider now a collection $\coll{C}$ of structures over a schema $\ins{S}$ and an $\ins{S}$-structure $I$. For integers $n,m,\ell\geq 0$, we say that $\coll{C}$ is {\em guarded-diagrammatically $(n,m,\ell)$-compatible} with $I$ if, for every guarded substructure $K$ of $I$ with $\adom{K} = \aadom{K}$ and $|\adom{K}| \leq n$, and every $(m,\ell)$-diagram $\Delta_{K,G}^{I}$ of $K$ relative to $I$, where $G \subseteq N_{K,m}^{I}$, there exists $J \in \coll{C}$ such that $J \models \Delta_{K,G}^{I}$.
The refined model-theoretic property follows.

\begin{definition}\label{def:guarded-diagrammatic-compatibility}
	A collection $\coll{C}$ of $\ins{S}$-structures is {\em guarded-diagrammatically $(n,m,\ell)$-compatible}, for $n,m,\ell \geq 0$, if, for every $\ins{S}$-structure $I$, $\coll{C}$ is guarded-diagrammatically $(n,m,\ell)$-compatible with $I$ implies $I \in \coll{C}$. \hfill\markfull
\end{definition}

As for linear-diagrammatic compatibility, it is straightforward to show the following, which will be used later:

\begin{lemma}\label{lem:guarded-compatibility-to-compatibility}
	Consider a collection $\coll{C}$ of $\ins{S}$-structures that is guarded-diagrammatically $(n,m,\ell)$-compatible, for integers $n,m \geq 0$, with $n+m>0$, and $\ell>0$. It holds that $\coll{C}$ is diagrammatically $(n,m,\ell)$-compatible.
\end{lemma}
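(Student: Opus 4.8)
The plan is to prove the implication by pure quantifier weakening, exploiting that guardedness only shrinks the family of substructures one must reason about. Fix integers $n,m \geq 0$ with $n+m>0$ and $\ell>0$, and assume that $\coll{C}$ is guarded-diagrammatically $(n,m,\ell)$-compatible. To show that $\coll{C}$ is diagrammatically $(n,m,\ell)$-compatible, I would fix an arbitrary $\ins{S}$-structure $I$ for which $\coll{C}$ is diagrammatically $(n,m,\ell)$-compatible with $I$, and derive $I \in \coll{C}$.

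The observation that drives the argument is that it is enough to establish that $\coll{C}$ is guarded-diagrammatically $(n,m,\ell)$-compatible \emph{with} $I$; once this is shown, the assumed guarded-diagrammatic $(n,m,\ell)$-compatibility of $\coll{C}$ immediately yields $I \in \coll{C}$. So the core step is the implication: $\coll{C}$ diagrammatically $(n,m,\ell)$-compatible with $I$ $\Longrightarrow$ $\coll{C}$ guarded-diagrammatically $(n,m,\ell)$-compatible with $I$. This holds because the latter is a weaker demand than the former: take any guarded substructure $K$ of $I$ with $\adom{K} = \aadom{K}$ and $|\adom{K}| \leq n$, together with any $(m,\ell)$-diagram $\Delta_{K,G}^{I}$ of $K$ relative to $I$ (where $G \subseteq N_{K,m}^{I}$); since $K$ being a guarded substructure of $I$ in particular means $K \preceq I$, the structure $K$ is one of the substructures quantified over in the definition of diagrammatic $(n,m,\ell)$-compatibility with $I$, so that definition already provides some $J \in \coll{C}$ with $J \models \Delta_{K,G}^{I}$ --- which is exactly what the guarded variant requires.

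I do not expect a real obstacle: the proof is a one-line containment of quantification ranges, and it is noticeably simpler than the analogous Lemma~\ref{lem:linear-compatibility-to-compatibility}, where the definition of linear $K$ uses $\subseteq$ rather than $\preceq$ and so one must bridge a genuine mismatch. Here ``guarded substructure of $I$'' already means $K \preceq I$, hence also $K \subseteq I$, so the relative $(m,\ell)$-diagrams of such $K$ are well-defined and no degenerate-case bookkeeping is needed. To wrap up, since $\coll{C}$ is guarded-diagrammatically $(n,m,\ell)$-compatible with every structure $I$ with which it is diagrammatically $(n,m,\ell)$-compatible, and the former entails $I \in \coll{C}$, we conclude that $\coll{C}$ is diagrammatically $(n,m,\ell)$-compatible.
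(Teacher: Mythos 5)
Your proof is correct and is essentially the paper's own argument: for any $I$ with which $\coll{C}$ is diagrammatically $(n,m,\ell)$-compatible, the guarded substructures of $I$ form a subset of the substructures quantified over, so $\coll{C}$ is guarded-diagrammatically $(n,m,\ell)$-compatible with $I$, and the assumed property yields $I \in \coll{C}$. No gap.
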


The characterization of interest for guarded dexrs follows:

\begin{theorem}\label{the:guarded-dexr-refined-characterization}
	For a collection $\coll{C}$ of structures and $n,m \geq 0$, with $n+m > 0$, and $\ell > 0 $, the following are equivalent:
	\begin{enumerate}
		\item $\coll{C}$ is finitely axiomatizable by guarded $(n,m,\ell)$-dexrs.
		\item $\coll{C}$ is critical, closed under repairable direct products, and guarded-diagrammatically $(n,m,\ell)$-compatible.
	\end{enumerate}
\end{theorem}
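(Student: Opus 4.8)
The plan is to mirror the treatment of linear dexrs, replacing linearity by guardedness throughout. For the direction $(1) \Rightarrow (2)$, criticality and closure under repairable direct products follow from Lemmas~\ref{lem:dexrs-critical} and~\ref{lem:dexrs-closed-under-rpds} (these hold for arbitrary dexrs, hence a fortiori for guarded ones). For guarded-diagrammatic $(n,m,\ell)$-compatibility, I would prove the $(1) \Rightarrow (2)$ direction of an analogue of the Linearization Lemma — call it the \emph{Guardedization Lemma} — by adapting the proof of Lemma~\ref{lem:dexr-locality}. The only change is that when $\sigma \in \dep$ is a guarded dexr and $h(\body{\sigma}) \subseteq \facts{I}$, the structure $K$ built from the facts $h(\phi(\bar x,\bar y))$ is in fact a \emph{guarded} substructure of $I$: if $\body{\sigma}$ is empty then $\facts{K}=\emptyset$, and otherwise the guard atom of $\sigma$ maps under $h$ to a fact of $K$ whose arguments exhaust $\aadom{K}$ (since the guard mentions all universally quantified variables, and $\adom{K} = \aadom{K}$ is exactly the image of those variables). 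Hence the $(m,\ell)$-diagram $\Delta_{K,G}^{I}$ witnessing the violation is one of the diagrams quantified over in the definition of guarded-diagrammatic compatibility, and the contradiction argument goes through verbatim.

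For the direction $(2) \Rightarrow (1)$, I would first invoke Lemma~\ref{lem:guarded-compatibility-to-compatibility} to deduce that $\coll{C}$ is (plain) diagrammatically $(n,m,\ell)$-compatible, and then apply Theorem~\ref{the:dexr-refined-characterization} to conclude that $\coll{C}$ is finitely axiomatizable by $(n,m,\ell)$-dexrs. With that in hand, I would apply the $(2) \Rightarrow (1)$ direction of the Guardedization Lemma — whose statement is that, for a collection finitely axiomatizable by $(n,m,\ell)$-dexrs and an integer $\ell'>0$, being finitely axiomatizable by guarded $(n,m,\ell')$-dexrs is equivalent to being guarded-diagrammatically $(n,m,\ell')$-compatible — taking $\ell'=\ell$, to obtain that $\coll{C}$ is finitely axiomatizable by guarded $(n,m,\ell)$-dexrs, as needed. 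This is exactly the structure of the proof of Theorem~\ref{the:linear-dexr-refined-characterization}, with Lemma~\ref{lem:linearization} replaced by its guarded counterpart and Lemma~\ref{lem:linear-compatibility-to-compatibility} by Lemma~\ref{lem:guarded-compatibility-to-compatibility}.

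The substance of the argument therefore reduces to proving the Guardedization Lemma, and within it the direction $(2) \Rightarrow (1)$, which follows the three-step strategy of the $(2) \Rightarrow (1)$ proof of Theorem~\ref{the:dexr-refined-characterization}. Step~1 (building a finite set $\dep^{\vee}$ of dds from $\class{DD}_{n,m,\ell}^{\ins{S}}$ with $\coll{C}$ as its class of models) and Step~2 (showing $\dep^{\vee}$ is equivalent to the set $\dep^{\exists,=}$ of dexrs and deqrs occurring in it, via closure under repairable direct products) go through unchanged, since $\coll{C}$ is already known to be finitely axiomatizable by $(n,m,\ell)$-dexrs. The point requiring care is in the analogue of Step~1 adapted to guardedness: one must argue that the dd $\delta \equiv \neg\exists\bar x\,\Phi_{K,G}^{I}(\bar x)$ produced from a \emph{guarded} witness $K$ can be taken to have a guarded body (a single atom that guards all of $\bar x$), so that the resulting dexr after discarding equality disjuncts is a guarded $(n,m,\ell)$-dexr. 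This is the main obstacle, and it rests on the observation that when $K$ is guarded the conjunction $\Psi_1$ in $\Delta_{K,G}^{I}$ contains a fact whose arguments are exactly $\adom{K}$, so after the constant-to-variable substitution the body $\phi(\bar x)$ contains an atom mentioning every variable of $\bar x$; pulling this atom out as a guard (and, when $K$ is empty, handling the empty-body corner case) gives the required guarded form. The remaining Step~3 argument — that critical structures in $\coll{C}$ rule out any genuine deqr surviving in $\dep^{\exists,=}$ — is identical to the one in the proof of Theorem~\ref{the:dexr-refined-characterization}.
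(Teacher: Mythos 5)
Your proposal is correct and follows essentially the same route as the paper: the theorem is obtained from Lemmas~\ref{lem:dexrs-critical}, \ref{lem:dexrs-closed-under-rpds}, \ref{lem:guarded-compatibility-to-compatibility}, Theorem~\ref{the:dexr-refined-characterization}, and the two directions of the Guardedization Lemma~\ref{lem:guardedization}, exactly as in the paper's proof. Your additional sketch of the Guardedization Lemma itself (taking $K$ to be the induced substructure on the constants of $h(\phi(\bar x,\bar y))$, which is guarded via the image of the guard atom, and observing that a guarded witness $K$ yields a dd whose body contains an atom mentioning all of $\bar x$) fills in precisely the details the paper leaves as an exercise, in line with its stated strategy of mimicking Lemma~\ref{lem:dexr-locality} and the $(2) \Rightarrow (1)$ proof of Theorem~\ref{the:dexr-refined-characterization}.
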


To establish the above characterization, we first show a technical lemma in the spirit of the Linearization Lemma, called {\em Guardedization Lemma}, which characterizes when a collection of structures that is finitely axiomatizable by dexrs is finitely axiomatizable by guarded dexrs. To this end, guarded-diagrammatic compatibility plays a crucial role.

\begin{lemma}\label{lem:guardedization}
	Consider a collection $\coll{C}$ of structures that is finitely axiomatizable by $(n,m,\ell)$-dexrs, for integers $n,m \geq 0$, with $n+m >0$, and $\ell >0$. For every integer $\ell' > 0$, the following are equivalent:
	\begin{enumerate}
		\item $\coll{C}$ is finitely axiomatizable by guarded $(n,m,\ell')$-dexrs.
		\item $C$ is guarded-diagrammatically $(n,m,\ell')$-compatible.
	\end{enumerate}
\end{lemma}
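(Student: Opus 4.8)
The plan is to mimic the two-direction argument used for the Linearization Lemma, but with guardedness playing the role of linearity throughout.

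For the direction $(1) \Rightarrow (2)$, the argument is essentially the same as the proof of Lemma~\ref{lem:dexr-locality}, restricted to guarded dexrs. Suppose $\dep$ is a finite set of guarded $(n,m,\ell')$-dexrs axiomatizing $\coll{C}$. To show $\coll{C}$ is guarded-diagrammatically $(n,m,\ell')$-compatible, take a structure $I$ such that $\coll{C}$ is guarded-diagrammatically $(n,m,\ell')$-compatible with $I$, and show $I \models \dep$. Consider a violation of some $\sigma \in \dep$ witnessed by a body homomorphism $h$; the key point is that, since $\sigma$ is guarded, its body is either empty or contains a guard atom mentioning all universally quantified variables. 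Hence the structure $K$ obtained as the restriction of $I$ to the image of $h$ — more precisely, to the image of the guard atom — is a guarded substructure of $I$ with $\adom{K} = \aadom{K}$ and $|\adom{K}| \leq n$. (When the body is empty, one takes $K$ to be the empty structure, which is guarded by definition.) Collecting the falsified head-disjuncts of $\sigma$ under $h$ as a set $G$, exactly as in Lemma~\ref{lem:dexr-locality}, gives an $(m,\ell')$-diagram $\Delta_{K,G}^{I}$ of the guarded substructure $K$ relative to $I$; guarded-diagrammatic compatibility with $I$ then yields $J \in \coll{C}$ with $J \models \Delta_{K,G}^{I}$, which forces $J \not\models \sigma$, contradicting $J \in \coll{C}$.

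For the direction $(2) \Rightarrow (1)$, the plan is to replay the three-step argument behind Theorem~\ref{the:dexr-refined-characterization}, but now harvesting only dds whose body is guarded (empty, or containing an atom covering all body variables). First, since $\coll{C}$ is guarded-diagrammatically $(n,m,\ell')$-compatible, Lemma~\ref{lem:guarded-compatibility-to-compatibility} gives diagrammatic $(n,m,\ell)$-compatibility, so by Theorem~\ref{the:dexr-refined-characterization} (together with criticality and closure under repairable direct products, inherited from the hypothesis that $\coll{C}$ is finitely axiomatizable by $(n,m,\ell)$-dexrs) $\coll{C}$ is already finitely axiomatizable by $(n,m,\ell)$-dexrs. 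Now let $\dep^{\vee}_{g}$ be the set of all dds in $\class{DD}_{n,m,\ell'}^{\ins{S}}$ with a guarded body that hold in every structure of $\coll{C}$, and let $\dep^{\exists,=}_{g}$ be its dexrs and deqrs. One shows $I \in \coll{C}$ iff $I \models \dep^{\vee}_{g}$: the $(\Rightarrow)$ direction is by construction, and for $(\Leftarrow)$, given $I \models \dep^{\vee}_{g}$, it suffices to establish guarded-diagrammatic compatibility with $I$; for a guarded substructure $K$ of $I$ with $|\adom{K}| \leq n$ and an $(m,\ell')$-diagram $\Delta_{K,G}^{I}$, the analogue of Claim~\ref{cla:equivalent-dd} produces a dd $\delta \equiv \neg\exists\bar x\,\Phi_{K,G}^{I}(\bar x)$ whose body is the conjunction of facts of $K$ — and because $K$ is guarded, this body is guarded, so $\delta \in \class{DD}_{n,m,\ell'}^{\ins{S}}$ has guarded body; Lemma~\ref{lem:sat-diagram} then forces $\delta \notin \dep^{\vee}_{g}$, yielding the desired $J$. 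Steps 2 and 3 (splitting $\delta$ into its equality part $\delta_{=}$ and existential part $\delta_{\exists}$, repairing the direct product to kill $\delta$, and eliminating deqrs via criticality) go through verbatim, since splitting a dd preserves guardedness of the body and the repairable-direct-product and criticality arguments are agnostic to the shape of the body.

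I expect the only genuinely delicate point to be the observation, in the $(\Leftarrow)$ part of the first step, that the body of the dd $\delta$ equivalent to $\neg\exists\bar x\,\Phi_{K,G}^{I}(\bar x)$ is guarded precisely because $K$ is a guarded structure: the conjunction $\Psi_1 = \bigwedge_{\alpha \in \facts{K}}\alpha$ translates (after replacing constants by variables) into the body, and guardedness of $K$ means some fact $R(c_1,\dots,c_r)$ has $\aadom{K} = \{c_1,\dots,c_r\}$, hence the corresponding atom is a guard for all body variables. One must also handle the corner case $\facts{K} = \emptyset$, where $\delta$ has empty body and is therefore trivially guarded, matching the convention that a guarded dexr may have empty body. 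Everything else is a routine transcription of the proofs of Lemma~\ref{lem:linearization} and Theorem~\ref{the:dexr-refined-characterization}, with ``linear'' replaced by ``guarded'' and ``$|\facts{J}|\leq 1$'' replaced by the guardedness condition.
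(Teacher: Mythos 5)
Your proposal is correct and follows exactly the route the paper intends: it transcribes the proof of Lemma~\ref{lem:dexr-locality} for the direction $(1) \Rightarrow (2)$ (using the guard atom's image to get a guarded substructure $K \preceq I$) and the three-step argument behind Theorem~\ref{the:dexr-refined-characterization} and Lemma~\ref{lem:linearization} for $(2) \Rightarrow (1)$, correctly isolating the one delicate point, namely that guardedness of $K$ makes the body of the dd equivalent to $\neg\exists\bar x\,\Phi_{K,G}^{I}(\bar x)$ guarded. (The opening appeal to Lemma~\ref{lem:guarded-compatibility-to-compatibility} and Theorem~\ref{the:dexr-refined-characterization} in your $(2) \Rightarrow (1)$ is redundant—finite axiomatizability by $(n,m,\ell)$-dexrs is already the hypothesis—but this is harmless.)
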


The proof of the direction $(1) \Rightarrow (2)$ of the Guardedization Lemma is analogous to the proof of Lemma~\ref{lem:dexr-locality}, whereas the direction $(2) \Rightarrow (1)$ is shown by following the same strategy as in the proof of the direction $(2) \Rightarrow (1)$ of Theorem~\ref{the:dexr-refined-characterization}.  
The Guardedization Lemma allows us to prove the characterization provided by Theorem~\ref{the:guarded-dexr-refined-characterization}:

\begin{proof}[Proof of Theorem~\ref{the:guarded-dexr-refined-characterization}]
	The direction $(1) \Rightarrow (2)$ follows from Lemmas~\ref{lem:dexrs-critical}, \ref{lem:dexrs-closed-under-rpds}, and \ref{lem:guardedization} (direction $(1) \Rightarrow (2)$).
	For $(2) \Rightarrow (1)$, since $\coll{C}$ is guarded-diagrammatically $(n,m,\ell)$-compatible, by Lemma~\ref{lem:guarded-compatibility-to-compatibility} we get that $\coll{C}$ is also diagrammatically $(n,m,\ell)$-compatible. Since $\coll{C}$ is critical and closed under repairable direct products, we get from Theorem~\ref{the:dexr-refined-characterization} that $\coll{C}$ is finitely axiomatizable by $(n,m,\ell)$-dexrs. Thus, we can apply Lemma~\ref{lem:guardedization} (direction $(2) \Rightarrow (1)$), and get that $\coll{C}$ is finitely axiomatizable by guarded $(n,m,\ell)$-dexrs.
\end{proof}
\section{From Guarded to Linear Disjunctive Existential Rules}\label{sec:rewritability}

The goal of this last section is to understand whether our new diagrammatic compatibility property can be used to solve the non-trivial problem of rewriting a set of guarded dexrs into an equivalent set of dexrs that falls in the weaker class of linear dexrs rules. Formally, we are interested in the following algorithmic problem:

\medskip

\begin{center}
	\fbox{\begin{tabular}{ll}
			{\small PROBLEM} : & $\mathsf{G\text{-}to\text{-}L}$
			\\
			{\small INPUT} : & A finite set $\dep$ of guarded dexrs.
			\\
			{\small OUTPUT} : & A finite set $\dep'$ of linear dexrs such that\\
			& $\dep \equiv \dep'$, if one exists; otherwise, $\mathsf{fail}$.
	\end{tabular}}
\end{center}

\medskip

\noindent Our goal is to show that:

\begin{theorem}\label{the:rewritability}
	$\mathsf{G\text{-}to\text{-}L}$ is computable in elementary time.
\end{theorem}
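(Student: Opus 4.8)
The plan is to decide $\mathsf{G\text{-}to\text{-}L}$ by reducing it to the two characterizations already established, together with a decidability analysis of the relevant model-theoretic property. First I would observe that, given a finite set $\dep$ of guarded dexrs, the collection $\coll{C}_\dep$ of its models is---by Theorem~\ref{the:guarded-dexr-refined-characterization}---critical, closed under repairable direct products, and guarded-diagrammatically $(n,m,\ell)$-compatible, where $n,m,\ell$ are read off from $\dep$. By the Guardedization Lemma (Lemma~\ref{lem:guardedization}) applied with $\ell'$ ranging over $\{1,\ldots,\ell\cdot c\}$ for a suitable bound $c$ depending only on $n,m$ and the schema (since, up to logical equivalence, there are only finitely many dexrs over a fixed schema with at most $n$ universal and $m$ existential variables), $\coll{C}_\dep$ is finitely axiomatizable by linear dexrs if and only if it is linear-diagrammatically $(n,m,\ell')$-compatible for one of these finitely many bounds $\ell'$. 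Hence the whole problem reduces to (i) deciding, for each relevant triple $(n,m,\ell')$, whether $\coll{C}_\dep$ is linear-diagrammatically $(n,m,\ell')$-compatible, and (ii) when the answer is yes, actually producing an equivalent finite set of linear dexrs.

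For the construction in step (ii), I would mirror Step~1 of the proof of Theorem~\ref{the:dexr-refined-characterization}: form the set $\dep^{\vee}$ of all dds from $\class{DD}_{n,m,\ell'}^{\ins{S}}$ whose body is a single atom (the linear restriction) that are entailed by $\dep$, then extract its dexr part exactly as in Steps~2 and~3. Because $\class{DD}_{n,m,\ell'}^{\ins{S}}$ is finite up to variable renaming, this candidate set is finite and effectively enumerable; membership of a single dd $\delta$ in $\dep^{\vee}$ is the question $\dep \models \delta$, which is decidable since $\dep$ is a finite set of (disjunctive) existential rules---one can appeal to the decidability of the certain-answer / entailment problem for guarded dexrs, or simply note that $\delta$ has bounded size so that a bounded-model argument via the chase applies. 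Having built the candidate linear set $\dep'$, one checks $\dep \equiv \dep'$ by two entailment tests $\dep \models \dep'$ and $\dep' \models \dep$, again decidable; if the test succeeds we output $\dep'$, otherwise we move to the next $\ell'$, and if all fail we output $\mathsf{fail}$, the correctness of this being exactly the equivalence furnished by combining Theorem~\ref{the:guarded-dexr-refined-characterization}, Lemma~\ref{lem:linear-compatibility-to-compatibility}, and the Linearization/Guardedization Lemmas.

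The remaining point is the \emph{elementary} time bound. Every ingredient above is elementary: the number of dds in $\class{DD}_{n,m,\ell'}^{\ins{S}}$ is bounded by a tower of fixed height in $|\dep|$ and $|\ins{S}|$; each entailment test $\dep \models \delta$ for a bounded-size $\delta$ can be settled within elementary time because satisfiability/entailment of disjunctive existential rules against bounded-size witnesses reduces to a chase computation whose relevant portion has elementarily bounded size (alternatively, guardedness gives a tree-like model-checking argument with elementary complexity); and there are only elementarily many triples $(n,m,\ell')$ to try. I would then assemble these bounds to conclude that the overall procedure runs in elementary time.

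The hard part will be pinning down step (i)---deciding linear-diagrammatic $(n,m,\ell')$-compatibility of $\coll{C}_\dep$---\emph{effectively and within an elementary bound}, since the definition quantifies over all $\ins{S}$-structures $I$ (including infinite ones) and all their linear substructures $K$ and $(m,\ell)$-diagrams. The key observation that makes this tractable is that, rather than verifying compatibility directly, we only ever need to know whether the \emph{finitely many} candidate linear dds are entailed by $\dep$ and whether the resulting set is equivalent to $\dep$; by the Linearization Lemma this is logically equivalent to the compatibility check, so we never have to reason about arbitrary $I$ at all. Making that reduction airtight---in particular justifying that the relevant $\ell'$ range is bounded and that entailment of bounded-size dds by a finite set of guarded dexrs is elementary---is where the real work lies, and it is the step I would write out in full detail in the appendix.
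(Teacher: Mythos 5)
Your overall architecture matches the paper's: enumerate a bounded family of candidate linear dexrs entailed by $\dep$, take all of them as the candidate set $\dep'$, and test $\dep' \models \dep$, with all entailment tests reduced to (elementarily decidable) reasoning under guarded/linear dexrs in the spirit of conjunctive query answering. However, there is a genuine gap exactly at the step you yourself flag as ``where the real work lies'', and the justification you offer for it does not work. You claim the search can be restricted to $\ell'$ ranging over $\{1,\ldots,\ell\cdot c\}$ ``since, up to logical equivalence, there are only finitely many dexrs over a fixed schema with at most $n$ universal and $m$ existential variables.'' This is circular: the problem $\mathsf{G\text{-}to\text{-}L}$ asks whether \emph{some} equivalent finite set of linear dexrs exists, with no bound whatsoever on its number of universally quantified variables, existentially quantified variables, or head-disjuncts. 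Your finiteness observation only applies \emph{after} one knows that an equivalent linear axiomatization can be taken with at most $n$ universal and $m$ existential variables, and the Linearization Lemma (Lemma~\ref{lem:linearization}) does not give you this either: it is stated for a fixed $\ell'$ and with the \emph{same} $n,m$ as the original axiomatization on both sides, so it cannot by itself convert the unrestricted question ``is $\coll{C}_\dep$ finitely axiomatizable by linear dexrs?'' into a search over a finite, computable family. (Incidentally, the lemma relevant here is the Linearization Lemma, not the Guardedization Lemma you invoke.)

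The missing ingredient is precisely the paper's Bounded Linearization Lemma (Lemma~\ref{lem:boudned-linearization}): if $\coll{C}$ is finitely axiomatizable by $(n,m,\ell)$-dexrs and by \emph{some} set of linear dexrs (of arbitrary shape), then it is already finitely axiomatizable by linear $(n,m,\ell')$-dexrs for the explicit bound $\ell' = \ell \cdot |\ins{S}| \cdot (n+m+1)^{m \cdot \ar{\ins{S}}}$. Its proof is not a counting argument: the hypothesis that an arbitrary linear axiomatization exists is exploited through closure of $\coll{C}$ under unions of structures, combined with a decomposition of instantiated head-disjuncts into maximally connected components over suitable local extensions, in order to transfer violations back to linear $(m,\ell')$-diagrams of single-atom substructures. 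Without this lemma (or some substitute establishing a computable bound on both the variable counts and the number of head-disjuncts of a hypothetical equivalent linear set), your algorithm's ``$\mathsf{fail}$'' answer is not justified: all bounded candidates could fail the equivalence test while an equivalent linear set with more variables or more disjuncts still exists. The rest of your argument (building $\dep'$ from entailed single-atom-body dds, the two entailment tests, and the elementary bounds on the number of candidates and on guarded entailment) is sound and coincides with the paper's algorithm $\mathsf{Rewrite}$, but the theorem stands or falls with the bounded linearization step you deferred.
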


\subsection{Bounded Linearization Lemma} 

To obtain an algorithm that solves our rewritability problem, we need a stronger version of the Linearization Lemma, established in the previous section (see Lemma~\ref{lem:linearization}), that allows us to bound the number of variables and the number of head-disjuncts occurring in the linear dexrs of the equivalent set, which in turn allows us to focus on finitely many linear dexrs. More precisely, we need a result that allows us to conclude the following: given a set $\dep$ of dexrs, there is a set $\dep'$ of linear dexrs that is equivalent to $\dep$ iff there is one consisting of linear dexrs with a bounded number of variables and a bounded number of head-disjuncts. This is achieved by the following result, dubbed {\em Bounded Linearization Lemma}. For a schema $\ins{S}$, we write $\ar{\ins{S}}$ for the maximum arity over all predicates of $\ins{S}$, that is, the integer $\max_{R \in \ins{S}} \ar{R}$.

\begin{lemma}\label{lem:boudned-linearization}
	Consider a collection $\coll{C}$ of structures over a schema $\ins{S}$ that is finitely axiomatizable by $(n,m,\ell)$-dexrs, for integers $n,m \geq 0$, with $n+m >0$, and $\ell >0$. Let $\ell' = \ell \cdot |\ins{S}| \cdot (n+m+1)^{m \cdot \ar{\ins{S}}}$. The following are equivalent: 
	\begin{enumerate}
		\item $\coll{C}$ is finitely axiomatizable by linear dexrs.
		\item $\coll{C}$ is linear-diagrammatically $(n,m,\ell')$-compatible.
	\end{enumerate}
\end{lemma}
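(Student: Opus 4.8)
The plan is to deduce this from the (unbounded) Linearization Lemma (Lemma~\ref{lem:linearization}) together with a key syntactic observation: when a collection $\coll{C}$ is finitely axiomatizable by $(n,m,\ell)$-dexrs, the only diagrams $\Delta_{K,G}^{I}$ that can witness a failure of linear-diagrammatic compatibility involve a linear substructure $K$ (so $|\adom{K}|\le n$ and, crucially, $K$ has at most one fact) and a set $G$ whose size is bounded by a function of $\ins{S},n,m,\ell$. Concretely, once $K$ is linear its active domain has at most $\ar{\ins{S}}$ elements; the number of atomic formulas $R(\bar u)$ that can be built from $\ins{S}$, the at most $n$ constants of $\adom{K}$, and the $m$ variables $y_1,\dots,y_m$ is at most $|\ins{S}|\cdot(n+m)^{\ar{\ins{S}}}$, but what really matters is the number of \emph{minimal} reasons why the relevant dexrs in the axiomatization of $\coll{C}$ fail; this is where the bound $\ell' = \ell\cdot|\ins{S}|\cdot(n+m+1)^{m\cdot\ar{\ins{S}}}$ comes from. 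So the claim to nail down is: \emph{$\coll{C}$ is linear-diagrammatically $(n,m,\ell)$-compatible for some (arbitrarily large) $\ell$ iff it is linear-diagrammatically $(n,m,\ell')$-compatible}, for the specific $\ell'$ above.

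The direction $(1)\Rightarrow(2)$ is the easy one: if $\coll{C}$ is finitely axiomatizable by linear dexrs then, by possibly increasing the disjunct count, it is finitely axiomatizable by linear $(n',m',\ell'')$-dexrs for some parameters; one then argues, exactly as in the proof of Lemma~\ref{lem:dexr-locality}, that finite axiomatizability by linear dexrs implies linear-diagrammatic $(n,m,\ell')$-compatibility for the $\ell'$ specified. The subtle point here is that the number of head-disjuncts in the equivalent linear set is a priori unbounded; this is precisely what the Linearization Lemma circumvents, and I would invoke its $(1)\Rightarrow(2)$ direction to get \emph{some} $\ell'$-compatibility and then argue monotonicity in the last parameter: linear-diagrammatic $(n,m,\ell_1)$-compatibility with $\ell_1\ge\ell_2$ implies linear-diagrammatic $(n,m,\ell_2)$-compatibility, so it suffices to be $\ell'$-compatible for the \emph{concrete} $\ell'$ in the statement once we know we are $\ell''$-compatible for $\ell''\ge\ell'$.

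For $(2)\Rightarrow(1)$, assume $\coll{C}$ is linear-diagrammatically $(n,m,\ell')$-compatible. By Lemma~\ref{lem:linearization}, it suffices to show $\coll{C}$ is linear-diagrammatically $(n,m,\ell_0)$-compatible for \emph{every} $\ell_0>0$, since then it is finitely axiomatizable by linear $(n,m,\ell_0)$-dexrs for some $\ell_0$. So fix $\ell_0>\ell'$ and an $\ins{S}$-structure $I$ with which $\coll{C}$ is linear-diagrammatically $(n,m,\ell_0)$-compatible; we must show $I\in\coll{C}$. Take any linear $K\preceq I$ with $\adom{K}=\aadom{K}$, $|\adom{K}|\le n$, and any $(m,\ell_0)$-diagram $\Delta_{K,G}^{I}$ with $G\subseteq N_{K,m}^{I}$, $|G|\le\ell_0$. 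The heart of the argument is to replace $G$ by a subset $G'\subseteq G$ with $|G'|\le\ell'$ such that any $J\models\Delta_{K,G'}^{I}$ already satisfies $\Delta_{K,G}^{I}$ — equivalently, $\neg\exists\bar y\,\gamma(\bar y)$ for every $\gamma\in G$ is implied (over structures agreeing with $K$) by the conjunction $\bigwedge_{\gamma'\in G'}\neg\exists\bar y\,\gamma'(\bar y)$. This follows by a pruning/entailment analysis on the conjunctive formulas $\gamma$: using that $\adom{K}$ is small (at most $\ar{\ins{S}}$ active elements, $\le n$ total), each $\gamma\in N_{K,m}^{I}$ contains a ``core'' subconjunction over the $\le|\ins{S}|\cdot(n+m+1)^{m\cdot\ar{\ins{S}}}$ possible atoms, and collapsing each $\gamma$ to such a representative yields at most $\ell'$ distinct formulas whose non-satisfiability in $I$ entails that of all of $G$. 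I'd make this precise by noting $I\not\models\exists\bar y\,\gamma$ iff $I\not\models\exists\bar y\,\gamma^{\mathrm{core}}$ for a canonical $\gamma^{\mathrm{core}}$ drawn from a set of size $\le|\ins{S}|\cdot(n+m+1)^{m\cdot\ar{\ins{S}}}$, and then, ranging over the $\le\ell_0$ elements of $G$, keeping one representative per core gives $G'$ with $|G'|\le\min(\ell_0,\ell')\le\ell'$, and $\Delta_{K,G'}^{I}\models\Delta_{K,G}^{I}$. Then $\ell'$-compatibility gives $J\in\coll{C}$ with $J\models\Delta_{K,G'}^{I}$, hence $J\models\Delta_{K,G}^{I}$; since $K$, the diagram, and $\ell_0$ were arbitrary, $\coll{C}$ is linear-diagrammatically $(n,m,\ell_0)$-compatible with $I$, so $I\in\coll{C}$.

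The main obstacle is the entailment/pruning step: justifying that each non-satisfiable conjunction $\gamma$ over $\adom{K}\cup\{y_1,\dots,y_m\}$ can be replaced by a canonical representative drawn from a set of size exactly $|\ins{S}|\cdot(n+m+1)^{m\cdot\ar{\ins{S}}}$ while preserving — modulo the $K$-part and the inequality part of the diagram — all the information needed, and doing the bookkeeping so that the count comes out to precisely $\ell' = \ell\cdot|\ins{S}|\cdot(n+m+1)^{m\cdot\ar{\ins{S}}}$ rather than something merely polynomially related. The factor $\ell$ in $\ell'$ presumably enters because the original axiomatization has $\ell$-disjunct heads and each head-disjunct contributes one ``slot'' of formulas to be controlled; tracking that correspondence carefully is the delicate part, and I would isolate it as a self-contained combinatorial sublemma about $N_{K,m}^{I}$ before assembling the two directions.
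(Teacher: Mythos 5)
There is a genuine gap, and it sits in both directions, essentially because you have the difficulty of the two implications inverted. The direction $(2) \Rightarrow (1)$ needs no pruning argument at all: the Linearization Lemma (Lemma~\ref{lem:linearization}) holds for \emph{every} $\ell'' > 0$, so you may instantiate it with $\ell'' = \ell'$ and conclude directly from linear-diagrammatic $(n,m,\ell')$-compatibility that $\coll{C}$ is finitely axiomatizable by linear $(n,m,\ell')$-dexrs, hence by linear dexrs; this is exactly what the paper does. Your replacement-by-cores step is moreover not sound as stated: passing from $G$ to a subset $G'$ only \emph{weakens} the diagram (i.e., $\Delta_{K,G}^{I} \models \Delta_{K,G'}^{I}$, not the converse), and to get $\Delta_{K,G'}^{I} \models \Delta_{K,G}^{I}$ you would need every $\gamma \in G$ to be homomorphically implied by some $\gamma' \in G'$, which fails in general since $N_{K,m}^{I}$ can contain many pairwise incomparable conjunctions; the claimed canonical-core set of size $|\ins{S}| \cdot (n+m+1)^{m \cdot \ar{\ins{S}}}$ is not justified.

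The real content of the lemma is $(1) \Rightarrow (2)$, and there your argument breaks on the monotonicity claim, which goes the wrong way: since an $(m,\ell_2)$-diagram is also an $(m,\ell_1)$-diagram for $\ell_2 \leq \ell_1$, compatibility \emph{with} $I$ at level $\ell_1$ implies compatibility with $I$ at level $\ell_2$, and therefore $(n,m,\ell_2)$-compatibility of $\coll{C}$ implies $(n,m,\ell_1)$-compatibility, not vice versa. So knowing linear-diagrammatic compatibility for some large $\ell''$ (obtained from an arbitrary linear axiomatization, whose variable bounds $n',m'$ are in any case unrelated to $n,m$) does not yield the target $(n,m,\ell')$-compatibility, which is the \emph{stronger} property. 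The paper's proof of $(1) \Rightarrow (2)$ is a genuinely different, constructive argument: given the $(n,m,\ell)$-axiomatization $\dep$ and a trigger $h$ for $\sigma \in \dep$ on $I$, it splits $h(\body{\sigma})$ into single-fact linear structures $K_1,\ldots,K_p$, builds for each $K_i$ an $(m,\ell')$-diagram whose negated part collects the maximally connected components (w.r.t.\ the Gaifman graph) of the head-disjuncts instantiated by ``local extensions'' of $h$ — this is precisely where the bound $\ell' = \ell \cdot |\ins{S}| \cdot (n+m+1)^{m \cdot \ar{\ins{S}}}$ arises — then uses $(n,m,\ell')$-compatibility with $I$ to obtain witnesses $J_i \in \coll{C}$, takes their union $J^\star$ (legitimate because hypothesis (1), the linear axiomatization, makes $\coll{C}$ closed under unions — this is where (1) is actually used), and finally transfers a satisfied head-disjunct from $J^\star$ back to $I$ component by component via the diagrams. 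None of this machinery, in particular the use of closure under unions and the connectivity analysis, appears in your proposal, so the hard direction is unproved.
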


\medskip
\noindent  \paragraph{Remark.} To establish a Bounded Linearization Lemma, one could simply let $\ell' = 2^{|\ins{S}| \cdot (n+m)^{\ar{\ins{S}}}}$. Indeed, with $n+m$ variables, we can construct at most $|\ins{S}| \cdot (n+m)^{\ar{\ins{S}}}$ atoms that mention a predicate of $\ins{S}$ and any subset of those atoms may give rise to a conjunction of atoms that can appear in a head-disjunct. 
However, we would like to obtain an optimal bound on the number of head-disjuncts, whereas the naive double-exponential bound discussed above is clearly suboptimal. The Bounded Linearization Lemma established above provides an exponential bound, which significantly improves the naive bound, by exploiting the fact that the collection $\coll{C}$ of $\ins{S}$-structures is finitely axiomatizable by dexrs with at most $\ell$ head-disjuncts. The question whether we can establish a stronger version of the Bounded Linearization Lemma, which provides a polynomial bound on the number of disjuncts, remains an interesting open problem.

\begin{algorithm}[t]
	\KwIn{A set $\dep$ of guarded $(n,m,\ell)$-dexrs over $\ins{S}$, for $n,m \geq 0$, with $n+m > 0$, and $\ell > 0$.}
	\KwOut{A set $\dep'$ of linear $(n,m,\ell')$-dexrs, where $\ell' = \ell \cdot |\ins{S}| \cdot (n+m)^{\ar{\ins{S}} \cdot (n+1)}$, such that $\dep \equiv \dep'$, if one exists; otherwise, $\mathsf{fail}$.}
	\vspace{2mm}
	
	$\ell' := \ell \cdot |\ins{S}| \cdot (n+m)^{\ar{\ins{S}} \cdot (n+1)}$\\
	$\dep' := \{\sigma \mid \sigma \text{ is a linear } (n,m,\ell')\text{-dexr over } \ins{S} \text{ and } \dep \models \sigma\}$\\
	
	\uIf{$\dep' \neq \emptyset$ \text{\rm and} $\dep' \models \dep$}
	{
		\textbf{return} $\dep'$
	}\Else{
		\textbf{return} $\mathsf{fail}$
	}
	\caption{\textsf{Rewrite}}
	\label{alg:guarded-to-linear}
\end{algorithm}

\subsection{The Rewriting Algorithm} 

Let $\dep$ be a finite set of guarded dexrs over $\ins{S}$. Clearly, there are integers $n,m \geq 0$, with $n+m > 0$, and $\ell > 0$, such that $\dep$ consists only of $(n,m,\ell)$-dexrs. By the Bounded Linearization Lemma, we get that the following are equivalent:
\begin{itemize}
	\item There is a finite set $\dep'$ of linear dexrs over $\ins{S}$ with $\dep \equiv \dep'$.
	
	\item There is a finite set $\dep'$ of linear $(n,m,\ell')$-dexrs, where $\ell' = \ell \cdot |\ins{S}| \cdot (n+m+1)^{m \cdot \ar{\ins{S}}}$, over $\ins{S}$ with $\dep \equiv \dep'$.
\end{itemize}
This means that, even though there are infinitely many finite sets of linear dexrs over $\ins{S}$, it suffices to search only for linear dexrs over $\ins{S}$ that mention at most $n$ universally quantified variables, at most $m$ existentially quantified variables, and at most $\ell'$ head-disjuncts, which are finitely many, to find a set $\dep'$ that is equivalent to $\dep$.
This leads to the simple algorithm depicted in Algorithm~\ref{alg:guarded-to-linear}. It first collects in $\dep'$ all the linear $(n,m,\ell')$-dexrs over $\ins{S}$ that are entailed by the input set $\dep$ of guadred dexrs, and then checks whether $\dep'$ is non-empty and entails $\dep$; the latter is actually done by checking whether $\dep' \models \sigma$, for each $\sigma \in \dep$.
We proceed to show that $\mathsf{Rewrite}$ runs in elementary time, which will imply Theorem~\ref{the:rewritability}.

We first observe that the number of linear $(n,m,\ell')$-dexrs over the schema $\ins{S}$ is bounded by the integer
\[
\underbrace{|\ins{S}| \cdot n^{\ar{\ins{S}}}}_{\geq \text{ \#~of linear bodies }} \cdot \underbrace{\sum_{i=1}^{\ell'} \binom{H}{i}}_{\geq \text{ \#~of heads}},
\]
where $H = 2^{|\ins{S}| \cdot (n+m)^{\ar{\ins{S}}}}$. Therefore, one can enumerate all the linear $(n,m,\ell')$-dexrs over $\ins{S}$ in elementary time (in fact, in triple-exponential time).
It remains to analyze the complexity of deciding whether a set of guarded dexrs entails a linear dexr (needed in the construction of $\dep'$), and the complexity of deciding whether a set of linear dexrs entails a guarded dexr (needed for checking whether $\dep' \models \dep$).
Given a set $\dep$ of dexrs and a single dexr $\sigma$ of the usual form $\phi(\bar x,\bar y) \ra  \bigvee_{i=1} ^{k} \exists \bar z_i \, \psi_i(\bar x_i,\bar z_i)$, it is not difficult to show that the following statements are equivalent:
\begin{enumerate}
	\item $\dep \models \sigma$.
	\item $\phi(\rho(\bar x),\rho(\bar y)) \wedge \dep \models \exists \bar z_i \, \psi_i(\rho(\bar x_i),\bar z_i)$, for some $i \in [k]$, where $\rho$ is a renaming function that replaces each variable $u$ in $\phi(\bar x,\bar y)$ with a new constant $\rho(u)$.
\end{enumerate}
The problem of deciding whether $\phi(\rho(\bar x),\rho(\bar y)) \wedge \dep \models \exists \bar z_i \, \psi_i(\rho(\bar x_i),\bar z_i)$, for some $i \in [k]$, if $\dep$ is a set of guarded (and thus, linear) dexrs is decidable in elementary time (in particular, in double-exponential time); the latter is immediately inherited from~\cite{BoMMP16} that analyzes the problem of conjunctive query answering under guarded dexrs.
Putting everything together, we get that the algorithm $\mathsf{Rewrite}$ runs in elementary time and Theorem~\ref{the:rewritability} follows.

At this point, let us stress that a more refined complexity analysis allows us to conclude that $\mathsf{G\text{-}to\text{-}L}$ is computable in triple-exponential time, whereas the best that we can hope for is double-exponential time since already the problem of deciding whether a set of guarded existential rules (i.e., dexrs with only one head-disjunct) can be rewritten as an equivalent set of linear existential rules is 2EXPTIME-hard~\cite{CoKP21}. More precisely, from the proof of the Bounded Linearization Lemma, we can conclude that whenever the collection $\coll{C}$ of $\ins{S}$-structures is finitely axiomatizable by linear dexrs where each head-disjunct mentions at most $p>0$ atoms, then it is finitely axiomatizable by linear $(n,m,\ell')$-dexrs with the same bound on the number of atoms in each head-disjunct. Therefore, the integer $H$ in the above analysis can be set to
\[
\sum_{i = 1}^{p} \binom{|\ins{S}| \cdot (n+m)^{\ar{\ins{S}}}}{i}\ \leq\ \sum_{i = 1}^{p} \left(|\ins{S}| \cdot (n+m)^{\ar{\ins{S}}}\right)^i,
\]
which implies that we need to consider double-exponentially many linear dexrs.
This in turn allows us to argue, by using results on the complexity of the satisfiability problem for the guarded fragment of first-order logic~\cite{Grad99}, that our rewriting algorithm runs in triple-exponential time.

\medskip
\noindent  \paragraph{Remark.} As said above, it remains open whether a stronger version of the Bounded Linearization Lemma that provides a polynomial bound on the number of head-disjuncts can be established. Let us remark that having such a stronger version of the Bounded Linerization Lemma in place, we can show that our problem $\mathsf{G\text{-}to\text{-}L}$ is computable in double-exponential time, which is the best that we can hope for.
\section{Future Work}\label{sec:conclusions}

We would like to perform a similar analysis that goes beyond finite axiomatizability by dexrs. In particular, we are planning to consider finite axiomatizability by dexrs, disjunctive equality rules, and denial constraints.
Moreover, it is interesting to prove a stronger version of the Bounded Linearization Lemma (see the last remark of Section~\ref{sec:rewritability}), which will allow us to establish that $\mathsf{G\text{-}to\text{-}L}$ is computable in double-exponential time, which is worst-case optimal.

\section*{Acknowledgements}

This work was funded by the European Union - Next Generation EU under the MUR PRIN-PNRR grant P2022KHTX7 ``DISTORT''.

\bibliographystyle{kr}
\bibliography{references}

\newpage
\appendix
\section{Proofs from Section~\ref{sec:properties}}

\subsection*{Proof of Lemma~\ref{lem:dexrs-critical}}

We need to show that, for every $\kappa>0$, $\coll{C}$ contains a $\kappa$-critical structure. 
Since $\coll{C}$ is finitely axiomatizable by dexrs, there is a finite set $\dep$ of dexrs such that $I \in \coll{C}$ iff $I \models \dep$. 
Fix an arbitrary integer $\kappa>0$ and a $\kappa$-critical $\ins{S}$-structure $I$. It suffices to show that $I \models \dep$, which in turn implies that $I \in \coll{C}$, and thus, $\coll{C}$ is critical.
Consider a dexr $\sigma \in \dep$ of the form $\phi(\bar x,\bar y) \ra \bigvee_{i=1}^{k} \exists \bar z_i\, \psi_i(\bar x_i,\bar z_i)$. 
Assume first that $\phi(\bar x,\bar y)$ is empty. It is easy to see that the function $h : \bar z_i \ra \adom{I}$ such that $h(z) = c$ for each $z \in \bar z_i$, for some arbitrary $i \in [k]$ and $c  \in \adom{I}$, is such that $h(\psi_i(\bar z_i)) \subseteq \facts{I}$, and thus, $I \models \sigma$.
Assume now that $\phi(\bar x,\bar y)$ is non-empty. It is clear that there exists a function $h : \bar x \cup \bar y \ra \adom{I}$ such that $h(\phi(\bar x,\bar y)) \subseteq \facts{I}$. Let $c \in \adom{I}$ and consider the extension $h'$ of $h$ such that $h(z) = c$ for each $z \in \bar z_i$ for some arbitrary $i \in [k]$. Since $I$ is $\kappa$-critical, it is straightforward to verify that $h'(\psi_i(\bar x_i,\bar z_i))\subseteq \facts{I}$, which implies that $I \models \sigma$, and the claim follows.

\subsection*{Proof of Lemma~\ref{lem:dexrs-closed-under-rpds}}

We first need to introduce the chase procedure, a very useful algorithmic tool when reasoning with rule-like sentences such a dexrs, and recall its main properties. Once the chase procedure for dexrs is in place, we will then proceed with the proof of Lemma~\ref{lem:dexrs-closed-under-rpds}.

\medskip

\noindent
\paragraph{The Chase Procedure.} 
We first define the notion of chase application, which is the building block of the chase procedure, and then introduce the procedure itself.

A trigger for a set $\dep$ of dexrs over a schema $\ins{S}$ on an $\ins{S}$-structure $I$ is a pair $(\sigma,h)$, where $\sigma$ is a dexr from $\dep$ of the form $\phi(\bar x,\bar y) \ra \bigvee_{i=1}^{k} \exists \bar z_i\, \psi_i(\bar x_i,\bar z_i)$, and $h = \epsilon$ if $\phi(\bar x,\bar y)$ is empty; otherwise, if $\phi(\bar x,\bar y)$ is non-empty, then $h$ is a function from $\bar x \cup \bar y$ to $\adom{I}$ such that $h(\phi(\bar x,\bar y)) \subseteq \facts{I}$. Such a trigger $(\sigma,h)$ is {\em active} if the following holds: if $h = \epsilon$, then there is no integer $i \in [n]$ and a function $h' : \bar z_i \ra \adom{I}$ such that $h(\psi_i(\bar z_i)) \subseteq \facts{I}$; otherwise, there is no $i \in [n]$ and an extension $h'$ of $h$ such that $h'(\psi_i(\bar x_i,\bar z_i)) \subseteq \facts{I}$. In simple words, $(\sigma,h)$ is active if it witnesses a violation of $\sigma$.
An {\em application} of a trigger $(\sigma,\epsilon)$ to $I$ returns the set of $\ins{S}$-structures $\{I_1,\ldots,I_k\}$ such that, for each $i \in [k]$, $\adom{I_i} = \adom{I} \cup \{h(z) \mid z \in \bar z_i\}$, and $\facts{I_i} = \facts{I} \cup h(\psi_i(\bar z_i))$, where $h$ is a function from $\bar z_i \ra \ins{C}$ such that (i) for each existentially quantified variable $z$ in $\bar z_i$, $h(\bar z_i)$ is a new constant of $\ins{C}$ not occurring in $I$, and (ii) for distinct variables $z$ and $w$ in $\bar z_i$, $h(z) \neq h(w)$. 
Analogously, an {\em application} of a trigger $(\sigma,h)$ to $I$, where $h \neq \epsilon$, returns the set of $\ins{S}$-structures $\{I_1,\ldots,I_k\}$ such that, for each $i \in [k]$, $\adom{I_i} = \adom{I} \cup \{h'(z) \mid z \in \bar z_i\}$, and $\facts{I_i} = \facts{I} \cup h'(\psi_i(\bar x_i,\bar z_i))$, where $h'$ extends $h$ in such a way that (i) for each existentially quantified variable $z$ in $\bar z_i$, $h'(\bar z_i)$ is a new constant of $\ins{C}$ not occurring in $I$, and (ii) for distinct variables $z$ and $w$ in $\bar z_i$, $h'(z) \neq h_i(w)$.
The application of a trigger $(\sigma,h)$ is denoted $I \langle\sigma,h\rangle \{I_1,\ldots,I_k\}$.

The main idea of the chase procedure is, starting from $I$, to exhaustively apply active triggers for $\dep$  until a fixpoint is reached, which will result to a (possibly infinite) set of models of $\dep$ that include $I$ with a key property known as universality. This is formalized via disjunctive chase trees.
A {\em disjunctive chase tree} of $I$ and $\dep$ is a (possibly infinite) rooted node-labeled tree $T = (V,E,\lambda)$, where $V$ is the set of nodes, $E$ is the set of edges, and $\lambda$ is the labeling function that assigns to each node of $V$ an $\ins{S}$-structure, such that:

\begin{enumerate}
	\item if $v \in V$ is the root, then $\lambda(v) = I$, and
	\item If $v \in V$ is a non-leaf node with $u_1,\ldots,u_k$, for $k>0$, being its children, then there exists an active trigger $(\sigma,h)$ for $\dep$ on $\lambda(v)$ such that $\lambda(v) \langle \sigma,h \rangle \{\lambda(u_1),\ldots,\lambda(u_k)\}$.
\end{enumerate}

\noindent A {\em maximal path} of $T$ is a path of $T$ that starts with the root of $T$, and either ends in a leaf of $T$ or is infinite. We write $\mathsf{paths}(T)$ for the set of all maximal paths of $T$. Furthermore, for a maximal path $\pi$ of $T$, we define its result, denoted $\mathsf{result}(\pi)$, as follows: if $\pi = v_1,\ldots,v_n$ is finite, then $\mathsf{result}(\pi)$ is the structure $\lambda(v_n)$; otherwise, if $\pi = v_1,v_2,\ldots$ is infinite, then $\mathsf{result}(\pi)$ is defined as the structure $\bigcup_{i>0} \lambda(v_i)$;\footnote{The union of structures is defined in the obvious way.} the latter is well-defined since, for each $i,j>0$ with $i < j$, $\lambda(v_i) \subseteq \lambda(v_j)$.
A maximal path $\pi = v_1,v_2,\ldots$ of $T$ that is infinite is called {\em fair} if, for each $i>0$ and active trigger $(\sigma,h)$ for $\dep$ on $\lambda(v_i)$, there is $j>i$ such that $(\sigma,h)$ is not active on $\lambda(v_j)$. The disjunctive chase tree $T$ is {\em fair} if every path of $\mathsf{paths}(T)$ is either finite, or infinite and fair.
The result of the disjunctive chase tree $T$ of $I$ and $\dep$, denoted $\chase_{T}(I,\dep)$, is the (possibly infinite) set of structures $\{\result{\pi} \mid \pi \in \paths{T}\}$.
The well-known key properties of the chase procedure are given below. 

\begin{proposition}\label{pro:disjucntive-chase}
	Consider an $\ins{S}$-structure $I$ and a set $\dep$ of dexrs over $\ins{S}$. For every disjunctive chase tree $T$ of $I$ and $\dep$, the following hold:
	\begin{enumerate}
		\item For every $\ins{S}$-structure $J \in \chase_T(I,\dep)$, $J \models \dep$.
		\item For every $\ins{S}$-structure $J$ with $h: I \ra J$ and $J \models \dep$, there exist an $\ins{S}$-structure $K \in \chase_T(I,\dep)$ and an extension $h'$ of $h$ such that $h' : K \ra J$.
	\end{enumerate}
\end{proposition}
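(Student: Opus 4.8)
The final statement to prove is Proposition~\ref{pro:disjucntive-chase}, establishing soundness (part 1) and universality (part 2) of the disjunctive chase tree for dexrs.

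\textbf{Plan for part (1): soundness.} The plan is to take an arbitrary $J \in \chase_T(I,\dep)$, so $J = \result{\pi}$ for some maximal path $\pi \in \paths{T}$, and show $J \models \sigma$ for each $\sigma \in \dep$. The argument splits on whether $\pi$ is finite or infinite. If $\pi = v_1,\ldots,v_n$ is finite, then $v_n$ is a leaf, which by condition (2) in the definition of a disjunctive chase tree means there is no active trigger for $\dep$ on $\lambda(v_n) = J$; by the definition of ``active'' this says exactly that every trigger $(\sigma,h)$ on $J$ is already satisfied, i.e. $J \models \dep$. If $\pi = v_1,v_2,\ldots$ is infinite, then since $T$ is fair (we use the hypothesis that the chase tree is fair; I should double-check whether the proposition quantifies over all chase trees or only fair ones — if over all, I need the fairness assumption to be part of the ambient setup or I restrict attention accordingly), the path is fair. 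Suppose for contradiction $J \not\models \sigma$ via some trigger $(\sigma,h)$. Since $J = \bigcup_{i>0}\lambda(v_i)$ and $\body{\sigma}$ is finite, $h(\body{\sigma}) \subseteq \facts{\lambda(v_i)}$ for some $i$, so $(\sigma,h)$ is a trigger on $\lambda(v_i)$; moreover it stays active on $\lambda(v_i)$ because no disjunct is satisfied even in the larger $J$. By fairness there is $j > i$ with $(\sigma,h)$ not active on $\lambda(v_j)$, meaning some disjunct $\psi_t$ is satisfied by an extension $h'$ of $h$ in $\lambda(v_j) \subseteq J$ — contradicting $J \not\models \sigma$.

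\textbf{Plan for part (2): universality.} Given $J \models \dep$ and a homomorphism $h: I \ra J$, I would construct a maximal path $\pi$ in $T$ together with a compatible family of homomorphisms, by induction down the tree, and take $K = \result{\pi}$. Start at the root $v_1$ with $h_1 = h : \lambda(v_1) = I \ra J$. At a non-leaf node $v_i$ with children $u_1,\ldots,u_k$ obtained via an active trigger $(\sigma,g)$ where $\lambda(v_i)\app{\sigma}{g}\{\lambda(u_1),\ldots,\lambda(u_k)\}$: the composite $h_i \circ g$ maps $\body{\sigma}$ into $\facts{J}$, and since $J \models \sigma$ there is a disjunct index $t$ and an extension of $h_i \circ g$ (on $J$) witnessing $\psi_t$; use this to define $h_{i+1} : \lambda(u_t) \ra J$ extending $h_i$ — the fresh nulls introduced along the $t$-th child get mapped to the witnessing elements of $J$, and the distinctness-of-fresh-nulls condition causes no clash since we are free to choose the images. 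Then descend to $v_{i+1} := u_t$. If the path is finite, ending at a leaf $v_n$, take $K = \lambda(v_n)$ and $h' = h_n$. If infinite, the $h_i$ form an increasing chain of homomorphisms whose union $h' = \bigcup_i h_i$ is a homomorphism from $K = \bigcup_i \lambda(v_i) \ra J$; each is an extension of the previous, hence of $h$.

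\textbf{Main obstacle.} The delicate point is the infinite-path case of part (1): I must be careful that fairness is actually available — the proposition as stated says ``for every disjunctive chase tree $T$,'' but soundness can fail for unfair trees (a trigger could be perpetually postponed), so either the intended reading restricts to fair trees or the result as used downstream only needs fair ones; I would state this assumption explicitly and reconcile it with the definitions given. The second subtlety, in part (2), is bookkeeping the extensions $h_i$ coherently across the choice of child and verifying the union in the infinite case is well-defined and still a homomorphism — this is routine given that $\lambda(v_i) \subseteq \lambda(v_{i+1})$ and each $h_{i+1}$ extends $h_i$, but it needs the observation that every fact of $K$ appears at some finite stage. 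Neither obstacle is deep; the proof is essentially a standard chase-universality argument adapted to the branching, disjunctive setting.
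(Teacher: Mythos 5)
The paper itself gives no proof of this proposition: it introduces the disjunctive chase and then invokes these two properties as ``well-known'', so there is nothing to compare against line by line. Your sketch is the standard argument (soundness of each path result, universality by descending the tree guided by the model $J$), and part (2) is correct as written: you follow a path of $T$, at each non-leaf node use $J \models \sigma$ to pick the child corresponding to a disjunct witnessed in $J$ and extend the homomorphism on the fresh constants, and in the infinite case take the union of the increasing chain; this needs neither fairness nor any condition on leaves.

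The one concrete inaccuracy is in the finite-path case of part (1). You claim that condition (2) of the definition of a disjunctive chase tree implies that a leaf carries no active trigger; it does not --- condition (2) only constrains non-leaf nodes, and the paper's fairness condition also says nothing about finite paths. Under the literal definitions, the single-node tree labeled $I$ is a (vacuously fair) disjunctive chase tree, $\chase_T(I,\dep)=\{I\}$, and part (1) fails whenever $I \not\models \dep$. So the caveat you correctly raise for infinite paths (fairness) is needed for finite paths as well: the intended reading, signalled by the paper's phrase ``exhaustively apply active triggers until a fixpoint is reached'', is that leaves have no active triggers, and this must be taken as part of the definition rather than derived from condition (2). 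With that assumption made explicit, your finite-case argument (no active trigger at the leaf means every trigger is satisfied, hence $J \models \dep$) and your fairness argument for infinite paths go through, and the proof is the expected one.
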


\medskip

\noindent
\paragraph{The Proof of Lemma~\ref{lem:dexrs-closed-under-rpds}.} Let $I,J \in \coll{C}$. We proceed to show that there is a repairable direct product of $I$ and $J$ in $\coll{C}$. Since $\coll{C}$ is finitely axiomatizable by dexrs, there is a finite set $\dep$ of dexrs such that $K \in \coll{C}$ iff $K \models \dep$. 
Since the projective homomorphism for $I \otimes J$ is a homomorphism from $I \otimes J$ to $I$, by Proposition~\ref{pro:disjucntive-chase}, there is $L \in \chase_T(I \otimes J,\dep)$, where $T$ is a disjunctive chase tree of $I \otimes J$ and $\dep$, and an extension $h$ of $\pi_{I \otimes J}$ that is a homomorphism from $I \otimes J$ to $L$. By construction, $I \otimes J \subseteq L$. Thus, $L$ is a repairable direct product of $I$ and $J$. By Proposition~\ref{pro:disjucntive-chase}, $L \models \dep$, which implies that $L \in \coll{C}$, and the claim follows.


\section{Proofs from Section~\ref{sec:dexrs}}

\subsection*{Proof of Claim~\ref{cla:equivalent-dd}}

Recall that $\Phi_{K,G}^{I}(\bar x)$ is obtained from $\Delta_{K,G}^{I}$ by renaming each $c \in \adom{K}$ to a new variable $x_c$; let $\rho$ be the renaming function, that is, $\rho(c) = x_c$ for each $c \in \adom{K}$, and $\rho(y) = y$ for each variable $y$ occurring in $\Delta_{K,G}^{I}$. Thus, $\Phi_{K,G}^{I}(\bar x)$ is of the form $\Psi_1 \wedge \Psi_2$, where $\Psi_1$ is
\[
\bigwedge_{\alpha \in \facts{K}} \rho(\alpha)
\]
and $\Psi_2$ is the formula
\[
\bigwedge_{\substack{c,d \in \adom{K}, \\ c \neq d}} \neg (\rho(c)=\rho(d))\ \wedge \bigwedge_{\gamma(\bar y) \in G} \neg (\exists \bar y \, \rho(\gamma(\bar y))).
\]
Note that $\Psi_1$ and $\Psi_2$ might be empty (i.e., they have no conjuncts). In particular, $\Psi_1$ is empty if $K$ is empty, whereas $\Psi_2$ is empty if $I$ is $1$-critical, which means that $|\adom{K}|=1$ and $I \models \exists \bar y \, \gamma(\bar y)$ for each $\gamma(\bar y) \in C_{K,m}$, and thus, $G = \emptyset$.
We consider all the cases where $\Psi_1$ and $\Psi_2$ are empty or not. 
\begin{enumerate}
	\item We first assume that both $\Psi_1$ and $\Psi_2$ are empty. In this case, $\neg \exists \bar x \,\Phi_{K,G}^{I}(\bar x)$ is the truth constant $\mathsf{false}$, i.e., a contradiction, and it trivially occurs in $\class{DD}_{n,m,\ell}^{\ins{S}}$.
		
	\item We now assume that $\Psi_1$ is non-empty and $\Psi_2$ is empty. In this case, $\neg \exists \bar x \,\Phi_{K,G}^{I}(\bar x)$ is equivalent to the dd $\delta = \forall \bar x (\Psi_1 \ra \mathsf{false})$. Since $|\adom{K}| \leq n$, we conclude that $\bar x$ consists of at most $n$ variables, and thus, $\delta \in \class{DD}_{n,m,\ell}^{\ins{S}}$.
		
	\item Consider now the case where $\Psi_1$ is either empty or not and $\Psi_2$ is non-empty. Then, $\neg \exists \bar x \,\Phi_{K,G}^{I}(\bar x)$ is equivalent to the dd $\delta = \forall \bar x (\Psi_1 \ra \Xi)$, where 
	\[
	\Xi\ =\ \bigvee_{\substack{c,d \in \adom{K}, \\ c \neq d}} \rho(c) = \rho(d) \vee \bigvee_{\gamma(\bar y) \in G} \exists \bar y \, \rho(\gamma(\bar y)).
	\]
	It remains to show that $\delta \in \mathsf{DD}_{n,m,\ell}^{\ins{S}}$. To this end, we need to show the following three statements: (i) each variable in $\Xi$ is either existentially quantified or appears in $\Psi_1$,  (ii) $\delta$ mentions at most $n$ universally quantified variables and at most $m$ existentially quantified variables, and (iii) at most $\ell$ disjuncts of $\Xi$ are a non-empty conjunction of atoms (i.e., not an equality).	
	Statement (i) holds since, by hypothesis, $\adom{K} = \aadom{K}$. Concerning statement (ii), $\delta$ has at most $n$ universally quantified variables since, by hypothesis, $|\aadom{K}| \leq n$ and $\delta$ has $m$ existentially quantified variables because so does $\neg \forall \bar x \Phi_{K,G}^I(\bar x)$ by the construction of $\Phi_{K,G}^I(\bar x)$. Finally, statement (iii) holds since $G$ consists of at most $\ell$ formulas.
\end{enumerate}	
This completes the proof of the claim.

\subsection*{Diagramatic Compatibility vs Locality}

We first recall the property of locality introduced in~\cite{CoKP21} and then proceed to establish (i) Proposition~\ref{pro:compatibility-implies-locality}, and (ii) the fact that $\coll{C}_\dep$ is not finitely axiomatizable by $(1,0,1)$-dexrs, where $\coll{C}_{\dep}$ is the set of models of $\dep$ consisting of the single dexr $R(x) \ra S(x) \vee T(x)$.

\medskip
\noindent \paragraph{Locality.} The property of locality relies on the notion of local embedding of a collection of structures in a structure.
Roughly, a collection $\coll{C}$ of structures over a schema $\ins{S}$ is locally embeddable in an $\ins{S}$-structure $I$ if, for every substructure $K$ of $I$ with a bounded number of active domain elements (i.e., domain elements that occur in $\facts{K}$), we can find a structure $J_K \in C$ such that every local neighbour of $K$ in $J_K$ (i.e., substructures of $J_K$ that contain $K$ and have a bounded number of additional active domain elements not in $\facts{K}$), can be embedded in $I$ while preserving $K$.
Now, a collection of structures $\coll{C}$ is local if, for every $\ins{S}$-structure $I$, $\coll{C}$ is locally embeddable in $I$ implies that $I$ belongs to $\coll{C}$.
We proceed to formalize the above description.

Consider an $\ins{S}$-structure $J$ and a finite set of constants $F \subseteq \aadom{J}$. For an integer $m \geq 0$, the {\em $m$-neighbourhood of $F$ in $J$} is the set of $\ins{S}$-structures
\[
\{J' \mid F \subseteq \aadom{J'},\ J' \preceq J~~\text{ and }~~|\aadom{J'}| \leq |F| + m\},
\]
i.e., all the substructures of $J$ such that their facts contain constants from $F$ and at most $m$ additional elements not occurring in $F$.
For an $\ins{S}$-structure $K \subseteq J$, the {\em $m$-neighbourhood of $K$ in $J$} is defined as the $m$-neighbourhood of $\aadom{K}$ in $J$, that is, all the substructures of $J$ that contain $K$ and their facts mention at most $m$ additional elements not occurring in the facts of $K$.

Let $\coll{C}$ be a collection of $\ins{S}$-structures and $I$ an $\ins{S}$-structure. For $n,m \geq 0$, we say that {\em $\coll{C}$ is $(n,m)$-locally embeddable in $I$} if, for every $K \preceq I$ with $|\aadom{K}| \leq n$, there is $J_K \in \coll{C}$ such that $K \subseteq J_K$, and for every $J'$ in the $m$-neighbourhood of $K$ in $J_K$, there is a function $h : \aadom{J'} \ra \aadom{I}$, which is the identity on $\aadom{K}$, such that $h(\facts{J'}) \subseteq \facts{I}$. The property of locality follows.

\begin{definition}\label{def:locality}
	A collection $\coll{C}$ of $\ins{S}$-structures is {\em $(n,m)$-local}, for $n,m \geq 0$, if, for every $\ins{S}$-structure $I$, the following holds: $\coll{C}$ is $(n,m)$-locally embeddable in $I$ implies $I \in \coll{C}$. \hfill\markfull
\end{definition}

\medskip
\noindent \paragraph{Proof of Proposition~\ref{pro:compatibility-implies-locality}.} Fix arbitrary integers $n,m \geq 0$, with $n+m > 0$, and $\ell > 0$. Assume that $\coll{C}$ is $(n,m)$-locally embeddable in a structure $I$. We need to show that $I \in \coll{C}$. To this end, it suffices to show that $\coll{C}$ is diagrammatically $(n,m,\ell)$-compatible with $I$, which in turn implies that $I \in \coll{C}$ since $\coll{C}$ is diagrammatically $(n,m,\ell)$-compatible.

Consider a structure $K \preceq I$ with $\adom{K} = \aadom{K}$ and $|\adom{K}| \leq n$, and an $(m,\ell)$-diagram $\Delta_{K,G}^{I}$ of $K$ relative to $I$, where $G \subseteq N_{K,m}^{I}$, of the usual form 
\[
\bigwedge_{\alpha \in \facts{K}} \alpha\ \wedge\ \bigwedge_{\substack{c,d \in \adom{K}, \\ c \neq d}} \neg (c=d)\ \wedge\ \\ \bigwedge_{\gamma(\bar y) \in G} \neg \left(\exists \bar y \, \gamma(\bar y)\right).
\]
We need to show that there is a structure $J \in \coll{C}$ such that $J \models \Delta_{K,G}^{I}$. 
Since $\coll{C}$ is $(n,m)$-locally embeddable in $I$, there is $J_K \in \coll{C}$ such that $K \subseteq J_K$, and for every $J'$ in the $m$-neighbourhood of $K$ in $J_K$, there is a function $h_{J'} : \aadom{J'} \ra \aadom{I}$, which is the identity on $\aadom{K}$, such that $h_{J'}(\facts{J'}) \subseteq \facts{I}$. We proceed to show that $J_K \models \Delta_{K,G}^{I}$, which will compete the proof.

By contradiction, assume that $J_K \not\models \Delta_{K,G}^{I}$. It is clear that $J_K \models \bigwedge_{\alpha \in \facts{K}} \alpha \wedge \bigwedge_{c,d \in \adom{K}, c \neq d} \neg (c=d)$. Therefore, there exists $\gamma(\bar y) \in G$ such that $J_K \models \exists \bar y \, \gamma(\bar y)$. This in turn implies that there exists a function $h : T_\gamma \ra \aadom{J_k}$, where $T_{\gamma(\bar y)}$ is the set of variables and constants occurring in $\gamma(\bar y)$, that is the identity on $\aadom{K}$ such that $h(\gamma(\bar y)) \subseteq \facts{J_K}$. Since $\gamma(\bar y)$ mentions at most $m$ variables, there exists $K'$ in the $m$-neighbourhood of $K$ in $J_K$ such that $h(\gamma(\bar y)) \subseteq \facts{K'}$. Let $\lambda = h_{K'} \circ h$. It is clear that $\lambda$ is a function from $T_{\gamma(\bar y)}$ to $\aadom{I}$, which is the identity on $\aadom{K}$, such that $\lambda(\gamma(\bar y)) \subseteq \facts{I}$. Hence, $I \models \exists y \, \gamma(\bar y)$. Therefore, $I \not\models \Delta_{K,G}^{I}$, which is a contradiction since, by Lemma~\ref{lem:sat-diagram}, $I \models \Delta_{K,G}^{I}$.

\medskip
\noindent \paragraph{$\coll{C}_\dep$ is Not Finitely Axiomatizable by $(1,0,1)$-dexrs.} We now proceed to show that $\coll{C}_\dep$ is not finitely axiomatizable by $(1,0,1)$-dexrs, where $\coll{C}_{\dep}$ is the set of models of $\dep$ consisting of the single dexr $R(x) \ra S(x) \vee T(x)$.
By definition, $\coll{C}_\dep$ is finitely axiomatizable by dexrs, and thus, by Lemmas~\ref{lem:dexrs-critical} and~\ref{lem:dexrs-closed-under-rpds}, $\coll{C}_\dep$ is critical and closed under repairable direct products. Consequently, by Theorem~\ref{the:dexr-refined-characterization}, to probe our claim it suffices to show that $\coll{C}_\dep$ is not diagrammatically $(1,0,1)$-compatible. To this end, we need to show that there exists a structure $I \not\in \coll{C}_\dep$ over the schema $\ins{S} = \{R,S,T\}$ such that $\coll{C}_\dep$ is diagrammatically $(1,0,1)$-compatible with $I$. We claim that such an $\ins{S}$-structure is the one with $\adom{I} = \{R(a)\}$ and $R^I = \{(a)\}$, $S^I = \emptyset$ and $T^I = \emptyset$, namely $\facts{I} = \{R(a)\}$. Clearly, $I \not\models \dep$, and thus, $I \not\in \coll{C}_\dep$. It remains to show that $\coll{C}_\dep$ is indeed diagrammatically $(1,0,1)$-compatible with $I$.
We observe that $I$ has only two substructures $K$ with $\adom{K} = \aadom{K}$ and $|\adom{K}| \leq 1$: (i) the empty substructure $K_\emptyset$, where $\adom{K_\emptyset} = \emptyset$ and $\facts{K_\emptyset} = \emptyset$, and (ii) $I$ itself.
We proceed to show that, for every $K \in \{K_\emptyset,I\}$, it holds that, for every $(0,1)$-diagram $\Delta_{K,G}^{I}$ of $K$ relative to $I$, where $G \subseteq N_{K,0}^{I}$, there is a structure $J \in \coll{C}_\dep$ such that $J \models \Delta_{K,G}^{I}$, which proves that $\coll{C}_\dep$ is diagrammatically $(1,0,1)$-compatible with $I$.

We first consider the simple case of the empty substructure $K_\emptyset$. It is clear that $N_{K_\emptyset,0}^{I} = \emptyset$, and thus, the only subset $G$ of $N_{K_\emptyset,0}^{I}$ with cardinality at most one is the empty set. Hence, the only $(0,1)$-diagram of $K$ relative to $I$ that we need to consider is $\Delta_{K_\emptyset,\emptyset}^{I}$ that is a tautology, and thus, it is trivially satisfied by every structure of $\coll{C}_\dep$. 

We now consider the case of $I$ itself. It is easy to see that 
\begin{multline*}
N_{I,0}^{I}\ =\ \{S(a),T(a),R(a) \wedge S(a), R(a) \wedge T(a),\\
S(a) \wedge T(a),R(a) \wedge S(a) \wedge T(a)\}.
\end{multline*}
Thus, we have the following seven subsets of $N_{I,0}^{I}$ of cardinality at most one:
\begin{eqnarray*}
	G_1 &=& \emptyset\\
	G_2 &=& \{S(a)\}\\
	G_3 &=& \{T(a)\}\\
	G_4 &=& \{R(a) \wedge S(a)\}\\
	G_5 &=& \{R(a) \wedge T(a)\}\\
	G_6 &=& \{S(a) \wedge T(a)\}\\
	G_7 &=& \{R(a) \wedge S(a) \wedge T(a)\}.
\end{eqnarray*}
It is also not difficult to verify that
\begin{eqnarray*}
	\Delta_{I,G_1}^{I} &=& R(a)\\
	\Delta_{I,G_2}^{I} &\equiv& \Delta_{I,G_4}^{I}\ \equiv\ R(a) \wedge \neg S(a)\\
	\Delta_{I,G_3}^{I} &\equiv& \Delta_{I,G_5}^{I}\ \equiv\ R(a) \wedge \neg T(a)\\
	\Delta_{I,G_6}^{I} &\equiv& \Delta_{I,G_7}^{I}\ \equiv\ (R(a) \wedge \neg S(a)) \vee (R(a) \wedge \neg T(a)).
\end{eqnarray*}
It is now straightforward to see that, for each $i \in [7]$, there is a structure $J_i \in \coll{C}_\dep$ such that $J_i \models \Delta_{I,G_i}^{I}$. In particular, for $i \in \{1,2,4,6,7\}$, $\Delta_{I,G_i}^{I}$ is satisfied by the structure $J' \in \coll{C}_\dep$ with $\adom{J'} = \{a\}$ and $\facts{J'} = \{R(a),T(a)\}$, and for $i \in [3,5]$, $\Delta_{I,G_i}^{I}$ is satisfied by the structure $J'' \in \coll{C}_\dep$ with $\adom{J'} = \{a\}$ and $\facts{J''} = \{R(a),S(a)\}$.


\section{Proofs from Section~\ref{sec:guarded-based-dexrs}}

\subsection*{Proof of Lemma~\ref{lem:linear-compatibility-to-compatibility}}

Consider an arbitrary $\ins{S}$-structure $I$ such that $\coll{C}$ is diagrammatically $(n,m,\ell)$-compatible with $I$. Thus, by definition, $\coll{C}$ is linear-diagrammatically $(n,m,\ell)$-compatible with $I$. Since $\coll{C}$ is linear-diagrammatically $(n,m,\ell)$-compatible, we get that $I \in \coll{C}$, which in turn implies that $\coll{C}$ is diagrammatically $(n,m,\ell)$-compatible, as needed.

\subsection*{Proof of Lemma~\ref{lem:linearization}}

\noindent
\paragraph{\underline{The Direction $(1) \Rightarrow (2)$}}

\smallskip

\noindent The proof is analogous to the proof of Lemma~\ref{lem:dexr-locality}, which we give here for completeness.
Let $\dep$ be a finite set of linear $(n,m,\ell')$-dexrs with $I \in \coll{C}$ iff $I \models \dep$. Consider a structure $I$ and assume that $\coll{C}$ is linear-diagrammatically $(n,m,\ell')$-compatible with $I$. We need to show that $I \in \coll{C}$, or, equivalently, $I \models \dep$.
Consider a linear dexr $\sigma \in \dep$ of the form $\phi(\bar x,\bar y) \ra \bigvee_{i=1}^{k} \exists \bar z_i\, \psi_i(\bar x_i,\bar z_i)$, where $\phi(\bar x,\bar y)$ is non-empty and consists of a single atom; the case where $\phi(\bar x,\bar y)$ is empty is treated similarly. 
Assume that there is a function $h : \bar x \cup \bar y \ra \adom{I}$ such that $h(\phi(\bar x, \bar y)) \subseteq \facts{I}$. We need to show that there exists $i \in [k]$ such that $\lambda(\psi_i(\bar x_i, \bar z_i)) \subseteq \facts{I}$ with $\lambda$ being an extension of $h$.
By contradiction, assume that such $i \in [k]$ does not exist. This means that, for every $i \in [k]$, $I \models \neg \exists \bar z_i \, \psi'(\bar z_i)$, where $\psi'(\bar z_i)$ is obtained from $\psi(\bar x_i,\bar z_i)$ by replacing each variable $x$ of $\bar x_i$ with $h(x)$.
Let $K$ be the structure with $\adom{K}$ being the set of constants occurring in $h(\phi(\bar x,\bar y))$ and $\facts{K} = \{h(\phi(\bar x,\bar y))\}$. Clearly, $K$ is a linear structure, since $\phi(\bar x,\bar y)$ consists of a single atom, where $\adom{K} = \aadom{K}$ and $|\aadom{K}| \leq n$; the latter holds since $\phi(\bar x,\bar y)$ mentions at most $n$ variables. 
Let $G = \{\psi'_i(\bar z_i)\}_{i \in [k]}$. Since, for each $i \in [k]$, $\psi_i(\bar x_i,\bar z_i)$ mentions at most $m$ existentially quantified variables and $k \leq \ell'$, it is easy to see that $G \subseteq N_{K,m}^{I}$ and $|G| \leq \ell'$. Therefore, $\Delta_{K,G}^{I}$, that is, the $G$-diagram of $K$ relative to $I$ is an $(m,\ell')$-diagram of $K$ relative to $I$. Since $\coll{C}$ is linear-diagrammatically $(n,m,\ell')$-compatible with $I$, we get that there exists $J \in \coll{C}$ that satisfies $\Delta_{K,G}^{I}$. The latter implies that $h(\phi(\bar x,\bar y)) \subseteq \facts{J}$, but there is no extension $\lambda$ of $h$ such that $\lambda(\psi(\bar x_i,\bar z_i)) \subseteq \facts{J}$ for some $i \in [k]$. Consequently, $J \not\models \sigma$, and thus, $J \not\models \dep$. But this contradicts the fact that $J \in \coll{C}$, which is equivalent to say that $J \models \dep$.

\medskip

\noindent
\paragraph{\underline{The Direction $(2) \Rightarrow (1)$}}

\smallskip

\noindent This direction is shown by following the same strategy as in the proof of the direction $(2) \Rightarrow (1)$ of Theorem~\ref{the:dexr-refined-characterization}.  
Consider a collection $\coll{C}$ of structures over a schema $\ins{S}$ that is critical, closed under repairable direct products, and linear-diagrammatically $(n,m,\ell')$-compatible for integers $n,m \geq 0$, with $n+m >0$, and $\ell' > 0$. We show that $\coll{C}$ is finitely axiomatizable by linear $(n,m,\ell')$-dexrs in three steps:
\begin{enumerate}
	\item We first define a finite set $\dep^\vee$ of dds from $\class{DD}_{n,m,\ell'}^{\ins{S}}$ with at most one atom in the left-hand side of the implication such that, for every $\ins{S}$-structure $I$, $I \in \coll{C}$ iff $I \models \dep^\vee$. This exploits the fact that $\coll{C}$ is $1$-critical (since it is critical) and linear-diagrammatically $(n,m,\ell')$-compatible.
	
	\item We then show that there is a set $\dep^{\exists,=} \subseteq \dep^\vee$ of linear $(n,m,\ell')$-dexrs and $n$-deqrs over $\ins{S}$ such that $\dep^\vee \equiv \dep^{\exists,=}$; in fact, $\dep^{\exists,=}$ is the set of dexrs and deqrs in $\dep^{\vee}$. This uses the fact that $\coll{C}$ is closed under repairable direct products.
	
	\item We finally argue that $\dep^{\exists,=}$ consists only of linear dexrs, which implies that $\coll{C}$ is finitely axiomatizable by linear $(n,m,\ell')$-dexrs. This exploits the fact that $\coll{C}$ is critical.
\end{enumerate}
Note that steps 2 and 3 proceed identically as their counterpart in the proof of Theorem~\ref{the:dexr-refined-characterization} in the main body of the paper. Let us then give a bit more details about step 1.

Let $\dep^{\vee}$ be the set of all dds from $\class{DD}_{n,m,\ell'}^{\ins{S}}$ with at most one atom in the left-hand side of the implication that are satisfied by every structure of $\coll{C}$.
Clearly, $\dep^{\vee}$ is finite (up to variable renaming) since $\dep^{\vee} \subseteq \class{DD}_{n,m,\ell'}^{\ins{S}}$. We show that $\coll{C}$ is precisely the set of $\ins{S}$-structures that satisfy $\dep^{\vee}$.

\begin{lemma}\label{lem:linearization-characterization}
	For every $\ins{S}$-structure $I$, $I \in \coll{C}$ iff $I \models \dep^{\vee}$.
\end{lemma}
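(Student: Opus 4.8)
The plan is to follow, almost verbatim, the proof of Lemma~\ref{lem:tgd-egd-dc-characterization-lemma-1}, the only new ingredient being the bookkeeping that keeps the bodies of the dds in $\dep^{\vee}$ single atoms and exploits the linearity of the substructures tested by linear-diagrammatic compatibility. The $(\Rightarrow)$ direction is immediate: by construction every dd in $\dep^{\vee}$ is satisfied by every structure of $\coll{C}$, so $I \in \coll{C}$ entails $I \models \dep^{\vee}$.

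For the direction $(\Leftarrow)$, fix an $\ins{S}$-structure $I$ with $I \models \dep^{\vee}$. It suffices to show that $\coll{C}$ is linear-diagrammatically $(n,m,\ell')$-compatible with $I$, since then $I \in \coll{C}$ by Definition~\ref{def:linear-diagrammatic-compatibility}. So let $K \subseteq I$ be a linear structure with $\adom{K} = \aadom{K}$ and $|\adom{K}| \leq n$, and let $\Delta_{K,G}^{I}$ be an $(m,\ell')$-diagram of $K$ relative to $I$, with $G \subseteq N_{K,m}^{I}$; we must exhibit $J \in \coll{C}$ with $J \models \Delta_{K,G}^{I}$.

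The first step is a linear refinement of Claim~\ref{cla:equivalent-dd}: there is a dd $\delta \in \class{DD}_{n,m,\ell'}^{\ins{S}}$ \emph{whose body contains at most one atom} such that $\delta \equiv \neg \exists \bar x\,\Phi_{K,G}^{I}(\bar x)$. Its proof is the one of Claim~\ref{cla:equivalent-dd} verbatim: writing $\Phi_{K,G}^{I}(\bar x)$ as $\Psi_1 \wedge \Psi_2$ and distinguishing the cases according to whether $\Psi_1$ and $\Psi_2$ are empty, $\neg \exists \bar x\,\Phi_{K,G}^{I}(\bar x)$ is equivalent either to $\mathsf{false}$, or to a dd of the form $\forall \bar x(\Psi_1 \ra \mathsf{false})$, or to a dd of the form $\forall \bar x(\Psi_1 \ra \Xi)$; the bounds on the number of universally and existentially quantified variables and on the number of non-equality disjuncts are inherited exactly as there from $|\adom{K}| \leq n$, $\adom{K} = \aadom{K}$, and $|G| \leq \ell'$. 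The only additional observation is that, since $K$ is linear, $\facts{K}$ has at most one atom, hence the conjunction $\Psi_1$ of (renamed) facts of $K$ is a single atom (or empty); thus the body of $\delta$ is a single atom, as required. This verification — that the produced dd genuinely has a single-atom body across all corner cases (empty $K$, empty $\Psi_2$, $1$-critical $I$) — is the one delicate point, since membership in $\dep^{\vee}$ is defined precisely with respect to dds of this restricted syntactic shape; everything else is a routine transfer.

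Given such a $\delta$, I would conclude as in Lemma~\ref{lem:tgd-egd-dc-characterization-lemma-1}. First, $\delta \notin \dep^{\vee}$: by Lemma~\ref{lem:sat-diagram} we have $I \models \exists \bar x\,\Phi_{K,G}^{I}(\bar x)$, i.e.\ $I \not\models \delta$, so $\delta \in \dep^{\vee}$ would contradict $I \models \dep^{\vee}$. Since $\delta$ is a dd with a single-atom body not belonging to $\dep^{\vee}$, and $\dep^{\vee}$ consists of \emph{all} such dds satisfied by every structure of $\coll{C}$, there is $L \in \coll{C}$ with $L \not\models \delta$, i.e.\ $L \models \exists \bar x\,\Phi_{K,G}^{I}(\bar x)$. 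Unfolding the definition of $\Phi_{K,G}^{I}(\bar x)$ (obtained from $\Delta_{K,G}^{I}$ by renaming the constants of $\adom{K}$ to fresh variables), a witnessing assignment in $L$ yields an $\ins{S}$-structure $J \simeq L$ in which these witnesses are named by the constants of $\adom{K}$, so that $J \models \Delta_{K,G}^{I}$; closure of $\coll{C}$ under isomorphisms gives $J \in \coll{C}$, and the claim follows.
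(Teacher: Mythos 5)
Your proposal is correct and follows essentially the same route as the paper: the $(\Rightarrow)$ direction by construction, and for $(\Leftarrow)$ showing that $\coll{C}$ is linear-diagrammatically $(n,m,\ell')$-compatible with $I$ via a single-atom-body refinement of Claim~\ref{cla:equivalent-dd} (the paper's Claim~\ref{cla:equivalent-dd-linear}, whose key observation is exactly yours, namely $|\facts{K}| \leq 1$), followed by the same argument that $\delta \notin \dep^{\vee}$ yields some $L \in \coll{C}$ violating $\delta$ and hence, up to isomorphism, a structure $J \in \coll{C}$ satisfying $\Delta_{K,G}^{I}$.
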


\begin{proof}
	The $(\Rightarrow)$ direction holds by construction. We now show the $(\Leftarrow)$ direction. Consider an $\ins{S}$-structure $I$ with $I \models \dep^\vee$. We need to show that $\coll{C}$ is linear-diagrammatically $(n,m,\ell')$-compatible with $I$, which implies $I \in \coll{C}$ since $\coll{C}$ is linear-diagrammatically $(n,m,\ell')$-compatible.
	Consider an arbitrary linear structure $K \subseteq I$ with $\adom{K} = \aadom{K}$ and $|\adom{K}| \leq n$, and an arbitrary $(m,\ell')$-diagram $\Delta_{K,G}^{I}$ of $K$ relative to $I$, where $G \subseteq N_{K,m}^{I}$. We need to show that there exists $J \in \coll{C}$ that satisfies $\Delta_{K,G}^{I}$. To this end, we first establish the following auxiliary claim:
	
	\begin{claim}\label{cla:equivalent-dd-linear}
		There is $\delta \in \class{DD}_{n,m,\ell'}^{\ins{S}}$ that has at most one positive atom in the left-hand side of the implication such that $\delta \equiv \neg \exists \bar x \, \Phi_{K,G}^{I}(\bar x)$.
	\end{claim}
	
	\begin{proof}
		The proof follows the same argument as the proof of Claim~\ref{cla:equivalent-dd}. The key observation is that, since $|\facts{K}| \leq 1$, there exists at most one positive atom in $\Phi_{K,G}^{I}(\bar x)$.
	\end{proof}

	Let $\delta \in \class{DD}_{n,m,\ell'}^{\ins{S}}$ be the dd provided by Claim~\ref{cla:equivalent-dd-linear} such that $\delta \equiv \neg \exists \bar x \, \Phi_{K,G}^{I}(\bar x)$. 
	We claim that $\delta \not\in \dep^\vee$. By contradiction, assume that $\delta \in \dep^\vee$. This implies that $I \models \delta$, which cannot be the case since, by Lemma~\ref{lem:sat-diagram}, $I \models \exists \bar x \, \Phi_{K,G}^{I}(\bar x)$. The fact that $\delta \not\in \dep^\vee$ implies that there exists an $\ins{S}$-structure $L \in \coll{C}$ such that $L \not\models \delta$, which means that $L \models \exists \bar x \, \Phi_{K,G}^{I}(\bar x)$. Therefore, there is an $\ins{S}$-structure $J$ such that $J \simeq L$ and $J \models \Delta_{K,G}^{I}$. Since $\coll{C}$ is closed under isomorphisms, we can conclude that $J \in \coll{C}$, and the claim follows.
\end{proof}

\subsection*{Proof of Lemma~\ref{lem:guarded-compatibility-to-compatibility}}

The proof is along the lines of the proof for Lemma~\ref{lem:linear-compatibility-to-compatibility}.
Consider an arbitrary $\ins{S}$-structure $I$ such that $\coll{C}$ is diagrammatically $(n,m,\ell)$-compatible with $I$. Thus, by definition, $\coll{C}$ is guarded-diagrammatically $(n,m,\ell)$-compatible with $I$. Since $\coll{C}$ is guarded-diagrammatically $(n,m,\ell)$-compatible, we get that $I \in \coll{C}$, which in turn implies that $\coll{C}$ is diagrammatically $(n,m,\ell)$-compatible, as needed.

\subsection*{Proof of Lemma~\ref{lem:guardedization}}

As in the proof of the Linerization Lemma, the proof of the direction $(1) \Rightarrow (2)$ mimics the proof of Lemma~\ref{lem:dexr-locality}, whereas the proof of the direction $(2) \Rightarrow (1)$ mimics the proof of the direction $(2) \Rightarrow (1)$ of Theorem~\ref{the:dexr-refined-characterization}. We leave the details as an easy exercise to the interested reader.


\section{Proofs from Section~\ref{sec:rewritability}}

\subsection*{Proof of Lemma~\ref{lem:boudned-linearization}}

The direction $(2) \Rightarrow (1)$ immediately follows from the Linearization Lemma (direction $(2) \Rightarrow(1)$). We proceed to discuss the interesting direction $(1) \Rightarrow (2)$. By hypothesis, $\coll{C}$ is finitely axiomatizable by $(n,m,\ell)$-dexrs. Thus, there exists a set $\dep$ of $(n,m,\ell)$-dexrs such that, for every $\ins{S}$-structure $J$, $J \in \coll{C}$ iff $J \models \dep$. Consider now an arbitrary $\ins{S}$-structure $I$ such that $\coll{C}$ is linear-diagrammatically $(n,m,\ell')$-compatible with $I$. We need to show that $I \in \coll{C}$, or, equivalently, $I \models \dep$.

Consider an arbitrary dexr $\sigma$ of the usual form $\phi(\bar x,\bar y) \ra \bigvee_{i=1}^{k} \exists \bar z_i \, \psi(\bar x_i,\bar z_i)$ with a non-empty body (the case where $\phi(\bar x,\bar y)$ is empty is treated analogously), and assume that there is a function $h : \bar x \cup \bar y \ra \adom{I}$ such that $h(\phi(\bar x,\bar y)) \subseteq \facts{I}$. We proceed to show that there exists $i \in [k]$ such that $\lambda(\psi_i(\bar x_i,\bar z_i)) \subseteq \facts{I}$ with $\lambda$ being an extension of $h$, which in turn implies that $I \models \sigma$.

Assume that $h(\phi(\bar x,\bar y)) = \{R_1(\bar u_1),\ldots,R_p(\bar u_p)\}$. For each $i \in [p]$, let $K_i$ be the linear structure $(\adom{K_i},R_{i}^{K_i})$, where $\adom{K_i}$ consists of the constants in $\bar u_i$ and $R_{i}^{K_i} = \{\bar u_i\}$, i.e., $\facts{K_i} = \{R_i(\bar u_i)\}$. We proceed to define a suitable $(m,\ell')$-diagram of $K_i$ relative to $I$, for each $i \in [p]$.

A local extension of $h$ for $\psi_i(\bar x_i,\bar z_i)$, where $i \in [k]$, is a function $h' : \bar x_i \cup \bar z_i \ra \adom{I} \cup \bar z_i$ that agrees with $h$ on the variables of $\bar x_i$ and, for each $z \in \bar z_i$, $h'(z) \in h(\bar x) \cup h(\bar y) \cup \{z\}$.
For each $i \in [k]$ and local extension $h'$ of $h$ for $\psi_i(\bar x_i,\bar z_i)$, let $G_i$ be the Gaifman graph of $h'(\psi_i(\bar x_i,\bar z_i))$ restricted to the variables occurring in it, namely the undirected graph $(N_i,E_i)$, where $N_i$ is the set of variables occurring in $h'(\psi_i(\bar x_i,\bar z_i))$ and $(z,z') \in E_i$ iff $z,z'$ are variables that occur together in an atom of $h'(\psi_i(\bar x_i,\bar z_i))$. We further say that a set of atoms $A \subseteq h'(\psi_i(\bar x_i,\bar z_i))$ is connected if the set of variables occurring in $A$ induces a connected subgraph of $G_i$. Moreover, $A$ is maximally connected if (i) it is connected, and (ii) there is no $A' \subseteq h'(\psi_i(\bar x_i,\bar z_i))$ that is connected and $A \subset A'$. For a maximally connected set of atoms $A \subseteq h'(\psi_i(\bar x_i,\bar z_i))$, we denote by $\chi_A$ the sentence $\exists \bar z \bigwedge_{\alpha \in A} \alpha$, where $\bar z$ collects all the variables that occur in $A$. Finally, we define the set $\Gamma$ that collects all the sentences $\chi_A$, where, for $i \in [k]$ and $h'$ a local extension of $h$ for $\psi_i(\bar x_i,\bar z_i)$, the set $A \subseteq h'(\psi_i(\bar x_i,\bar z_i))$ is maximally connected. We can now show the following technical claim via an easy combinatorial argument:

\begin{claim}\label{cla:bound}
	It holds that $|\Gamma| \leq \ell'$.
\end{claim}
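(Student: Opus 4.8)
The plan is to establish the bound $|\Gamma| \le \ell'$, where $\ell' = \ell \cdot |\ins{S}| \cdot (n+m+1)^{m \cdot \ar{\ins{S}}}$, by a direct count. First I would split $\Gamma$ according to which head disjunct of $\sigma$ a sentence comes from: writing $\Gamma_i$ for the set of sentences $\chi_A$ obtained from the $i$-th disjunct $\psi_i(\bar x_i,\bar z_i)$, we have $\Gamma = \bigcup_{i=1}^{k}\Gamma_i$, and since $\sigma$ is an $(n,m,\ell)$-dexr we have $k \le \ell$. It therefore suffices to prove $|\Gamma_i| \le |\ins{S}| \cdot (n+m+1)^{m \cdot \ar{\ins{S}}}$ for every $i \in [k]$, and then sum over $i$.

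To bound $|\Gamma_i|$, fix $i$ and pin down the data that determines an element of $\Gamma_i$. Let $h'$ be a local extension of $h$ for $\psi_i$ and let $A \subseteq h'(\psi_i)$ be maximally connected. Since $A$ is a \emph{maximal} connected set of atoms, $A$ consists of \emph{all} atoms of $h'(\psi_i)$ whose variables lie in one connected component $Z$ of the Gaifman graph of $h'(\psi_i)$; equivalently, $A = \{R(h'(\bar w)) \mid R(\bar w) \in \psi_i \text{ and no } \bar z_i\text{-variable occurring in } \bar w \text{ is kept by } h' \text{ and outside } Z\}$. Consequently $\chi_A$ is fully determined by $\psi_i$, by the restriction $h|_{\bar x_i}$, and by the function $g : \bar z_i \to h(\bar x \cup \bar y) \cup \{\mathsf{in},\mathsf{out}\}$ that records, for each $z \in \bar z_i$, whether $z$ is kept and lies in $Z$ ($g(z) = \mathsf{in}$), kept and outside $Z$ ($g(z) = \mathsf{out}$), or discarded by $h'$ onto a constant of $h(\bar x \cup \bar y)$ (in which case $g(z)$ is that constant). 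As $\psi_i$ and $h|_{\bar x_i}$ are both fixed once $i$ is fixed, $\chi_A$ is thus a function of $i$ and $g$ alone.

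Now the count is immediate. The set $\bar z_i$ has size at most $m$ (as $\sigma$ is an $(n,m,\ell)$-dexr), the set $h(\bar x \cup \bar y)$ has size at most $n$ (for the same reason), and hence $g$ ranges over at most $(n+2)^{m} \le (n+m+1)^{m} \le (n+m+1)^{m \cdot \ar{\ins{S}}}$ functions. Therefore $|\Gamma_i| \le (n+m+1)^{m \cdot \ar{\ins{S}}} \le |\ins{S}| \cdot (n+m+1)^{m \cdot \ar{\ins{S}}}$, and summing over $i \le \ell$ yields $|\Gamma| \le \ell'$. The only point that requires care --- and the reason the claim is stated with this single-exponential bound rather than the naive double-exponential $2^{|\ins{S}| \cdot (n+m)^{\ar{\ins{S}}}}$ --- is that a head disjunct $\psi_i$ is not assumed to contain boundedly many atoms, so a connected component $A$ of $h'(\psi_i)$ may itself contain exponentially many atoms; the observation that rescues the argument is exactly that $A$ is a \emph{maximal} connected subset of a substitution instance of the \emph{fixed} disjunct $\psi_i$, so that $A$, and hence $\chi_A$, is rigidly determined by the exponentially many choices encoded in $g$ rather than by an arbitrary choice of a set of atoms. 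I expect the remaining work --- checking that $g$ indeed determines $\chi_A$ and handling the corner cases (e.g.\ empty $A$, and variable-free atoms) --- to be routine.
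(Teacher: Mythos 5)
Your proof is correct, and it follows the paper's overall plan (split $\Gamma$ by head-disjunct, use $k\leq\ell$, then bound the contribution of each disjunct), but the per-disjunct count is done by a genuinely different and in fact sharper argument. The paper bounds the contribution of $\psi_i$ by multiplying the number of maximally connected subsets for a fixed local extension (bounded crudely by the number of atoms in a disjunct, i.e.\ $|\ins{S}|\cdot(n+m)^{\ar{\ins{S}}}$) by the number of local extensions ($(n+1)^m$), arriving at $|\ins{S}|\cdot(n+m)^{\ar{\ins{S}}}\cdot(n+1)^m\leq|\ins{S}|\cdot(n+m+1)^{m\cdot\ar{\ins{S}}}$. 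You instead observe that, because a maximally connected set is exactly the set of all atoms whose (kept) variables lie in a single connected component of the Gaifman graph (together with the ground atoms), the pair consisting of the local extension and the chosen component is fully encoded by one function $g:\bar z_i\ra h(\bar x\cup\bar y)\cup\{\mathsf{in},\mathsf{out}\}$; this gives at most $(n+2)^m$ sentences per disjunct, showing that the atom-count factor $|\ins{S}|\cdot(n+m)^{\ar{\ins{S}}}$ in the paper's estimate is unnecessary and that $\ell\cdot(n+2)^m$ would already do. Your appeal to maximality is sound: maximality forces the variable set of $A$ to be a full connected component and forces every atom whose variables lie in it (including variable-free atoms) into $A$, so $\chi_A$ is indeed determined by $(i,g)$; the corner cases you flag (all atoms ground, $m=0$) are harmless since the map from actual pairs to functions $g$ only needs to be well defined, not surjective. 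The only caution is that your tighter bound is a strengthening of the statement, so if it were adopted the value of $\ell'$ used in Lemma~\ref{lem:boudned-linearization} and in the complexity analysis of Algorithm~\ref{alg:guarded-to-linear} could be correspondingly reduced; as a proof of the claim as stated, the final weakening to $\ell'=\ell\cdot|\ins{S}|\cdot(n+m+1)^{m\cdot\ar{\ins{S}}}$ is of course fine.
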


\begin{proof}
	Since $\dep$ consists of $(n,m,\ell)$-dexrs, we can conclude that each disjunct in the head of $\sigma$ mentions at most $|\ins{S}| \cdot (n+m)^{\ar{\ins{S}}}$ distinct atoms. Therefore, for each local extension $h'$ of $h$ for $\psi_i(\bar x_i,\bar z_i)$, where $i \in [k]$, we can have at most $|\ins{S}| \cdot (n+m)^{\ar{\ins{S}}}$ maximally connected subsets of $h'(\psi_i(\bar x_i,\bar z_i))$; this is the case where each atom of $h'(\psi_i(\bar x_i,\bar z_i))$ forms a singleton set that is maximally connected. Observe that each local extension $h'$ can assign to a variable $z \in \bar z_i$ at most $n+1$ distinct symbols, and there are at most $m$ distinct variables in $\bar z_i$; thus, we get $(n+1)^m$ different assignments. Consequently, for each $i \in [k]$, $\psi_i(\bar x_i,\bar z_i)$ can give rise to at most 
	\[
	|\ins{S}|\cdot (n+m)^{\ar{S}} \cdot (n+1)^m\ \leq\ |\ins{S}| \cdot (n+m+1)^{m \cdot \ar{\ins{S}}}
	\]
	distinct sentences in $\Gamma$. Since $\sigma$ has a most $\ell$ disjuncts in its head, we get that 
	\[
	|\Gamma|\ \leq\ \ell \cdot |\ins{S}| \cdot (n+m+1)^{m \cdot \ar{\ins{S}}},
	\]
	and the claim follows.
\end{proof}

Now, for each $i \in [p]$, let $\Gamma_i$ be the set of sentences $\gamma \in \Gamma$ that mention (apart from variables) only constants occurring in $R_i(\bar u_i)$ and $I \not\models \gamma$. We define the sentence
\[
\Delta_i\ =\ R_i(\bar u_i)\ \wedge\ \bigwedge_{c,d \in \bar u_i, c \neq d} \neg (c=d)\ \wedge\ \bigwedge_{\gamma \in \Gamma} \neg \gamma.
\]
By construction, each sentence $\gamma \in \Gamma_i$ mentions only constants from $\bar u_i$ and a most $m$ additional variables that are existentially quantified. Therefore, it is not difficult to verify that $\Delta_i$ is an $(m,\ell')$-diagram of $K_i$ relative to $I$. Since $\coll{C}$ is linear-diagrammatically $(n,m,\ell')$-compatible with $I$, for each $i \in [p]$, there exists a structure $J_i \in \coll{C}$ such that $J_i \models \Delta_i$. We assume that, for each $i,j \in [p]$, $\adom{J_i} \cap \adom{J_j} = \bar u_i \cap \bar u_j$, i.e., the active domains of the structures $J_i$ and $J_j$ share only constants occurring in both $\bar u_i$ and $\bar u_j$. This assumption does not affect the generality of the proof since $\coll{C}$ is closed under isomorphisms.

We now define the structure $J^\star$ as the union of the structures $J_1,\ldots,J_p$, i.e., $J^\star$ is such that \[
\adom{J^\star} = \bigcup_{i \in [p]} \adom{J_i} \,\,\,\, \text{and} \,\,\,\, \facts{J^\star} = \bigcup_{i \in [p]} \facts{J_i}.
\]
Since, by hypothesis, $\coll{C}$ is finitely axiomatizable by linear dexrs, it is not difficult to show that $\coll{C}$ is closed under unions, which implies $J^\star \in \coll{C}$. Clearly, $h(\phi(\bar x,\bar y)) \subseteq \facts{J^\star}$, and thus, there is an extension $\mu$ of $h$ such that $\mu(\psi_i(\bar x_i,\bar z_i)) \subseteq \facts{J^\star}$, for some $i \in [k]$.
With $\bar z = \bar z_{i}^{1},\ldots,\bar z_{i}^{m_i}$, where $m_i \leq m$, assume that $\mu(z_{i}^{1}),\ldots,\mu(z_{i}^{m'_{i}}) \in h(\bar x) \cup h(\bar y)$, for some $m'_{i} \leq m_i$, and $\mu(z_{i}^{m'_{i}+1}),\ldots,\mu(z_{i}^{m_{i}}) \not\in h(\bar x) \cup h(\bar y)$.
%
%
Then, the mapping $\mu'$ such that $\mu'(z_{i}^{j}) = \mu(z_{i}^{j})$
for each $j \in [m'_{i}]$, and $\mu'(z_{i}^{j}) = z_{i}^{j}$ for each
$j \in \{m'_{i}+1,\ldots,m_i\}$, is a local extension of $h$ for
$\psi_i(\bar x_i,\bar z_i)$.

Let $A_1, \ldots, A_q$ be the maximally connected subsets of $\mu'(\psi_i(\bar x_i,\bar z_i))$ and let $\eta_j$, for each $j \in [q]$, be the formula $\exists \bar z_j \bigwedge_{\alpha \in A_j} \alpha$, where
$\bar z_j$ collects all the variables occurring in $A_j$. We proceed to show the following technical claim.

\begin{claim}\label{cl:lin-sat-union}
	For each $j \in [q]$, there is $k \in [p]$ with $J_k \models \eta_j$.
\end{claim}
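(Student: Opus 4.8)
The plan is to transfer the satisfaction of $\eta_j$ from the union $J^\star$ down to a single one of the structures $J_1,\ldots,J_p$, exploiting the normalisation (legitimate since $\coll{C}$ is closed under isomorphisms) that the active domains of the $J_k$ overlap only on constants coming from the body of $\sigma$. First I would extend $\mu'$ to a map $\nu$ on the variables of $\mu'(\psi_i(\bar x_i,\bar z_i))$ by setting $\nu(z)=\mu(z)$ for each such variable $z$; these are exactly the $z\in\bar z_i$ with $\mu(z)\notin h(\bar x)\cup h(\bar y)$. Then $\nu(\mu'(\psi_i(\bar x_i,\bar z_i)))=\mu(\psi_i(\bar x_i,\bar z_i))\subseteq\facts{J^\star}=\bigcup_{k\in[p]}\facts{J_k}$, and in particular $\nu(A_j)\subseteq\facts{J^\star}$. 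It therefore suffices to find $k\in[p]$ with $\nu(A_j)\subseteq\facts{J_k}$ and $\nu(\bar z_j)\subseteq\adom{J_k}$, since then $\nu$ restricted to $\bar z_j$ witnesses $J_k\models\eta_j$.

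Next I would record two observations. (i) For every variable $z$ of $\mu'(\psi_i(\bar x_i,\bar z_i))$ we have $\nu(z)=\mu(z)\notin h(\bar x)\cup h(\bar y)$; since each $\bar u_k$ consists of $h$-images of body variables, $\bigcup_{k\in[p]}\bar u_k\subseteq h(\bar x)\cup h(\bar y)$, so $\nu(z)\notin\bar u_k$ for all $k$. (ii) By the assumption that $\adom{J_k}\cap\adom{J_{k'}}=\bar u_k\cap\bar u_{k'}$ for $k\neq k'$, any element of $\adom{J^\star}$ not belonging to $\bigcup_{k\in[p]}\bar u_k$ lies in $\adom{J_k}$ for exactly one $k$. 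Combining the two, any fact of $J^\star$ that contains such a ``new'' element is a fact of the unique $J_k$ singled out by that element, and all of its elements belong to that $\adom{J_k}$.

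The core of the argument is then a connectivity step. If $A_j$ contains no variable, then $\eta_j$ is a (single) ground atom which is a fact of $J^\star$, hence of some $J_k$, and we are done. Otherwise, fix a variable $z^\ast$ occurring in $A_j$; by (i)--(ii) there is a unique $k_0\in[p]$ with $\nu(z^\ast)\in\adom{J_{k_0}}$. Because the variables occurring in $A_j$ induce a connected subgraph of the Gaifman graph $G_i$, any two of them are linked by a path of variables in which consecutive variables co-occur in an atom of $\mu'(\psi_i(\bar x_i,\bar z_i))$, and maximality of $A_j$ forces every such edge-realising atom to lie in $A_j$ (adding it keeps the variable set connected). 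Applying $\nu$ to such an atom, it belongs to some $\facts{J_k}$ and contains the $\nu$-images of both endpoints, which are new elements, so both endpoints are pinned to the same unique index; propagating along paths, all variables of $A_j$ map under $\nu$ into $\adom{J_{k_0}}$. Finally, every atom $\alpha\in A_j$ contains a variable $v$ of $\bar z_j$, so $\nu(\alpha)\in\facts{J^\star}$ contains the new element $\nu(v)\in\adom{J_{k_0}}$, whence $\nu(\alpha)\in\facts{J_{k_0}}$. Thus $\nu(A_j)\subseteq\facts{J_{k_0}}$ and $\nu(\bar z_j)\subseteq\adom{J_{k_0}}$, so $J_{k_0}\models\eta_j$, as required.

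The main obstacle I anticipate is the bookkeeping in the connectivity step: one must argue carefully that the atoms realising the Gaifman-graph edges between variables of $A_j$ really sit inside $A_j$ (this is precisely where the maximality of $A_j$ is used), and one must dispose cleanly of the degenerate case in which some atom of $A_j$ mentions no variable after the local extension $\mu'$. The other point needing care is verifying that observation (ii) genuinely forces a new element into a single $J_k$; this relies entirely on the disjointness normalisation $\adom{J_i}\cap\adom{J_j}=\bar u_i\cap\bar u_j$ imposed just before the claim, and on the fact that new elements, being outside $\bigcup_k\bar u_k$, cannot be shared between distinct $J_k$.
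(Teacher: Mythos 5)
Your proof is correct and follows essentially the same route as the paper's: transfer $\mu(\eta_j)$ into $\facts{J^\star}$, note that the $\mu$-images of the surviving variables are ``new'' elements lying in a unique $\adom{J_k}$ by the disjointness normalisation, and use connectivity of the Gaifman graph of $A_j$ to pin all atoms of $\eta_j$ to that single $J_k$. Your write-up is in fact a bit more careful than the paper's (explicit propagation along paths and explicit treatment of the variable-free case), with the only cosmetic difference being the auxiliary map $\nu$ in place of working directly with $\mu$.
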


\begin{proof}
	By construction, $\mu(\eta_j) \subseteq \mu(\psi_i(\bar x_i,\bar z_i))$. Since
	$\mu(\psi_i(\bar x_i,\bar z_i)) \subseteq \facts{J^\star}$, we can conclude that $\mu(\eta_j) \subseteq \facts{J^\star}$ as well. To conclude the proof, we simply need to show that there exists $J_k$, where $k \in [p]$, such that $\mu(\eta_j) \subseteq \facts{J_k}$.
	If $\mu(\eta_j)$ is a singleton, then the claim follows trivially from the fact that $\facts{J^\star} = \bigcup_{k \in [p]} \facts{J_k}$.
	Assume now that $\mu(\eta_j)$ consists of more than one atoms. Since $\mu(z_{i}^{j}) \not\in h(\bar x) \cup h(\bar y)$, by construction, for each such $\mu(z_{i}^{j})$, there exists only one $k \in [p]$
	such that $\mu(z_{i}^{j}) \in \adom{J_k}$. Moreover, observe that $\eta_j$ defines a connected Gaifman graph over the variables $z_{i}^{j}$ with $j \in \{m'_{i}+1,\ldots,m_i\}$. Thus, every $\alpha \in \mu(\eta_j)$ shares at least one constant $c$ with
	another $\beta \in \mu(\eta_j)$ for which there exists a unique $k \in [p]$ such that $c \in \adom{J_k}$. We can conclude that $\alpha$ and $\beta$ belong to $\facts{J_k}$, and the claim follows.
\end{proof}

Consider now $\eta_j$ for $j \in [q]$, and let $J_k$ for $k \in [p]$ be such that $J_k \models \eta_j$. Recall that $J_k$ satisfies $\Delta_k$ and, by construction, $\adom{J_k}$ contains constants from $h(\bar x) \cup h(\bar y)$ iff they occur in $R_k(\bar u_k)$. We can conclude that $\eta_j$ only mentions constants
occurring in $R_k(\bar u_k)$.

Next, we proceed to prove that $I \models \eta_j$. By contradiction, assume that $I \not\models \eta_j$. Observe that $\eta_j$ is defined from a maximally connected subset of $\mu'(\psi_i(\bar x_i, \bar z_i))$ and, additionally, $\eta_j$ only mentions constants occurring in $R_k(\bar u_k)$. Then, by construction, there exists a conjunction $\neg \gamma$ of $\Delta_k$
such that $\gamma$ and $\eta_j$ are equivalent (up to variable renaming). Thus, we get that $J_k \models \neg \eta_j$, which contradicts the fact that $J_k \models \eta_j$.

Since $I \models \eta_j$, there exists a function
$\lambda_j : \bar z_j \to \adom{I}$ such that
$\lambda_j(\eta_j) \subseteq \facts{I}$. Additionally, observe that, by construction, every pair $\eta_j$ and $\eta_k$, with $j,k \in [q]$ and $j \neq k$, share no variables. Thus, we can define the mapping $\lambda'$ such that $\lambda'(z) = \lambda_j(z)$, if $z$ occurs in $\eta_j$. Finally, we define $\lambda = \lambda' \circ \mu'$. To conclude the proof, we observe that $\lambda$ is an extension of $h$ since $\mu$ is an
extension of $h$ and $\mu'$ redefines only the value of some of the variables in $\bar z_i$. Additionally, since
$\mu'(\psi_i(\bar x_i, \bar z_i)) = \cup_{j=1}^q A_j$ and
$\lambda_j(\eta_j) \subseteq \facts{I}$, we get that $\lambda(\psi_i(\bar x_i, \bar z_i)) \subseteq \facts{I}$, and the claim follows.

\end{document}